\def\notes{0}
\documentclass[11pt]{article}
\usepackage[letterpaper, portrait, margin=1in]{geometry}
\usepackage[hypertexnames=false,colorlinks=true,linkcolor=blue,citecolor=ForestGreen,pagebackref=true]{hyperref}
\usepackage{algorithm}
\usepackage[noend]{algpseudocode}
\usepackage{url}
\usepackage{thmtools,thm-restate}
\usepackage{amsmath,amssymb,amsthm}
\usepackage[noabbrev,capitalise,nameinlink]{cleveref}
\usepackage{mathtools}
\usepackage{xspace}
\usepackage{verbatim}
\usepackage{mathrsfs}
\usepackage[usenames,dvipsnames,svgnames,table]{xcolor}
\usepackage{pgf}
\usepackage[dvipsnames]{xcolor}
\usepackage{todo}
\usepackage{tabularx}
\usepackage{enumitem}
\usepackage{derivative}
\usepackage{bm}
\usepackage{multirow}
\usepackage{diagbox}
\usepackage{nicematrix}
\usepackage{parskip}
\usepackage{adjustbox}
\usepackage{tikz}
\usepackage{tikz-3dplot}
\usepackage{transparent}
\usepackage{subcaption}
\usepackage[most]{tcolorbox}

\usepackage{dsfont}

\newcommand*\ie{i.\kern.1em e., }
\newcommand*\eg{e.\kern.1em g., }
\newcommand*\cf{c.\kern.1em f.\ }
\newcommand*\almev{a.\kern.1em e.\ }

\theoremstyle{plain}
\newtheorem{theorem}{Theorem}[section]
\newtheorem{lemma}[theorem]{Lemma}

\newtheorem{claim}[theorem]{Claim}
\newtheorem{corollary}[theorem]{Corollary}
\newtheorem{observation}[theorem]{Observation}

\theoremstyle{definition}
\newtheorem{definition}[theorem]{Definition}

\crefname{claim}{Claim}{Claims}
\crefname{fact}{Fact}{Facts}
\crefname{observation}{Observation}{Observations}

\tcbset{colframe=white,boxrule=0pt,frame hidden,parbox=false}
\tcbset{
    theoremstyle/.style={ colback=blue!7, },
    definitionstyle/.style={ colback=orange!10, },
    notestyle/.style={ colback=gray!8, }
}

\tcolorboxenvironment{theorem}{theoremstyle}
\tcolorboxenvironment{lemma}{theoremstyle}
\tcolorboxenvironment{fact}{theoremstyle}
\tcolorboxenvironment{proposition}{theoremstyle}
\tcolorboxenvironment{corollary}{theoremstyle}

\tcolorboxenvironment{definition}{definitionstyle}
\tcolorboxenvironment{assumption}{definitionstyle}

\tcolorboxenvironment{conjecture}{notestyle}
\tcolorboxenvironment{observation}{notestyle}
\tcolorboxenvironment{remark}{notestyle}

\newcommand{\ignore}[1]{}

\newcommand{\dist}{\mathsf{dist}}
\newcommand{\eps}{\varepsilon}

\DeclareMathOperator{\Var}{Var}

\newcommand{\Ex}[1]{\bE \left[ #1 \right]}

\newcommand{\Pru}[2]{\underset{ #1 } {\mathrm{Pr}} \left[ #2 \right]}

\newcommand{\cA}{\ensuremath{\mathcal{A}}}

\newcommand{\cD}{\ensuremath{\mathcal{D}}}

\newcommand{\cH}{\ensuremath{\mathcal{H}}}

\newcommand{\cT}{\ensuremath{\mathcal{T}}}

\newcommand{\bE}{\ensuremath{\mathbb{E}}}

\newcommand{\bN}{\ensuremath{\mathbb{N}}}

\newcommand{\qb}{\ensuremath n^{2/5}}

\usepackage{complexity}
\newlang{\NB}{NB}
\newlang{\DB}{DB}
\newlang{\DBW}{DBW}
\newlang{\DBWR}{DBWR}
\newlang{\DBWP}{DBWP}

\newcommand{\term}[1]{\ensuremath{[\![#1]\!]}}

\newcommand{\HS}{\textsf{HiddenString}\xspace}
\newcommand{\HSD}{\ensuremath $\textsf{\HS}^{\diamond}$\xspace}
\newcommand{\rev}[1]{#1^{\textsf{rev}}}

\newcommand{\Dyes}{\ensuremath \mathcal{D}^{+}}
\newcommand{\Dno}{\ensuremath \mathcal{D}^{-}}

\newcommand{\Bad}{\ensuremath \mathsf{BAD}}

\newcommand{\Tr}[1]{\ensuremath \cT \left[ #1 \right]}
\newcommand{\Tru}[2]{\cT_{ #1 } \left[ #2 \right]}

\usetikzlibrary{positioning, arrows, arrows.meta}
\usetikzlibrary{arrows.meta, decorations.pathreplacing, calc, shapes.geometric}
\usetikzlibrary{patterns}
\usetikzlibrary{backgrounds}
\ifnum\notes=1
    
    \newcommand{\srnote}[1]{{\color{blue}\footnote{{\color{blue} {\bf SR:} #1}}}}
    \newcommand{\dipi}[1]{{\color{orange} #1}}
    \newcommand{\dpnote}[1]{{\color{orange}\footnote{{\color{orange} {\bf DP:} #1}}}}
\else 
    
    \newcommand{\dipi}[1]{{#1}}
    \newcommand{\srnote}[1]{}
    \newcommand{\dpnote}[1]{}
\fi

\usepackage{pifont}

\newtoggle{anonymous}
\togglefalse{anonymous}

\title{Query Complexity of Testing Structured Parenthesis Languages\footnote{The conference version of this paper will appear at SODA 2027.}}

\iftoggle{anonymous}{
    \author{}
}{
    \author{
    Tim Jackman\thanks{Department of Computer Science, Boston University. Email: \href{mailto:tjackman@bu.edu}{\nolinkurl{tjackman@bu.edu}}.} 
    \and 
    Diptaksho Palit\thanks{Department of Computer Science, Boston University. Email: \href{mailto:dpalit@bu.edu}{\nolinkurl{dpalit@bu.edu}}.} 
    \and
    Sofya Raskhodnikova\thanks{Department of Computer Science, Boston University. Email: \href{mailto:sofya@bu.edu}{\nolinkurl{sofya@bu.edu}}.} 
    } 
}

\date{\today}  
\begin{document}

\maketitle

\begin{abstract}
    We study the query complexity of testing membership in structured string
    languages, with a focus on Dyck languages and their natural generalizations.
    A tester receives query access to a word and must distinguish valid inputs from
    words that are far in Hamming distance, while inspecting only a sublinear number
    of positions.
    
    Our results sharpen the boundary between constant-query testability and polynomial query complexity.
    First, we prove an $\Omega(n^{2/5})$ lower bound for testing Dyck languages $D_m$ with any fixed number $m\ge 2$ of parenthesis types, improving the previous $\Omega(n^{1/5})$ lower bound of Fischer, Magniez, and Starikovskaya (SODA `18) and nearly matching their upper bound of $\Tilde{O}(n^{2/5+\delta})$ for all $\delta > 0$. Furthermore, we show that all nonadaptive algorithms for these problems require $\Omega(n^{1/2})$ queries. Our nonadaptive lower bound uses a Pólya-urn process in the construction of the
    hard distribution; to the best of our knowledge, this technique is new to
    property testing. 
    
    Second, we identify a broad class of weighted-parenthesis languages, which we
    call {\em excursion languages,} that remain constant-query testable. These languages
    encode bounded-step walks that stay nonnegative and return to zero. For every
    fixed excursion language, we give a nonadaptive tester with query complexity
    $O(1/\eps^2)$, and we prove this dependence on $\eps$ is
    optimal, even for adaptive algorithms. As a special case, we obtain the tight $\Theta(1/\varepsilon^2)$ query
    complexity of testing $D_1$, improving the previous
    $O(\log(1/\varepsilon)/\varepsilon^2)$ upper bound and giving the first matching
    two-sided-error lower bound.
    
    Third, we construct a simple hard language, Hidden String, that is generated by
    a deterministic linear grammar but nevertheless requires $\Omega(n^{2/5})$
    adaptive queries and $\Omega(n^{1/2})$ nonadaptive queries to test. This shows
    that polynomial query complexity appears even for highly restricted string
    languages.
\end{abstract}

\thispagestyle{empty}
\setcounter{page}{0}

\ifnum\notes=1
\newpage
{
\setcounter{tocdepth}{2}
\tableofcontents
}
\thispagestyle{empty}
\setcounter{page}{0}
\fi
\newpage
\setcounter{page}{1}

\section{Introduction}
\label{sec:intro}

We study a basic question in sublinear algorithms: when can membership in a context-free language (CFL) be tested by inspecting only a small number of positions of the input word? CFLs include some of the most canonical
structured string languages, such as Dyck languages of balanced parentheses,
and provide a natural setting for understanding the boundary between
constant-query and polynomial-query testability.
We investigate this question through the lens of property testing \cite{RubinfeldS96, GoldreichGR98}: given query access to a word $w$, a tester for a language $L$ must distinguish the case $w \in L$ from the case that $w$ is $\eps$-far from every word in $L$, while reading only a small number of symbols of $w$. One of the foundational results in property testing, due to Alon, Krivelevich, Newman, and Szegedy~\cite{AlonKNS00}, showed that every regular language is testable with a number of queries depending only on the distance parameter $\eps$, and independent of the input length. This result has since been refined in a line of subsequent work, with Bathie, Fijalkow, and Mascle~\cite{BathieFM25} recently giving a trichotomy for the possible query complexities of regular languages. 

CFLs, however, exhibit a much richer picture.
Already \cite{AlonKNS00} showed that the Dyck language $D_1$ of balanced parentheses of one type has a constant-query tester, whereas the closely related language $D_2$ of balanced parentheses with two types does not. Subsequent work explored testability of several subclasses of CFLs---including parenthesis languages \cite{ParnasRR03,FischerMS18}, counter
automaton languages \cite{LachishNS08,GoldhirshV13}, and extensions to other structured
subclasses in related models~\cite{FischerMR10,FrancoisMRS16}---revealing a surprising mixture of constant-query and polynomial-query phenomena. Despite this progress, the structural reasons behind this mixture remain poorly understood: it is unclear which low-complexity CFLs inherit the strong testability guarantees of regular languages and which require polynomially many queries.
Even for the Dyck languages, the most extensively studied nonregular CFLs in this line of work, the landscape of known query-complexity bounds had remained unchanged since the work of Fischer, Magniez, and Starikovskaya~\cite{FischerMS18}.

We sharpen this picture both technically and structurally. Technically, we give the first improvement since \cite{FischerMS18} to the lower bounds for testing Dyck languages with multiple parenthesis types: for every fixed $m \ge 2$, we improve the query lower bound for $D_m$, the language of balanced parentheses with $m$ types, from $\Omega(n^{1/5})$ to $\Omega(n^{2/5})$ (nearly matching their upper bound of  $\Tilde{O}(n^{2/5+\delta})$ for all $\delta > 0$) and show, in addition, that nonadaptive algorithms (i.e., algorithms that provide all queries in advance, before receiving any answers) require $\Omega(n^{1/2})$ queries. 
Our lower bounds build on the hidden-alignment constructions of
\cite{ParnasRR03,FischerMS18}, but introduce new techniques. For adaptive
testers, we use an amortized analysis of the tester's progress, avoiding the
round-by-round worst-case losses in previous analyses. For nonadaptive testers,
we use a Pólya-urn process in the construction of the hard distribution; to the best of our knowledge this is the first use of such a process in property testing.
Structurally, we expand the understanding of which low-complexity CFLs are easy or hard to test. On the positive side, we identify a broad class of weighted-parenthesis languages, which we call \textit{excursion languages}, that retain the constant-query testability of $D_1$; our bounds are optimal as a function of $\eps$ and, in particular, pin down the query complexity of $D_1$ as $\Theta(1/\eps^2)$. On the negative side, we show that polynomial query complexity already appears at the bottom of a hierarchy of deterministic linear languages, demonstrating that linearity of grammars alone does not explain constant-query testability.

\subsection{Our results}\label{sec:results}

Dyck languages are among the most canonical nonregular CFLs.
For $m\in\bN$, the {\em Dyck-$m$ language $D_m$} consists of all well-balanced strings of parentheses with $m$ types.
The query complexity of testing
Dyck languages has been studied since the early work on property testing.
In addition to giving a constant-query tester for $D_1$,
\cite{AlonKNS00} showed that $D_2$ is not constant-query testable. The
quantitative study of Dyck languages with multiple types of parentheses
was initiated by \cite{ParnasRR03}, who gave an adaptive $\widetilde{O}(n^{2/3}/\eps^3)$-query tester and an adaptive $\widetilde\Omega(n^{1/11})$-query lower bound for testing $D_m$ for all $m \geq 2$. These bounds were later improved in
\cite{FischerMS18}. In particular, \cite{FischerMS18} gave, for every
fixed $m\ge 2$ and every constant $\alpha>0$, an
$\widetilde O(n^{2/5+\alpha})$-query tester for $D_m$, and proved an
$\Omega(n^{1/5})$ lower bound. 
Our first result improves this lower bound to $\Omega(\qb)$.

\begin{theorem}[Dyck language lower bound for $m\geq 2$]\label{thm:dyck-lb}
    For every $m\ge 2$, every $\eps$-tester for $D_m$, for sufficiently
    small constant $\eps>0$, requires $\Omega(\qb)$ queries on inputs of
    length $n$.
\end{theorem}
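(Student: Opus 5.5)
We prove the bound by Yao's principle. We construct two distributions: $\Dyes$ supported on $D_m$, and $\Dno$ supported (with high probability) on words $\eps$-far from $D_m$. We then show that any deterministic adaptive algorithm making $q=o(\qb)$ queries sees nearly identical answer distributions under the two. It suffices to handle $m=2$. For $m>2$ we use the same construction over two of the types: a word over two types is in $D_m$ iff it is in $D_2$, and distance to $D_m$ is at least a constant fraction of distance to $D_2$. By a standard reduction we may further assume the input is a concatenation of a structured block pattern. Following the hidden-alignment template of \cite{ParnasRR03,FischerMS18}, the word consists of $k$ blocks. Each block contains a run of opening parentheses of length about $\ell$, with uniformly random types, and later a matching run of closing parentheses. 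The two runs are separated by a gap whose length carries a random shift $s$ drawn from a range of size $r$. In $\Dyes$ each closing run is the exact mirror of its opening run at the true alignment. In $\Dno$ the closing types are independent uniform bits, so any matching mismatches a constant fraction, giving $\eps$-farness with high probability for small constant $\eps$. We choose the parameters $\ell, r, k$ as powers of $n$ that balance at $n^{2/5}$.

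Next we identify what an algorithm can learn. Queries inside opening runs and inside closing runs return uniformly random bits under both distributions. The only distinguishing event is a \emph{collision}: a query in an opening run and a query in the corresponding closing run whose positions are aligned under the hidden shift, i.e.\ whose position sum matches the block's center. Conditioned on no collision ever occurring, the transcripts have identical distributions. So the statistical distance is bounded by the probability of a collision under $\Dno$. A query pair in the same block determines one candidate shift value. Hence a set of $a$ opening-side and $b$ closing-side queries in a block rules in at most $ab$ shifts out of $r$. Each new answer reveals nothing about $s$ until a collision happens, so the posterior on $s$ stays uniform over the shifts not yet excluded.

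The crux, and the expected main obstacle, is the adaptive accounting. The previous $n^{1/5}$ bound charged each query its worst-case probability of creating a collision given the current history, losing a factor through round-by-round union bounds. Instead, we use an amortized argument. For block $j$, let $a_j,b_j$ be the query counts on the two sides. Then the total number of distinct shifts ever tested in block $j$ is at most $\min(a_jb_j, r)$. Since $s$ is uniform and independent of non-colliding answers, the collision probability in block $j$ is at most $a_jb_j/r$. This holds no matter how adaptively the queries are ordered, via a martingale/deferred-decision argument over the shifts covered. Summing and using $a_jb_j\le q_j^2/4$ gives total probability $\lesssim \sum_j q_j^2/r \le q^2/r$ within a single block. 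Queries spread across blocks pay additional terms from the uncertainty in block boundaries, which are themselves randomized with spread $r$. Choosing $r\approx n^{4/5}$ while keeping blocks long enough for farness makes this $o(1)$ for $q=o(n^{2/5})$. Verifying that this parameter choice is consistent with the length budget $k(\ell+r)\le n$ and the farness requirement, and that block-locating queries cannot be exploited to shrink the effective shift range, is where I expect the delicate work.
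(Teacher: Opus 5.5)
Your proposal has a genuine gap, and it sits in the step you flag as delicate: the claim that non-colliding answers are independent of the shift $s$, which is the only thing behind the $a_jb_j/r$ bound. In your layout, all of a block's alignment uncertainty is one integer (the gap length), and the answers reveal it. A query does not just return a uniformly random type. It also tells the algorithm whether the position lies in the opening run, in the balanced filler, or in the closing run, and the region boundaries move with $s$. Take, for instance, a filler of matched pairs $0^L0^R$. Two adjacent queries then decide whether a position is already inside the closing run, so binary search locates the start of the closing run, hence $s$, with $O(\log r)$ queries. After that, querying the $i$th symbol from the end of the opening run and the $i$th symbol of the closing run distinguishes $\Dyes$ from $\Dno$. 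So the posterior on $s$ is far from uniform.

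A sanity check shows this cannot be repaired by tuning parameters. Nothing in your accounting forces $r\approx n^{4/5}$; that choice is reverse-engineered. With one block and $r=\Theta(n)$, the same argument would give an adaptive $\Omega(\sqrt n)$ bound, contradicting the $\widetilde O(n^{2/5+\alpha})$ tester of \cite{FischerMS18}. Your argument is essentially the nonadaptive pair-counting argument, and adaptivity is exactly what it fails to control.

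The missing idea is that the alignment uncertainty must be spread over many independent insertions interleaved with the opening symbols. These are the $\star$'s of \HS, which the reduction $\rho$ maps to $0^L0^R$. Then the true index of a hidden position is a sum of many random quantities. The real work is then handling two leaks: queries reveal star locations, and failed matches condition the distribution. The paper does this in three pieces.
\begin{enumerate}
\item It uses a filter in blocks of length $b=n^{3/5}$ with weights uniform on $\{0,\dots,b\}$. A position whose true index depends on $t$ unrevealed blocks then has point probability at most $C_0/(b\sqrt t)$ (\cref{lem:block-anticoncentration}), rather than $\Theta(1/\sqrt{tb})$ for a fair i.i.d.\ filter.
\item It maintains a frontier $\hmax$ below which everything is revealed, and reveals the star pattern of every queried block.
\item It bounds the frontier's movement in an amortized way. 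Free advances cost at most $2b$ per query. Each query's expected direct advance is $o(n^{2/5})\cdot\sqrt{n/b}=o(n^{3/5})$, and conditioning on the no-match event $I$ costs only a factor of $2$.
\end{enumerate}
The two constraints $qb=o(n)$ and $q^2\sqrt{n/b}=o(n)$ are where the exponent $2/5$ actually comes from. Markov's inequality then gives $\Pr[\hmax>n/3]=o(1)$, hence $\Pr[\Bad]=o(1)$. Finally, \cref{lem:dm-hs-reduction} transfers the bound from \HS to $D_m$.
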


\Cref{thm:dyck-lb} strengthens the known separation between $D_1$ and
Dyck languages with two or more types of parentheses: $D_1$ has
constant-query testers, whereas testing $D_m$ for every fixed $m\ge 2$
requires $\Omega(\qb)$ queries.
Another interesting contrast between $D_1$ and Dyck languages $D_m$ with $m\geq 2$ is that $D_1$ is testable nonadaptively, whereas the current algorithms for $m\geq 2$, both from \cite{ParnasRR03} and \cite{FischerMS18}, are adaptive. This raises the question of whether adaptivity is necessary to achieve better query complexity for languages with two or more types of parentheses. 
We answer this affirmatively.

\begin{theorem}[Nonadaptive Dyck language lower bound for $m\geq 2$]\label{thm:dyck-lb-nonadaptive}
For every $m\ge 2$, every nonadaptive $\eps$-tester for $D_m$, for sufficiently
small constant $\eps>0$, requires $\Omega(n^{1/2})$ queries on inputs of
length $n$.
\end{theorem}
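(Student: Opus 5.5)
We follow Yao's minimax principle. We construct two distributions over length-$n$ words: $\Dyes$ supported on $D_m$, and $\Dno$ supported (with high probability) on words $\eps$-far from $D_m$. We then show that for every fixed query set $Q$ with $|Q| = o(n^{1/2})$, the distributions of $w|_Q$ under $\Dyes$ and $\Dno$ are $o(1)$-close in total variation. It suffices to treat $m=2$: the same instances lie in $D_m$, and their distance from $D_m$ is no larger than from $D_2$ only up to a constant factor, since any other parenthesis type occurring in a candidate word can be recolored at no extra cost.

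The construction follows the hidden-alignment template of \cite{ParnasRR03,FischerMS18}, with one change: the hidden offset is randomized through a Pólya urn. The word consists of a sequence of blocks. Each block is a string of random-type opening parentheses followed by a matching string of closing parentheses. In $\Dyes$ the closing types mirror the opening types exactly. In $\Dno$ they are independent uniform types, so a constant fraction of matched pairs conflict, and an averaging argument over the possible matchings gives $\eps$-farness.

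The key point is which positions get matched. The depth profile (the walk of heights) is generated by a Pólya urn. At each step we draw ``open'' or ``close'' with probabilities proportional to the current counts, and then reinforce the drawn outcome. Under this process the partner of position $i$ is spread over a window of width $\Theta(\sqrt{n})$, and the walk's randomness is exchangeable, which makes it tractable to analyze. With the profile hidden, a nonadaptive query set $Q$ can only distinguish the two distributions if it contains a pair $(i,j)$ of positions that are matched to each other. Conditioned on this event not happening, every queried symbol is a uniform type in both distributions, and the joint views are identical.

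The steps are as follows.
\begin{enumerate}
\item Define the urn-driven profile and verify that it stays nonnegative and returns to zero, so that $\Dyes$ is supported on $D_2$.
\item Prove $\eps$-farness of $\Dno$ with high probability.
\item Bound $\Pr[\exists\, i,j\in Q \text{ matched}]$ by summing over pairs:
\[
\sum_{i<j\in Q}\Pr[i\text{ matched to } j].
\]
\end{enumerate}

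The main obstacle is step 3. We need a uniform anti-concentration bound $\Pr[i\text{ matched to } j]=O(1/\sqrt{n})$ for every fixed pair. Note that $q^2/\sqrt{n}$ is only $o(1)$ when $q=o(n^{1/4})$, so a naive union bound over pairs is too weak. The intended fix is to amortize over $Q$. For a fixed $i$, the events $\{i \text{ matched to } j\}$ for $j\in Q$ are disjoint, so we would instead bound the expected number of matched pairs in $Q$. To reach $n^{1/2}$, the per-pair matching probability must be $O(1/n)$ on average, not $O(1/\sqrt{n})$. This is where the urn enters: the de Finetti representation of a Pólya urn as a mixture of i.i.d. walks with a random bias $p$ means each position's partner is spread across a range of width $\Theta(n)$. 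We would first try to establish this bound, namely
\[
\max_{i,j}\Pr[i \text{ matched to } j]=O(1/n),
\]
via the Beta mixture and local limit estimates for biased walks. Given that bound, the union bound yields $O(q^2/n)=o(1)$ for $q=o(\sqrt{n})$, which completes the argument.
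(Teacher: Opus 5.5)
Your argument rests entirely on the estimate $\max_{i,j}\Pr[i \text{ matched to } j]=O(1/n)$, which you only say you ``would try to establish''. For the profile you describe, this estimate is false.

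\textbf{The per-step urn profile fails.} Suppose the opening/closing pattern is drawn step by step from a P\'olya urn, i.e., a Beta$(1,1)$ mixture of i.i.d.\ steps. Then $\Pr[\text{$i$ opens and $i+1$ closes}]=\mathbb{E}[p(1-p)]=1/6$, and in that case $i$ and $i+1$ are matched to each other.
\begin{itemize}
\item Your $\mathcal{D}^{-}$ resamples \emph{every} closing type. So the tester that reads $w[i],w[i+1]$ and rejects iff they are an opening and a closing parenthesis of different types never rejects $\mathcal{D}^{+}$, but rejects $\mathcal{D}^{-}$ with probability at least $1/12$. A constant number of disjoint such pairs therefore already distinguishes the two distributions.
\item Step~1 fails too. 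The urn walk goes below $0$ at the first step with probability $1/2$, and it ends at height $0$ only with probability $1/(n+1)$.
\item Conditioning on being a Dyck word does not rescue it. The urn gives every profile with $k$ opening steps the same probability $k!(n-k)!/(n+1)!$, so the conditioned profile is just a uniform Dyck path. That path has about $n/4$ peaks in expectation, and each peak is an adjacent matched pair.
\end{itemize}
The underlying issue is that in a Dyck word, the partner of a position is a local first-return time, concentrated near that position. It is not spread over $\Theta(\sqrt n)$ or $\Theta(n)$ positions, and the urn's anti-concentration of prefix \emph{counts} does not transfer to it. Your ``amortization'' step is also vacuous: the expected number of matched pairs inside $Q$ is exactly the sum of pair probabilities that the union bound already uses.

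\textbf{The missing idea.} The unavoidable local matchings must carry no information (or be rare), and the informative partners must be tied to a prefix count.

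The paper achieves this by proving the bound for \HS and transferring it through $\rho$.
\begin{itemize}
\item Each $\star$ becomes $0^L0^R$. These are exactly the local matched pairs, and they have the same fixed type in both distributions.
\item The urn only decides which side of each symmetric pair $\{i,2n+1-i\}$ holds the $\star$.
\item A hidden bit at $j\le n$ (the other half is symmetric) can match clear position $3n-k+1$ only if $X_1+\dots+X_j=k$. This event has probability exactly $1/(j+1)$.
\item Only the middle third of the clear string is resampled, so only $j>n/3$ matters. This gives probability below $3/n$ per queried pair, and a union bound over $o(n)$ pairs gives $o(1)$.
\end{itemize}

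Alternatively, your ``blocks'' picture can be repaired directly with $O(1)$ blocks of random length. For example, take $a$ openings, $a$ closings, $b$ openings, $b$ closings, with $a$ uniform on $\{0,\dots,n/2\}$ and $b=n/2-a$. Then every pair $(i,j)$ is matched for at most two values of $a$, so it is matched with probability below $4/n$. Farness of $\mathcal{D}^{-}$ then follows by counting Dyck words near the fixed shape.

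As written, however, neither your construction nor its key lemma is established, so the proposal does not prove the theorem.
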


The proof requires a different hard distribution than in previous lower bounds.
Existing constructions can be distinguished by nonadaptive algorithms making roughly $\widetilde O(n^{1/4})$ queries, so improving the exponent necessitates new hard instances.
Our construction provides such a distribution and establishes a polynomial separation between adaptive and nonadaptive testing for Dyck languages $D_m$.
\Cref{thm:dyck-lb-nonadaptive} also provides a new kind of separation between $D_1$ and
Dyck languages with two or more types of parentheses: $D_1$ has nonadaptive constant-query testers, whereas every nonadaptive tester for $D_m$ for $m\geq 2$ requires $\Omega(n^{1/2})$ queries.

\paragraph{Excursion languages.} \cref{thm:dyck-lb,thm:dyck-lb-nonadaptive} raise the question of which other natural extensions of $D_1$ remain constant-query testable by simple, nonadaptive algorithms. Our second direction answers this question for a different extension, in which parentheses are allowed to have bounded integer weights.
Instead of having only one opening symbol and one closing symbol, each symbol contributes a bounded integer number of parentheses to the word.
For fixed $\ell,r\in\mathbb N$, consider the alphabet
$
    \Sigma_{\ell,r}=\{-\ell,-\ell+1,\ldots,-1,0,1,\ldots,r\}.
$
A positive symbol $\sigma\in\Sigma_{\ell,r}$ is interpreted as $\sigma$ open parentheses, whereas a negative symbol $\sigma$ represents $|\sigma|$ closing parentheses; zeros are interpreted as other expressions that don't need to be balanced. The language $L_{\ell,r}$ is defined as the set of all strings of balanced parentheses over the alphabet $\Sigma_{\ell,r}$ (see \Cref{def:excursion-language}). 

We call $L_{\ell,r}$ an {\em excursion} language because words over $\Sigma_{\ell,r}$ can be viewed as walks on the integer line, where a symbol $\sigma$ changes the height by $\sigma$. The language $L_{\ell,r}$ consists
of the walks that start at height $0$, stay nonnegative, and return to
height $0$. Such walks, called excursions,  are classical objects of study in enumerative combinatorics. Special cases include Motzkin paths, with step size in
$\{-1,0,1\}$, and bounded \L{}ukasiewicz excursions with step size in
$\{-1,0,1,\ldots,r\}$.  
Excursion languages $L_{\ell,r}$ form a simple family of deterministic one-counter
languages: the counter stores the current height of the walk.

We show that all excursion languages (for fixed $\ell$ and $r$) are testable with $O(1/\eps^2)$ queries by nonadaptive algorithms and, moreover, that this bound is tight, even for adaptive algorithms.

\begin{theorem}[Testability of excursion languages]\label{thm:excursion}
Let $\eps\in(0,1)$. For all fixed $\ell,r\in\bN$, the query complexity of testing the language $L_{\ell,r}$ with distance parameter $\eps$ is $\Theta(1/\eps^2)$. This holds for both adaptive and nonadaptive algorithms.
\end{theorem}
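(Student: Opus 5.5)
The proof has two halves: a nonadaptive $O(1/\eps^2)$-query tester for $L_{\ell,r}$, and an $\Omega(1/\eps^2)$ lower bound that holds even for adaptive testers.

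\textbf{Structural characterization of distance.} For $w\in\Sigma_{\ell,r}^n$ let $h_i=\sum_{j\le i} w_j$ be the height after $i$ steps. A single substitution changes the final height $h_n$ by at most $\ell+r$, and can raise the running minimum $\min_i h_i$ by at most $\ell+r$. Define
\[
\mathrm{def}(w)=\max\bigl(-\min_i h_i,0\bigr)+\bigl|h_n-\min_i h_i\bigr| .
\]
I would prove that $\dist(w,L_{\ell,r})=\Theta(\mathrm{def}(w)/n)$, with constants depending on $\ell$ and $r$. I would also prove a local version: if $w$ is $\eps$-far, then for some interval $[a,b]$ the drift $h_b-h_a$ deviates from what an excursion permits by $\Omega(\eps n)$. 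Concretely, there is either a prefix whose drop $-h_i$ is $\Omega(\eps n)$, or a suffix whose net height $h_n-h_i$ is $\Omega(\eps n)$, or the total drift $|h_n|$ is $\Omega(\eps n)$. The upper bound on distance comes from a greedy repair. Going left to right, replace the first offending negative symbols by $0$ or by positive symbols to fix prefix-minimum violations. Then adjust roughly $\mathrm{def}/(\ell+r)$ symbols near the end, or anywhere, to fix the final height. One has to check that no new violation is created, which uses the ability to substitute symbols of any value in $[-\ell,r]$.

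\textbf{Tester.} The tester samples $O(1/\eps^2)$ uniform positions, nonadaptively. It uses them to estimate $h_i$ for all $i$ on a grid of $O(1/\eps)$ points simultaneously, to additive error $\eps n/10$. Each increment has variance $O(\ell+r)^2$ times the fraction of positions involved, so a single set of $O(1/\eps^2)$ samples gives all prefix sums with error $\eps n$. By a maximal inequality (Kolmogorov/Doob, or a DKW-type bound, since the estimator is an empirical process of bounded increments) this holds uniformly with constant probability. The tester accepts iff the estimated prefix heights never go below $-\eps n/5$ and the estimated final height is within $\eps n/5$ of $0$. Completeness follows because for $w\in L$ all true $h_i\ge 0$ and $h_n=0$. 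Soundness follows from the characterization. Getting uniformity without a $\log(1/\eps)$ loss is the point where the previous $O(\log(1/\eps)/\eps^2)$ bound is improved. I expect this to be the main obstacle. I would handle it by viewing $\hat h_t-h_t$ as a martingale in $t$ when samples are taken i.i.d. (or with Poissonization) and applying Doob's $L^2$ inequality, which gives uniform error $O(n/\sqrt q)$ with no union bound.

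\textbf{Lower bound.} By Yao's principle, I would take two distributions over $\{+1,-1\}^n$ (valid since $\ell,r\ge1$):
\begin{itemize}
\item[$\Dyes$:] a word whose first half is i.i.d. steps with mean $+2\eps$ and whose second half has mean $-2\eps$, conditioned (or tweaked with $O(\sqrt n)$ changes) to be an excursion;
\item[$\Dno$:] the same construction but with mean $0$ on both halves, so that the total drift is typically $\Theta(\sqrt n)$ and, after forcing $h_n=\Theta(\eps n)$ by a biased end segment, the word is $\Omega(\eps)$-far.
\end{itemize}
A cleaner choice is i.i.d. $\pm1$ with bias $+\eps$ versus bias $-\eps$ throughout, each followed by a deterministic suffix of length $\Theta(\eps n)$ that returns the walk. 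Since positions are independent, any adaptive tester sees $q$ independent biased coins at distinct positions, which is equivalent to nonadaptive sampling. Distinguishing coins of bias $\pm\eps$ requires $\Omega(1/\eps^2)$ samples by the standard KL/Hellinger bound. The remaining checks are that the yes instances are in $L$ (or $o(\eps)$-close after a negligible fix, which the tester can be made robust to) and that the no instances are $\eps$-far with high probability by the characterization, since their drift is negative by $\Theta(\eps n)$.
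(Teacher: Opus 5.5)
Your upper bound is essentially right, though it argues soundness slightly differently from the paper. Your $\mathrm{def}(w)$ is exactly the paper's $\Delta(w)=s(w)-2m(w)$. The paper proves $\dist(w,L_{\ell,r})\le\Delta(w)$ by a specific repair: raise the step that first reaches each new negative height, then lower the \emph{last} upward crossing of each height $1,\dots,H$. That choice is what guarantees no new violations appear. Your ``adjust roughly $\mathrm{def}/(\ell+r)$ symbols near the end'' is too optimistic: for $r>1$ the all-$1$ word needs about $n/(\ell+1)$ changes. Still, only $\dist=O(\Delta)$ is needed, and that holds. Controlling all prefix estimates at once via Kolmogorov's inequality under independent Bernoulli sampling works. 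The paper instead uses the maximal inequality only for completeness. For soundness it applies Bernstein at just two indices, $n$ and an argmin $j^*$ of the true walk, since $m(\tilde w)$ is automatically at most the sampled prefix sum at $j^*$.

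The genuine gap is in the lower bound. Yao's principle, in the form used here, needs $\mathcal{D}^{+}$ supported on $L_{\ell,r}$, and your ``cleaner'' yes-distribution essentially never is. An unconditioned i.i.d.\ walk with $\Pr[+1]=1/2+\eps$ started at height $0$ goes below $0$ with probability $1-O(\eps)$; already the first step is $-1$ with probability about $1/2$. Its final height also fluctuates by $\Theta(\sqrt n)$, so no deterministic suffix returns it exactly to $0$. Your remedy, that yes-instances are close to $L_{\ell,r}$ ``which the tester can be made robust to,'' is not available. The bound must hold for every $\eps$-tester, and a tester has no obligation on words outside $L_{\ell,r}$ that are not $\eps$-far; for example, it may reject whenever it sees a prefix certifying non-membership. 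Any padding must also be identical in both distributions, since a suffix tailored to ``return'' each walk is detected with one query.

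What is missing is the paper's construction. Take the body $x$ uniform over $\{-1,1\}^m$ \emph{conditioned on its exact sum} ($0$ versus $2\gamma m$). Surround it with $m$ up-steps before and $m$ down-steps after (scaled by $\kappa=\min\{\ell,r\}$ in the paper), so yes-instances lie in $L_{\ell,r}$ deterministically. This conditioning destroys the independence your argument relies on, and the same objection applies to your first variant, ``conditioned to be an excursion.'' So ``adaptive queries see i.i.d.\ coins'' must be replaced by exchangeability: the answers to $q$ distinct queries are draws without replacement from an urn. Add the Diaconis--Freedman bound, which puts these draws within $O(q/m)$ total variation of i.i.d.\ coins. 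Consequently $m$ must be taken large as a function of $1/\eps$ before invoking the $\Omega(1/\gamma^2)$ coin-bias lower bound.
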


This theorem extends the testability result of \cite{AlonKNS00} for the Dyck-1 language, $D_1$, to a new subclass of CFLs. Each excursion language $L_{\ell,r}$ is recognized by the following deterministic one-counter pushdown automaton which uses its counter to track the sum of the current prefix. On reading a symbol $a>0$, the automaton increments the counter by $a$. On reading $a=0$, it leaves the counter unchanged. On reading $a<0$, it
attempts to decrement the counter by $|a|$, rejecting if the counter reaches zero before it is done decrementing. The automaton accepts iff the input ends with counter value $0$.

Our positive result cannot be extended to all deterministic single-counter automata languages, as Lachish, Newman, and Shapira~\cite{LachishNS08}  showed that such automata can recognize languages requiring $\Omega(\operatorname{polylog} n)$ queries to test.
Goldhirsh and Viderman \cite{GoldhirshV13} also investigated the question of classifying which languages  recognized by deterministic counter automata admit constant-query testers. Their main result is that every language accepted by a {\em weak} counter automaton is testable with a constant number of queries, where weakness means that transitions available when the counter is zero are a subset of those available when the counter is positive. This result does not subsume ours:
weak automata cannot require the counter to be zero in order to accept; they can recognize prefixes of excursion languages, but not the excursion languages themselves.

\paragraph{The Dyck-1 language.} Our techniques for excursion languages also apply to the Dyck-1 language $D_1$, pinning down its query complexity to $\Theta(1/\eps^2)$. This gives a logarithmic improvement over the bound of $O(\log(1/\eps)/\eps^2)$ by Alon et al.~\cite{AlonKNS00} and shows that our algorithm for $D_1$ is optimal. Alon et al.\ proved an $\Omega(n)$ lower bound on the query complexity of every tester for $D_1$ that has one-sided error. However, no two-sided error lower bound for $D_1$ was previously known.

\paragraph{Languages generated by linear grammars.}
Our third direction shows that polynomial query complexity arises even
within a very restricted subclass of the {\em linear languages,} i.e., languages generated by linear grammars. Linear grammars only have production rules that contain at most one nonterminal on the right-hand side. 
Linear languages lie strictly between regular languages and CFLs. As explained in \cref{sec:hierarchy}, they can equivalently be recognized by nondeterministic biautomata, which read the input from both ends
\cite{HolzerJ14}. Imposing determinism and further restrictions yields a
hierarchy of subclasses of the linear languages \cite{JiraskovaK22a} (see \cref{fig:NB-hierarchy}).

We exhibit a language, which we call \HSD, in the bottom
class of this hierarchy, the Nasu--Honda deterministic linear languages
\cite{NasuH69}, and prove that it nevertheless requires polynomially many
queries to test. \HSD is the language over the alphabet $\{ 0^H, 1^H, \star, \diamond, 0, 1 \}$ generated by the linear grammar $S \to \star S \mid 0^HS0 \mid 1^HS1 \mid \diamond$.
Each word in \HSD has the form $u\diamond v$, where $v$ is a binary string and
$u$ is obtained from the reverse of $v$ by marking its symbols with a superscript $H$ and inserting $\star$'s in arbitrary positions. Thus, the non-$\star$ symbols of $u$ form the marked reverse of $v$, which is ``hidden'' in $u$. This language is inspired by {\sf TrueStringEqui\-valence} used by \cite{FischerMS18} to give a lower bound for languages $D_m$,
which we had to modify to get a Nasu--Honda deterministic linear language. 
Although the class of such languages is very restricted and incomparable with regular languages, it is powerful enough to contain a hard-to-test language.

\begin{theorem}[Hidden String lower bound]\label{thm:HSD-lb}
   For every sufficiently small constant $\eps > 0$, for inputs of length $n$, every adaptive $\eps$-tester for \HSD requires
   $\Omega(\qb)$ queries and every nonadaptive $\eps$-tester for \HSD requires $\Omega(n^{1/2})$ queries.
   Moreover, \HSD is a Nasu--Honda deterministic linear language.
\end{theorem}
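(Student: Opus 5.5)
The proof has three parts: the Nasu--Honda membership, the adaptive $\Omega(\qb)$ bound, and the nonadaptive $\Omega(n^{1/2})$ bound. The two lower bounds come from the same pair of hard distributions that we use for $D_m$ in \cref{thm:dyck-lb,thm:dyck-lb-nonadaptive}, transported to \HSD by a local encoding.

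\textbf{Membership.} A Nasu--Honda deterministic linear grammar requires that the right-hand sides of all productions for a nonterminal begin with distinct terminals. The grammar $S \to \star S \mid 0^H S 0 \mid 1^H S 1 \mid \diamond$ has a single nonterminal, and its productions begin with the distinct terminals $\star, 0^H, 1^H, \diamond$. So it is of the required form, and I will verify it directly against the definition recalled in \cref{sec:hierarchy}.

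\textbf{Hard instances.} I will build a yes-distribution $\Dyes$ and a no-distribution $\Dno$ over words $u\diamond v$ with $|u|=\Theta(n)$ and $|v|=\Theta(n)$. The word $v$ is a uniformly random binary string. The word $u$ is the marked reverse of $v$, with $\star$'s inserted at random positions. The insertion pattern is the hidden alignment: the position in $u$ of the $i$-th hidden symbol is $i$ plus the number of stars before it. This offset is a random walk in the adaptive construction and a Pólya-urn process in the nonadaptive one, exactly mirroring the $D_m$ constructions. In $\Dno$, $u$ has the same star pattern, but its hidden symbols are independent uniform bits instead of the reversed copy of $v$.

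\textbf{Far from the language.} Inserting or removing stars changes the alignment of many symbols, so I will argue as follows. A word $\eps$-close to \HSD determines, after deleting at most $\eps n$ symbols, an alignment between most of $u$ and $v$. For a random independent $u$, the probability that any such alignment agrees on a $1-O(\eps)$ fraction of symbols is $2^{-\Omega(n)}$ times the number of alignments. The number of near-monotone alignments is $2^{O(H(\eps)n)}$, so a union bound finishes the argument for small $\eps$.

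\textbf{Indistinguishability.} This is the main work. A query to $v$ reveals one uniform bit, and a query to $u$ reveals either a star or a bit. The tester can tell the two distributions apart only by querying a pair (position $p$ in $u$, position $q$ in $v$) such that $p$ actually holds the hidden symbol corresponding to $q$, which is a collision. Until such a collision occurs, the answers have the same distribution under $\Dyes$ and $\Dno$. This holds because the star pattern is identical in both, and all revealed bits are uniform and independent unless a matched pair has been queried.

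So I will reduce to bounding the collision probability. For nonadaptive testers, I will reuse the analysis behind \cref{thm:dyck-lb-nonadaptive}: for any fixed pair $(p,q)$, the Pólya-urn star process gives probability $O(1/n)$ that $p$ aligns with $q$. Since there are at most $q^2$ pairs, this yields $O(q^2/n)$, which is $o(1)$ for $q=o(n^{1/2})$. For adaptive testers, I will reuse the amortized analysis from \cref{thm:dyck-lb}. Queries to stars leak information about the offset, and the key step is that the conditional collision probability summed over all queries stays $o(1)$ when $q=o(\qb)$.

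The main obstacle is confirming that the encoding matches the $D_m$ construction closely enough that those lemmas apply verbatim. In particular, in the $D_m$ setting the left half's symbol types are matched by closing parentheses in reverse order, and I must check that this matching is the same as the reversal in \HSD. If it matches, the proof is a reduction; if not, I would redo the amortized potential argument with stars playing the role of the filler parentheses.
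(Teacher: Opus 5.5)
Your Nasu--Honda verification matches the paper. Your overall strategy also matches it: a hidden star alignment, a reduction to the probability of revealing a matching pair, then the P\'olya-urn and amortized analyses. The paper does not go through $D_m$, though. It proves the bounds for \HS directly in \cref{thm:HS-lb-adaptive,thm:HS-lb-nonadaptive} and gets \HSD by inserting $\diamond$ between the hidden and clear strings of those distributions. The encoding of \cref{lem:dm-hs-reduction} turns a $D_m$ tester into an \HS tester. It therefore transfers lower bounds from \HS to $D_m$, not from $D_m$ to \HSD. The ``main obstacle'' you name disappears once the analysis is run on \HS itself.

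The genuine gap is your no-distribution. You replace \emph{all} hidden bits by fresh ones, and then $O(1)$ queries suffice to distinguish. Take the paper's layout: a hidden string $u$ of length $2n$ in which exactly one of $u[i],u[2n+1-i]$ is $\star$, followed by $\rev{v}$. Query $u[1],u[2n],x[2n+1],x[3n]$.
If $u[1]\neq\star$, it has true index $1$ and must equal $H(x[3n])$.
Otherwise $u[2n]$ has true index $n$ and must equal $H(x[2n+1])$.
This check always passes under $\Dyes$ and fails with probability $1/2$ under your $\Dno$. Your one-sided star insertion has the same problem: the first hidden symbol sits at $u[1]$ with constant probability and must match the last clear bit. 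The step that breaks is your per-pair bound ``probability $O(1/n)$ that $p$ aligns with $q$.'' \cref{lem:polya-bound} only gives $\Pr[\sum_{i\le\alpha}X_i=\beta]\le 1/(\alpha+1)$. That is $O(1/n)$ only when the hidden position is $\Omega(n)$ away from both ends of the hidden string; near the ends the alignment is essentially determined.

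The missing idea is to resample only the middle third of the clear string, as in \cref{def:dyes-dno}. Then only matching pairs whose hidden index lies in $(n/3,2n-n/3]$ can separate the distributions (\cref{clm:identical-views}). For those pairs, $j>n/3$ (after symmetrizing) gives the $3/n$ bound of \cref{lem:bad-prob}. In the adaptive case, the amortized argument is precisely about keeping the frontier below $n/3$ (\cref{lem:bad1-prob}). That is sufficient only because collisions in the outer thirds are harmless. Farness must then be derived from the fresh middle third alone, as in \cref{lem:far}.

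Also make sure the adaptive star pattern is the block-weighted filter of \cref{def:filter}, not an i.i.d.\ random walk. The paper notes that i.i.d.\ star placements are distinguishable with $\widetilde O(n^{1/4})$ nonadaptive queries, so they cannot yield $\Omega(n^{2/5})$.
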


Thus, even the most restricted level of the linear-language hierarchy \cite{JiraskovaK22a}
already contains a language whose membership is not constant-query
testable.

\subsection{Our techniques}\label{sec:techniques}
Our main technical contributions are the lower bounds for \HSD (from which we derive the same lower bounds for the Dyck-$m$ languages with $m\geq 2$) and the algorithm for the excursion languages and its analysis (from which we derive a tight upper bound for the Dyck-1 language). 

\paragraph{The lower bounds for \HSD.}

Our lower bounds build on the framework introduced by \cite{ParnasRR03} and refined by  \cite{FischerMS18}, but both the adaptive and nonadaptive analyses require new ideas.
In the previous constructions, both strings (in our case, separated by $\diamond$) contain dummy symbols (in our case, stars), which hide the alignment between the corresponding $\{0,1\}$ symbols in the two parts of the string.
While we modify the language to make it simpler in the formal-language sense---our variant is tailored to lie at the bottom level of the linear-language hierarchy---our language retains the features that make it difficult for testing.

In the construction of the hard instances for both our lower bounds, we keep the idea of inserting stars in random positions to make checking the alignment of the left and right part of the string more difficult.
Our hard distributions are over input strings of length $3n+1$ that are concatenations of two parts separated by $\diamond$.
The hidden (left) part of our input has length $2n$ and is organized into symmetric pairs: for each $i\in[n]$, exactly one of the two positions
$i$ and $2n+1-i$ contains a $\star$.
For both our lower bounds, we choose the subsequence of $\{0,1\}$ symbols in the left (hidden) part of the string independently and uniformly at random.
For positive instances, we make the right (clear) part of the string equal to the reverse of $u$;
for negative instances, we resample the bits in the middle third of that string as illustrated in \cref{fig:filter-example}.
The difference in our two lower bounds is how we choose where to place $\star$ symbols in the hidden string.
In prior work \cite{ParnasRR03,FischerMS18}, the location of $\star$ for each pair is chosen independently and uniformly at random.

An algorithm has to catch a pair of corresponding symbols (i.e., in ``matching'' positions) in the middle third from both parts of the string to observe any difference between the two distributions.
We analyze this by measuring the ``progress'' made by the algorithm in terms of how ``far'' from the ends of the string such a matching pair is witnessed.

\paragraph{Adaptive lower bound.}

We call the locations of the stars in the first half of the hidden string the {\em filter}. 
In prior work, each location is a $\star$ with probability 1/2, independently of other locations.
One of our contributions is to introduce a new filtering mechanism that creates additional uncertainty for the algorithm.
In order to increase uncertainty, we first divide the input into $n^{2/5}$ blocks of length $n^{3/5}$.
Within each block $B_i$, sample $p_i$ uniformly from $[0,1]$ and place a $\star$ in each $j \in [B_i]$ independently with probability $p$ (instead of 1/2).
Our construction uses a natural discrete analogue of this idea: for each block, we first sample the number of starred positions uniformly from $\{0\}\cup [n^{3/5}]$ and then choose a uniformly random subset of that size.

Previous analyses controlled the progress of the tester round by round, bounding the amount of information that could be gained from each individual query.
Our analysis instead follows the tester's total expected progress throughout its execution.
This amortized perspective avoids repeatedly paying for worst-case behavior at every step.

As detailed in \cite{ParnasRR03}, one of the obstacles in analyzing this type of construction is that, even when the algorithm queries location pairs that do not match, it gains extra information that should be conditioned on.
More specifically, for queries that are far from the ``furthest'' matching pair witnessed, such a mismatch occurs with high probability and can thus be safely ignored, but for queries that are close to the ``furthest'' matching pair witnessed, a mismatch might not occur with high probability and subsequently needs careful conditioning.
This is a subtle issue (and, in fact, this conditioning is not done correctly in the conference version of \cite{FischerMS18}, with a fix introduced recently).
Previous work addressed this difficulty by revealing matched positions for all queries that are close to the ``furthest'' matching pair witnessed.
This is too costly for us, so we introduce a new reveal condition: we reveal only collections of nearby queries whose ``total advance'' is proportional to their number.
Together, the new amortized analysis and reveal condition allow us to get the higher exponent in our lower bound.

\paragraph{Nonadaptive lower bound.}

For the nonadaptive algorithm, we do not divide the input into blocks, and instead use the discrete analogue of the weighted filter over the entire input instead.
The distribution of the filter has an equivalent Pólya-urn description which allows us to directly bound the probability that the algorithm found any matching pairs in the middle third.
Because we do not have to have the same issues of conditioning as in the adaptive case, this yields an improved exponent.

\begin{figure}
    \centering
    \begin{tikzpicture}[
    cell/.style={draw, minimum width=0.75cm, minimum height=0.75cm},
    arr/.style={{Triangle[length=1mm,width=2mm]}-{Triangle[length=1mm,width=2mm]}, thick},
    arr2/.style={-{Triangle[length=1mm,width=2mm]}, thick},
    font=\normalsize
]
\def\n{10}
\foreach \i in {1,...,\n} {
    \node[cell] (c\i) at (\i*0.75, 0) {};
}
\node at (c1) {{\Large$\star$}};
\node at (c2) {{\Large$\star$}};
\node at (c4) {$\ldots$};
\node at (c5) {{\Large$\star$}};
\node at (c7) {$\ldots$};
\node at (c8) {{\Large$\star$}};
\def\bh{0.35} 

\draw[arr] (c10.north) -- ++(0,0.55) -| (c1.north);
\draw[arr] (c9.north) -- ++(0,0.425) -| (c2.north);
\draw[arr] (c8.north)  -- ++(0,0.3) -| (c3.north);
\draw[arr] (c6.north)  -- ++(0,0.175) -| (c5.north);
\node[minimum width=0.75cm, minimum height=0.75cm, anchor=west] 
    (dia) at ($(c10.east)$) {{\Large$\diamond$}};
\node[draw, minimum width=1.8cm, minimum height=0.75cm, anchor=west]
    (r1) at ($(dia.east)$) {};
\node[draw, minimum width=1.8cm, minimum height=0.75cm, anchor=west,
    pattern=north east lines, pattern color=blue!30]
    (r2) at (r1.east) {};
\node[draw, minimum width=1.8cm, minimum height=0.75cm, anchor=west]
    (r3) at (r2.east) {};
\node[above=0.5cm] at (r2.north) (rlabel) {Resampled for $\Dno$};
\draw[arr2] (rlabel.south) -- (r2.north);

\draw[{Latex[length=1.5mm]}-{Latex[length=1.5mm]}]
    ($(r2.west) + (0.1,0)$) -- ($(r2.east) + (-0.1,0)$)
    node[midway, fill=white, inner sep=1pt, font=\small] {$n/3$};


\def\ylevel{-0.6}

\draw[{Latex[length=1.5mm]}-{Latex[length=1.5mm]}]
    (r1.south west |- 0,\ylevel) -- (r3.south east |- 0,\ylevel)
    node[midway, fill=white, inner sep=1pt] {$n$};

\draw[{Latex[length=1.5mm]}-{Latex[length=1.5mm]}]
    (c1.south west |- 0,\ylevel) -- (c5.south east |- 0,\ylevel)
    node[midway, fill=white, inner sep=1pt] {$n$};

\draw[{Latex[length=1.5mm]}-{Latex[length=1.5mm]}]
    (c6.south west |- 0,\ylevel) -- (c10.south east |- 0,\ylevel)
    node[midway, fill=white, inner sep=1pt] {$n$};
\end{tikzpicture}
    \caption{Hard instances for our lower bounds.
    The $2n$ symbols on the left form the {\em hidden} string; the $n$ bits on the right form the {\em clear} string.
    Each symmetric pair of indices in the hidden string contains exactly one $\star$.
    For the adaptive $\Omega(\qb)$ bound, the {\em hidden} string is first divided into $n^{2/5}$ blocks of length $n^{3/5}$, then the number of $\star$s in each block is chosen uniformly from $\{ 0, 1, \dots, n^{3/5}\}$, after which the $\star$s are placed uniformly at random.
    For the nonadaptive $\Omega(n^{1/2})$ bound, the number of $\star$s in the first half is chosen uniformly from $\{0,1,\ldots,n\}$, after which the $\star$s are placed uniformly at random.}\label{fig:filter-example}
\end{figure}


\paragraph{The tester for excursion languages.} 
Our tester for excursion languages $L_{\ell,r}$ views the input word as a walk on the integer line. Membership means that the walk starts at $0$, never goes below $0$, and returns to $0$. The central statistic in the analysis is
\[
    \Delta(w)=s(w)-2m(w),
\]
where $s(w)$ is the final height and $m(w)$ is the minimum height reached
by the walk. This statistic captures the two ways in which a word can fail
to be an excursion: the walk may end at the wrong height, or it may dip
below $0$. It is inspired by techniques of \cite{BermanRaskhodnikovaYaroslavtsev14} for tolerant testing of monotonicity of Boolean functions.

The main structural ingredient in the analysis is a Repair Lemma (\cref{lem:repair}) showing that
\[
    \dist(w,L_{\ell,r}) \le \Delta(w).
\]
It implies that for words $w$ that are $\eps$-far from
$L_{\ell,r}$, our statistic is large: $\Delta(w)\ge \eps n$. 
The proof of the Repair Lemma is constructive. We repair all negative prefixes by modifying the steps that first reach new negative heights. Then, once the
walk is nonnegative, we repair the final height by modifying the last  upward crossings of positive heights.

The tester estimates $\Delta(w)$ by sampling $O(1/\eps^2)$ positions (for fixed $\ell$ and $r$),
computing $\Delta$ on the sampled walk and then rescaling. Soundness follows
because, for a far input, the sampled statistic is unlikely to
substantially underestimate $\Delta(w)$; it suffices to control the sampled
walk at the endpoint and at a point where the original walk reaches its
minimum. Completeness follows because, for a valid excursion, the expected
sampled walk is a scaled valid excursion, and the deviations from this
expectation are controlled by standard martingale and variance bounds.
Testing the Dyck-1 language, $D_1$, easily reduces to testing $L_{1,1}$, yielding the $O(1/\eps^2)$-query tester for $D_1$ that avoids the
logarithmic loss in the original tester of \cite{AlonKNS00}.

\subsection{Prior work}\label{sec:prior-work}

Testability of languages has been extensively studied.

\paragraph{Regular languages.} Following the work of \cite{AlonKNS00}, property testing of regular languages has developed into a rich line of research. 
Subsequent work considered distance measures besides the Hamming distance: edit distance with moves \cite{MagniezR07,FischerMR10} and weighted edit distance. Under the latter measure, \cite{FrancoisMRS16} gave a $O(\log^2(1/\eps)/\eps)$-query tester and \cite{BathieS21} gave a tester with optimal query complexity $O(\log(1/\eps)/\eps)$. Recently, \cite{BathieFM25} established a trichotomy under the Hamming distance: every regular language is testable with either $\Theta(\log(1/\eps)/\eps)$, $\Theta(1/\eps)$, or $0$ queries. Beyond the query model, \cite{GanardiHL16, GanardiHLS19, GanardiHLMS25}  studied testing for regular languages in the sliding window model.

\paragraph{Other extensions of testing of languages.}
Property testing has also been studied for richer classes of languages, including regular tree
languages~\cite{MagniezR07}, context-free and infinite regular languages
under edit distance with moves~\cite{FischerMR10}, languages recognized by
small-width branching programs~\cite{Newman02,FischerNS04}, and visibly
pushdown languages in the streaming model~\cite{FrancoisMRS16}.

\subsection{Open questions}\label{sec:open-questions}

Our results sharpen the boundary between constant-query 
and polynomial-query testability for several
natural subclasses of CFLs. They leave open the broader question of how hard CFL membership testing can be.

\paragraph{Testing Dyck languages.} 
For all $m \geq 2$, the adaptive query complexity of testing $D_m$ lies between $\Omega(\qb)$, proved here, and $n^{2/5 + o(1)}$, proved in \cite{FischerMS18}.
For nonadaptive algorithms, however, the query complexity lies between $\Omega(n^{1/2})$, proved here, and the trivial upper bound of $O(n)$.
Determining the correct polynomial exponent remains open.

\paragraph{Testing arbitrary CFLs.}
The largest known lower bound for testing a CFL is the $\Omega(\sqrt n)$ lower bound of Alon et al.~\cite{AlonKNS00} for the language $\{v\rev v w \rev w : v,w\in\{0,1\}^*\}$. It remains open whether
this is the right worst-case behavior for CFLs. Is every CFL testable with $O(\sqrt n)\cdot \operatorname{poly}(1/\eps)$ queries, or is there a CFL whose testing complexity is polynomially larger, perhaps even linear in~$n$?

\paragraph{Testability of subclasses of CFLs.}
Another direction is to understand the query complexity of natural
subclasses of CFLs as a function of $n$. Regular languages, weak
one-counter languages, and, as we show, excursion languages are
constant-query testable. In contrast, our results establish that even very
restricted linear languages can require polynomially many queries. Can one
characterize which natural subclasses of CFLs have constant,
polylogarithmic, polynomial, or near-linear query complexity in $n$?

\subsection{Organization}\label{sec:organization}

In \Cref{sec:prelim}, we review the background on formal languages and property testing.
In \cref{sec:lb}, we present our adaptive $\Omega(\qb)$ and nonadaptive $\Omega(n^{1/2})$ lower bounds for testing Dyck languages with at least two types of parentheses and \HSD, proving \cref{thm:dyck-lb,thm:dyck-lb-nonadaptive,thm:HSD-lb}.
In \Cref{sec:excursion-languages}, we give our tester for excursion languages, proving the upper bound part of \cref{thm:excursion} (stated separately in \cref{thm:excursion-up}) and derive the optimal $O(1/\eps^2)$-query tester  for Dyck-1. In \cref{sec:optimality}, we present the matching $\Omega(1/\eps^2)$ lower bounds for Dyck-1 and excursion languages, proving the optimality of our testers and completing the proof of \cref{thm:excursion}.
\section{Preliminaries}\label{sec:prelim}

An alphabet $\Sigma$ is a finite set of symbols, a word $w$ over $\Sigma$ is a finite string of symbols from $\Sigma$. The symbol at the $i^{\text{th}}$ index of $w$ is denoted by $w[i]$. The length of a word $w$ is denoted by $|w|$ and the empty string, $\lambda$, is the unique word of length $0$. The set of all words of length $n$ over $\Sigma$ is $\Sigma^n$ and $\Sigma^* = \bigcup_{i \in \bN \cup \{0\}} \Sigma^i$. A language $L$ over an alphabet $\Sigma$ is a subset of~$\Sigma^*$. We use $\circ$ to denote concatenation and $\term{a}^n$, where $a \in \Sigma$ and $n \in \bN$, to denote the string consisting of $n$ copies of $a$. 

The \emph{distance} between two words $w$ and $w'$ in $\Sigma^n$, written as $\dist(w,w')$, is the \emph{Hamming distance}, i.e., the number of indices on which they differ. The distance between a word $w$ and a language $L$, denoted $\dist(w,L)$, is the minimum distance between $w$ and any $w' \in L$ of the same length.

    \begin{definition}[$\eps$-far, $\eps$-tester]\label{def:tester} A string $w$ of length $n$ is $\eps$-far from a language $L$ if the Hamming distance from $w$ to each string of length $n$ in $L$ is at least $\eps n$.
    A randomized algorithm $\mathcal{A}$ is a (two-sided) \emph{$\eps$-tester} for a language $L$, if, given query access to the input string $w$, the algorithm accepts all $w\in L$ with probability at least $\frac 2 3$ and rejects all strings $w$ that are $\eps$-far from $L$ with probability at least $\frac 23$.
    A tester is \emph{adaptive} if its queries to $w$ can depend on the answers to previous queries.
    The \emph{query complexity} of $\mathcal{A}$ is the number of queries it makes.
    A language $L$ is {\em testable} if there is a function $q:(0,1)\to\bN$ such that, for every $\eps>0$, language $L$ has an $\eps$-tester making at most $q(\eps)$ queries, independent of the input length.    
\end{definition}

We use $[n]$ to denote $\{1,2,\dots,n\}$.

\begin{definition}[Dyck languages]\label{def:dyck}
    For $m\in\mathbb{N}$, let $\Gamma_m=\{t^L,t^R : t\in\{0,\dots,m-1\}\}$, where $t^L$ and $t^R$ are the opening and closing parentheses of type $t$.
    The Dyck language $D_m\subseteq\Gamma_m^*$ is generated by $S\to\lambda\mid SS\mid t^L S\, t^R$.
\end{definition}
\newcommand{\reveal}{\ensuremath q}
\newcommand{\unrev}{\ensuremath \ell}
\newcommand{\revone}{\ensuremath s}
\newcommand{\unrevone}{\ensuremath t}
\newcommand{\eq}{\ensuremath \mathsf{EQ}}
\newcommand{\ineq}{\ensuremath \mathsf{INEQ}}
\newcommand{\hmax}{\ensuremath h_{\textsf{max}}}
\newcommand{\blk}{\ensuremath \mathsf{block}}

\section{Lower bounds for \textsc{Hidden String} and Dyck-$m$ for $m \ge 2$}
\label{sec:lb}

In this section, we prove our lower bounds for testing the Dyck-$m$ languages for $m\geq 2$ and \HSD. We start by defining $\HS$, which only differs from \HSD in that it omits the $\diamond$ symbol. (We need different versions for the proofs of \cref{thm:dyck-lb,thm:HSD-lb}.) The lower bounds for \HS are stated and proved in \cref{sec:HS-lb} in \cref{thm:HS-lb-adaptive,thm:HS-lb-nonadaptive}. Then we use these theorems to derive \cref{thm:dyck-lb,thm:HSD-lb} in \cref{sec:dyck-m-lb,sec:linear}, respectively.

\subsection{Lower bound for \HS}\label{sec:HS-lb}

In this section, we define \HS and construct a distribution framework that we then use to prove adaptive and nonadaptive lower bounds in \cref{sec:adaptive-lb,sec:nonadaptive-lb}.

\begin{definition}[\HS]\label{def:HS}
    Let $w$ be a string over an alphabet $\Sigma \cup \{ \star \}$, where $\star\notin\Sigma$.
    The {\em true-string} $T(w)$ is the string obtained by deleting all $\star$ symbols from $w$.
    Let $H(w)$ denote the string obtained by adding the $H$ superscript to all symbols of $w$.
    The {\em reverse} $\rev{w}$ is the string obtained by reversing~$w$.
    The language \HS is defined over the alphabet $\{ 0^H, 1^H, \star, 0, 1 \}$ as
    \[
        \HS := \left\{ u \circ v \mid u \in \left\{ 0^H, 1^H, \star \right\}^*, v \in \left\{ 0, 1 \right\}^*, T(u) = H(\rev{v}) \right\}.
    \]
    For a word in \HS, the prefix consisting of symbols in $\left\{ 0^H, 1^H, \star \right\}^*$ is called the {\em hidden string}, and the remaining suffix is called the {\em clear string}.
\end{definition}


Unlike in the definition of {\sf TruestringEquivalence} of \cite{FischerMS18}, our hidden string requires a superscript $H$ to prevent distance from \HS being affected by characters of $u$ counting towards those of $\rev{v}$.
This is not possible in {\sf TruestringEquivalence} as the algorithm gets separate query access to the two strings. Furthermore, to ensure \HS is a CFL, we cannot require that both strings have the same number of $\star$ symbols as in \cite{FischerMS18}. Lastly, to ensure that \HSD is a Nasu--Honda linear language, we require that the clear string contains no $\star$ symbols at all.

Our lower bound proofs are formulated via Yao's minimax principle (as stated in \cref{clm:yao}).
Namely, we describe two distributions $\Dyes$ and $\Dno$, supported over words in \HS and words $\eps$-far from \HS (with probability $1-o(1)$), respectively, and show that they are hard to distinguish by deterministic algorithms that use a small number of queries.

For convenience, in the rest of \Cref{sec:HS-lb}, we use $n$ to denote the length of the clear string rather than the total input length.
Then the full strings have length $3n$, which differs from the convention in \Cref{sec:intro} only by constant factors that are absorbed by the asymptotic notation.
Lastly, we assume $n$ is divisible by 3; the proofs extend to the general case by taking ceilings.

The pairs of distributions used in \cref{sec:adaptive-lb,sec:nonadaptive-lb} differ only in how the locations of the $\star$ symbols are chosen.
We formalize this via the notion of the {\em filter}, which is the string in $\{0,1\}^n$ of indicators for the locations of the non-star symbols in the first $n$ positions of the input string.
Next we define the pair of hard input distributions in terms of the underlying filter distributions.
The filter distributions used for our adaptive and nonadaptive lower bounds are defined in \cref{sec:adaptive-lb,sec:nonadaptive-lb}, respectively.

\begin{definition}[$\Dyes$ and $\Dno$]\label{def:dyes-dno}
    Let $v \sim \{0,1\}^n$ be sampled uniformly at random.
    Let $X_1, \dots, X_n$ be random variables supported over $\{ 0, 1 \}$ picked from an underlying {\em filter distribution}.
    Define the {\em hidden string} $u \in \left\{ 0^H, 1^H, \star \right\}^{2n}$ as follows.
    For all $i \in [n]$, if $X_i = 0$, set $u[i] = \star$ and if $X_i = 1$, set $u[2n+1-i] = \star$.
    Set the remaining characters of $u$ so that $T(u) = H(v)$.
    
    For the distribution $\Dyes$, output $u \circ \rev{v}$.
    
    Let $\Tilde{v} \in \{0,1\}^n$ be obtained from $v$ by resampling its middle third: for each $i\in(n/3,2n/3]$, set $\Tilde{v}[i]$ to be a uniformly random bit.
    For the distribution $\Dno$, output $u \circ \rev{\Tilde{v}}$.
\end{definition}

\cref{fig:filter-example} offers an illustration for these constructed strings, except there is no $\diamond$-separator.

We first show that our distributions have correct support.

\begin{lemma}\label{lem:far}
    The distributions $\Dyes$ and $\Dno$ satisfy the following.
    \begin{enumerate}
        \item $x \sim \Dyes$ is always in \HS.
        \item $y \sim \Dno$ is $\frac{1}{720}$-far from \HS~with probability $1-o(1)$.
    \end{enumerate}
\end{lemma}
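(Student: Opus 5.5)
The plan is to dispose of the first item directly from \cref{def:dyes-dno} and to prove the second by a counting argument over the $n/3$ resampled bits.

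For the first item, take $x = u\circ\rev{v}$ with $T(u)=H(v)$. The hidden string of $x$ is $u \in \{0^H,1^H,\star\}^{2n}$ and the clear string is $\rev{v}\in\{0,1\}^n$. Since $\rev{(\rev{v})}=v$, the condition of \cref{def:HS} reads $T(u)=H(v)$, which holds by construction. So $x\in\HS$ always.

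For the second item, fix $u$ and $v$; all remaining randomness is in the $n/3$ resampled middle bits of $\Tilde{v}$. Let $d=\frac{3n}{720}$. Suppose $y=u\circ\rev{\Tilde v}$ is within Hamming distance $d$ of some $y'=u'\circ v'\in\HS$ of length $3n$. The goal is to convert this into an edit-distance statement: $\mathrm{ed}(v,\Tilde v)\le c\,d$ for an absolute constant $c$ (about $4$). The steps are as follows.
\begin{enumerate}
\item The boundary between hidden and clear symbols in $y'$ sits at some position $k$. Every position between $k$ and $2n$ carries a symbol of the wrong type relative to $y$ and must have been changed, so $|k-2n|\le d$.
\item Each changed position of the hidden part changes $T(\cdot)$ by at most one insertion, deletion or substitution. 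Together with the boundary shift, this gives $\mathrm{ed}(T(u'),T(u))\le 2d$. Stripping the superscript $H$, this means $\mathrm{ed}(\rev{v'},v)\le 2d$.
\item Similarly, $v'$ is obtained from $\rev{\Tilde v}$ by at most $d$ substitutions and at most $d$ insertions or deletions at the left end, so $\mathrm{ed}(\rev{v'},\Tilde v)\le 2d$.
\end{enumerate}
By the triangle inequality, $\mathrm{ed}(v,\Tilde v)\le 4d=n/60$.

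It then remains to show that $\Pr[\mathrm{ed}(v,\Tilde v)\le n/60]=o(1)$. The number of binary strings within edit distance $t$ of the fixed string $v$ is at most $\binom{n+t}{t}^{2}\,4^{t}\le 2^{O(H(t/n))n}$, counting the choice of edit positions and inserted or substituted symbols. Since $\Tilde v$ agrees with $v$ outside the middle third, $\Tilde v$ is uniform over a set of $2^{n/3}$ strings. Hence the probability is at most $2^{O(H(1/60))n-n/3}$, which is $2^{-\Omega(n)}$ provided the constant from the count at $t=n/60$ is below $1/3$. I will check this inequality numerically; if it is tight, I will shrink the constant $\frac1{720}$ correspondingly, since the statement fixes $\frac1{720}$ and the constants must be arranged to fit it.

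I expect the main obstacle to be the bookkeeping in the three steps above. One must verify that a changed symbol moved across the hidden/clear boundary is accounted for exactly once. One must also check that a hidden symbol changed into a $\star$, or a $\star$ changed into a bit, is a single edit on $T(u)$. Finally, the constant $c$ must be small enough that the entropy bound beats the $n/3$ random bits.
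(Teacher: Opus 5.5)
Your proposal is correct and follows essentially the same route as the paper. Both arguments first reduce the Hamming distance from $y$ to \HS to the edit distance between $v$ and $\Tilde{v}$ (you lose a factor $4$ where the paper loses $2$), and then union-bound over small edit patterns against the $2^{n/3}$ equally likely middle thirds. Your count over the whole edit ball is, if anything, more complete than the paper's restriction to deletions inside the middle thirds.

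The numerical check you deferred does pass. At $t=n/60$, your bound $\binom{n+t}{t}^2 4^t$ is about $2^{0.28n}$, which is less than $2^{n/3}$, so the failure probability is $2^{-\Omega(n)}$. No shrinking of $\frac{1}{720}$ is needed, and the statement fixes that constant anyway.
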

\begin{proof}
    (1) follows by definition.
    We prove (2) in two steps.
    Consider a string $y \sim \Dno$ of length $3n$ of the form $u \circ \rev{\Tilde{v}}$.
    We first show that the distance of $y$ to $\HS$ is lower bounded by half the edit distance between $T(u)$ and $H(\Tilde{v})$.
    Indeed, consider the following possible modifications.
    If we change a $\star$ to $\{ 0^H, 1^H \}$ or $\{ 0, 1 \}$, this is equivalent to an insertion in $T(u)$ or $H(\Tilde{v})$, respectively.
    If we change a $\{ 0^H, 1^H \}$ or $\{ 0, 1 \}$ to $\star$, this is equivalent to a deletion in $T(u)$ or $H(\Tilde{v})$, respectively.
    If we change a $\{ 0^H, 1^H \}$ to $\{ 0, 1 \}$ or vice-versa, this is equivalent to a deletion in $T(u)$ and an insertion in $H(\Tilde{v})$, or vice-versa, respectively.
    If we change a $0^H$ to $1^H$ or vice-versa, this is equivalent to a modification in $T(u)$.
    If we change a $0$ to $1$ or vice-versa, this is equivalent to a modification in $H(\Tilde{v})$.
    Thus, every change in the input string can be reflected by at most one edit in $T(u)$ and one edit in $H(\Tilde{v})$ (total of two edits).
    Next, we show that with probability $1-o(1)$, the edit distance between $T(u)$ and $H(\Tilde{v})$ is at least $\frac{n}{120}$.
    Indeed, consider all $\binom{n/3}{n/120}^2 = o \left( 2^{16n/120} \right)$ ways of deleting $\frac{n}{120}$ characters from the middle thirds of $T(u)$ and $H(\Tilde{v})$.
    For every such possibility, the probability that the remaining strings match is $\Theta \left( 2^{-38n/120} \right)$.
    A union bound completes the argument.
    Thus the distance to $\HS$ is at least $\frac{n}{240}$ with probability $1-o(1)$, meaning the string is $1/720$-far from $\HS$.
\end{proof}

The most important part of the proof is showing that distributions $\cD^+$ and $\cD^-$ are hard to distinguish for any deterministic algorithm making a small number of queries.
Fix a deterministic tester $\cA$.
Intuitively, $\cA$ can distinguish the two distributions only if it queries a position in the middle third of the clear string and the corresponding position in the hidden string. 
We formalize this by defining matching pairs.

\begin{definition}[True index, hidden index, matching pair]\label{def:true-idx-hid-idx-match-pr}
    Given a string $w$ of length $2n$ over $\{0^H, 1^H, \star\}$ and an index $j \in [2n]$ such that $w[j] \neq \star$, the {\em true index} $t(w, j) := |\{ j' \leq j \mid w[j'] \neq \star \} |$.
    In other words, it is the index $k$ such that $w[j]$ determines $(T(w))[k]$.
    Similarly, for an index $k \in [n]$, define the {\em hidden index} $h(w, k) := j$ such that $t(w, j) = k$.

    For $j\in[2n]$ and $k\in[n]$, a pair $(j, k)$ is called {\em matching} with respect to (w.r.t.) a word $w \in \{0^H,1^H,\star\}^{2n}$ 
    if $w[j] \neq \star$ and $t(w, j) = k.$ We also define $(0,0)$ and $(2n+1,n+1)$ to be matching pairs with respect to $w$.
    The {\em index} of a {\em matching pair} $(j, k)$ is $j$.
\end{definition}

If a string $x = u \circ v$, where $u \in \{0^H,1^H,\star\}^{2n}$ and $v \in \{0,1\}^n$, is in $\HS$, then a pair $(j,k)$ with 
$j \in [2n]$ and $k \in [n]$ is matching with respect to $u$ iff $u[j] = H(v[n-k+1])$, or equivalently $x[j] = H(x[3n-k+1])$, as $H(v)$ is the reverse of $T(u).$

The {\em transcript} $\Tr{\cA, x}$ of algorithm $\cA$ on an input $x$ is the ordered sequence of pairs $(i,x[i])$, followed by a sequence of pairs $(j, X_j)$, where the value at index $i$ in the input string $x$, and the filter value $X_j$ at index $j \in [n]$ was {\em revealed} to the algorithm during its execution.
In our proofs, the value at an index or a filter value can be revealed either because it was queried or because it falls under the additional information revealed to the algorithm in order to make the proofs work.
The latter happens in order to avoid conditioning on events that are hard to analyze.
Let $\Tru{\cA}{\cD}$ be the distribution of transcripts $\Tr{\cA, x}$ when $\cA$ is executed on input $x$ drawn from distribution $\cD$.
If $E$ is an event, then $\Tru{\cA}{\cD | E}$ is the distribution of transcripts when $x$ is drawn from $\cD$, conditioned on $E$.

We define an event $\Bad$ and show that the absolute difference in acceptance probabilities of $\cA$ on inputs drawn from $\Dyes$ versus $\Dno$ is upper bounded by the probability of $\Bad$ occurring.

\begin{definition}[Bad event]
    Let $\Bad$ be the event that there exists a matching pair $(j, k)$ w.r.t. the {\em hidden string} of input $x$, such that $j \in (n/3, 2n-n/3]$ and $\Tr{\cA, x}$ has revealed the values at both indices $j$ and $3n-k+1$.
\end{definition}

\begin{lemma}\label{lem:indistinguishable-dist}
    $\left| \Pru{x \sim \Dyes}{\cA(x) \text{ accepts}} - \Pru{y \sim \Dno}{\cA(y) \text{ accepts}} \right| \le \Pr[\Bad]$.
\end{lemma}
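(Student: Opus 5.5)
We will prove the lemma by a coupling argument. The aim is to show that, conditioned on $\neg\Bad$, the transcript distributions under $\Dyes$ and $\Dno$ coincide, and then to charge the remaining discrepancy to $\Pr[\Bad]$.

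\emph{Coupling.} Draw $v$, the filter $X_1,\dots,X_n$, and the resampled bits $\Tilde v[i]$ for $i\in(n/3,2n/3]$ on a common probability space. This lets us run $\cA$ on $x=u\circ\rev{v}$ and on $y=u\circ\rev{\Tilde v}$ simultaneously; both strings share the hidden string $u$ and the filter.

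\emph{Key symmetry.} Condition on the filter and on the revealed positions. In $y$ the clear-string positions $3n-k+1$ with $k\in(n/3,2n/3]$ carry fresh uniform bits, independent of $u$. In $x$ they carry $v[k]$, which equals the symbol of $u$ at the hidden index $h(u,k)$. Since the filter gives $t(u,j)\in[j-n,j]$ (each symmetric pair has exactly one star), true index $k$ in the middle third forces the hidden index to lie in $(n/3,2n-n/3]$. Hence $v[k]$ is correlated with a revealed symbol of $u$ only if the pair $(h(u,k),k)$ has both ends revealed, which is exactly $\Bad$.

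\emph{Step-by-step induction.} We induct over the steps of $\cA$, revealing one position (or one filter value) at a time. The claim is that for every partial transcript $\tau$ consistent with $\neg\Bad$ so far, the conditional distribution of the next revealed value is the same under both distributions. The hidden-string entries and filter values are identically distributed in both. A clear-string entry outside the middle third is equal in both. A clear-string entry $3n-k+1$ in the middle third is a uniform bit under $\Dno$. Under $\Dyes$ it equals $v[k]$, which is uniform given $\tau$ unless position $h(u,k)$ has been revealed. Thus for each full transcript $\tau$ on which $\Bad$ does not occur, $\Pr_{\Dyes}[\tau]=\Pr_{\Dno}[\tau]$. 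Whether $\Bad$ occurs is itself determined by the transcript together with the filter, which is part of the shared randomness.

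\emph{Main obstacle.} The hard step is making the ``uniform given $\tau$'' claim rigorous. Under $\Dyes$, $v[k]$ is tied to some unrevealed hidden position whose location depends on the filter, and the transcript may partially reveal that filter. The approach is to condition on the entire filter $X$ together with $\tau$. Given $X$, the map $k\mapsto h(u,k)$ is fixed and the bits of $v$ are i.i.d.\ uniform. So $v[k]$ is uniform and independent of every revealed entry except $u[h(u,k)]$, and that entry is unrevealed precisely when $\Bad$ fails for the pair $(h(u,k),k)$. Averaging over $X$ preserves the equality.

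\emph{Conclusion.} Let $\mathcal{T}_{\mathrm{good}}$ be the set of transcripts on which $\Bad$ does not occur. The decision of $\cA$ is a function of its transcript, so
\[
\Bigl|\Pr_{\Dyes}[\text{acc}]-\Pr_{\Dno}[\text{acc}]\Bigr|\le \sum_{\tau\notin\mathcal{T}_{\mathrm{good}}}\bigl|\Pr_{\Dyes}[\tau]-\Pr_{\Dno}[\tau]\bigr|/2\cdot 2 .
\]
This is at most $\max(\Pr_{\Dyes}[\Bad],\Pr_{\Dno}[\Bad])$. The transcripts agree in distribution up to the first step at which $\Bad$ is triggered, so $\Pr[\Bad]$ is the same under the two distributions, and the bound is $\Pr[\Bad]$.
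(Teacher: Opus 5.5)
Your overall route is the paper's: show that off $\Bad$ the transcript has the same law under $\Dyes$ and $\Dno$ (\cref{clm:identical-views}), then conclude by total probability (\cref{claim:indistinguishability-via-probability-of-bad}). Conditioning on the entire filter $X$ is the right way to make the paper's rather informal justification of \cref{clm:identical-views} precise. Your remark that $\Pr[\Bad]$ is the same under both distributions supplies a hypothesis that \cref{claim:indistinguishability-via-probability-of-bad} assumes but the paper does not verify explicitly.

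The gap is in the final accounting. You note that $\Bad$ is determined by $\tau$ together with $X$, not by $\tau$ alone. You then nonetheless define $\mathcal{T}_{\mathrm{good}}$ as a set of transcripts and assert $\Pr_{\Dyes}[\tau]=\Pr_{\Dno}[\tau]$ for $\tau\in\mathcal{T}_{\mathrm{good}}$, and neither reading of this works. Take a nonadaptive $\cA$ that reads a hidden position $j\in(n/3,n]$ and a clear position $3n-k+1$ with $k\in(n/3,2n/3]$ and $k\le j$. A transcript $\tau$ showing $u[j]\neq\star$ is compatible both with $\Bad$ (if $t(u,j)=k$) and with $\overline{\Bad}$.
\begin{itemize}
\item If you count such $\tau$ as good, the equality is false. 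Under $\Dyes$ the two observed bits agree with probability $\frac12+\frac12\Pr[t(u,j)=k\mid X_j=1]>\frac12$, whereas under $\Dno$ they agree with probability exactly $\frac12$.
\item If you count such $\tau$ as bad, the bad transcripts carry mass $\Pr[X_j=1]=\frac12$. But with the filter of \cref{sec:nonadaptive-lb}, $\Pr[\Bad]\le 1/j=O(1/n)$ by \cref{lem:polya-bound}.
\end{itemize}
Separately, your display does not give the claimed bound. The quantity $\sum_{\tau\notin\mathcal{T}_{\mathrm{good}}}|\Pr_{\Dyes}[\tau]-\Pr_{\Dno}[\tau]|$ is bounded only by the sum of the two masses of the bad set, not by their maximum. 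The factor $/2\cdot 2$ is just $1$ and does not help.

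The repair is to work on the joint space, as \cref{claim:indistinguishability-via-probability-of-bad} does by conditioning on the event $\Bad$ rather than on a set of transcripts. Your argument shows $\Pr_{\Dyes}[\tau\mid X]=\Pr_{\Dno}[\tau\mid X]$ whenever $\Bad$ fails for $(\tau,X)$. Sum this only over those $X$, not over all $X$ as ``averaging over $X$'' suggests, to get $\Pr_{\Dyes}[\tau\wedge\overline{\Bad}]=\Pr_{\Dno}[\tau\wedge\overline{\Bad}]$ for every $\tau$. Summing over all $\tau$ then gives $\Pr_{\Dyes}[\Bad]=\Pr_{\Dno}[\Bad]$ directly, with no stopping-time argument. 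Summing over accepting $\tau$ gives $\Pr_{\Dyes}[\mathrm{acc}]-\Pr_{\Dno}[\mathrm{acc}]=\Pr_{\Dyes}[\mathrm{acc}\wedge\Bad]-\Pr_{\Dno}[\mathrm{acc}\wedge\Bad]$. This is a difference of two numbers in $[0,\Pr[\Bad]]$, which yields the lemma.
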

\begin{proof}
    We first prove that the distribution on transcripts is the same under $\Dyes$ and $\Dno$ when $\Bad$ does not occur.
    \begin{claim}\label{clm:identical-views}
        $\Tru{\cA}{\Dyes \mid \overline{\Bad}} = \Tru{\cA}{\Dno \mid \overline{\Bad}}$.    
    \end{claim}
    \begin{proof}
        The locations of the $\star$ symbols and the $H$-symbols in the {\em hidden string} are distributed identically under $\Dyes$ and $\Dno$.
        The two distributions differ only in the correlations between the hidden string and the middle third of the clear string: under $\Dyes$, the corresponding clear bits agree with the bits of $v$, whereas under $\Dno$ these bits are resampled.
        
        Thus, the algorithm $\cA$ can distinguish the distributions only if there exists a matching pair $(j, k)$ w.r.t. the {\em hidden string} of input $x$, such that $k \in (n/3, 2n/3]$ and $\Tr{\cA, x}$ has revealed both indices $j$ and $3n-k+1$.
        The {\em index} $j$ of such a matching pair must lie in the interval $(n/3, 2n-n/3]$.
        Conditioned on $\overline{\Bad}$, such a matching pair does not exist in the transcript of $\cA$.
        Thus, all observed correlations lie outside the middle third and are identical under $\Dyes$ and $\Dno$.
        Therefore, conditioned on $\overline{\Bad}$, the two 
        transcript distributions are identical.
    \end{proof}
    Now the lemma follows from \cref{claim:indistinguishability-via-probability-of-bad}, whose conditions are satisfied by \cref{clm:identical-views}.
\end{proof}

\subsubsection{Adaptive lower bound}\label{sec:adaptive-lb}

In this section, we prove the following adaptive lower bound for testing \HS.

\begin{theorem}[Hidden String adaptive lower bound]\label{thm:HS-lb-adaptive}
   For all sufficiently small constant $\eps > 0$, every adaptive $\eps$-tester for \HS has query complexity $\Omega(\qb)$ on strings of length $3n$.
\end{theorem}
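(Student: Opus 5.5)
We apply Yao's principle with the pair $\Dyes,\Dno$ of \cref{def:dyes-dno}, instantiated with the block filter described in \cref{sec:techniques}. Split $[n]$ into $n^{2/5}$ blocks of length $b=n^{3/5}$. For each block independently, draw $N_i$ uniformly from $\{0,1,\dots,b\}$. Then choose a uniformly random subset of size $N_i$ of the block to be the positions where the filter $X_j=1$. By \cref{lem:far}, $\Dno$ is $\frac1{720}$-far with probability $1-o(1)$. By \cref{lem:indistinguishable-dist}, it therefore suffices to show that for every deterministic $\cA$ making $q=c\,n^{2/5}$ queries, $\Pr[\Bad]\le 1/10$, with the reveal rules added to the transcript.

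\textbf{Progress measure.} A matching pair $(j,k)$ with $j\le n$ has true index $k=\sum_{j'\le j}X_{j'}$. Its offset from the ``expected'' position is the deviation of the filter's partial sum, and symmetrically for $j>n$. We track $\hmax$, the largest index (counted from the nearer end) of a matching pair revealed so far. Since $\Bad$ requires $j\in(n/3,2n-n/3]$, it needs $\hmax\ge n/3$, which means crossing about $n^{2/5}/3$ blocks. The plan is to show that each query increases $\hmax$ in expectation by only $O(b)=O(n^{3/5})$, summed over the execution, so $\Ex{\hmax}=O(qn^{3/5})$, and then to apply Markov's inequality.

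The key estimate is the following. Suppose the revealed information pins down the filter up to some block boundary. To land a matching pair $\delta$ blocks further, the tester must guess $\sum X$ over those $\delta$ blocks, plus a partial block. Each block contributes an independent uniform count $N_i$, so the sum has anticoncentration at scale $\approx b\sqrt{\delta}$ for $\delta\ge1$, and even within one block the count is spread uniformly over $b+1$ values. A single pair of queries (hidden position $j$, clear position $k$) is therefore matching with probability $O(1/(b\sqrt\delta))$ or less. Summing over targets, the expected advance per query is $\sum_\delta \delta b\cdot O(1/(b\sqrt\delta))$ restricted to $\delta\le$ the horizon, which needs care. Queries made in bulk (many clear positions around one hidden position) only pay off once, and this is what amortization should exploit.

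\textbf{Conditioning on mismatches and reveal rule.} Non-matching queries leak information that biases the posterior of the $N_i$. To avoid this, we use the reveal condition: whenever a collection of queries near the current frontier has ``total advance'' proportional to its size, we reveal the filter up to the furthest of them, and we charge this against the progress budget. Queries that are far from the frontier mismatch with high probability, so we condition on that event and absorb the failure probability into $\Pr[\Bad]$. After a reveal, the remaining blocks are fresh and uniform, so a potential-function argument applies: the expected increase in $\hmax$ per query, given the history, is $O(b)$.

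\textbf{Main obstacle.} The hardest step is this conditioning argument, specifically controlling the information from nearby mismatched queries without revealing too much. I would first try to prove that, conditioned on the unrevealed blocks and on mismatches among queries with small total advance, each $N_i$ still has density bounded by $O(1/b)$. I would then sum the resulting per-query bounds over the whole execution, not round by round, to obtain $\Ex{\hmax}=O(qb)$. This gives $\Pr[\Bad]\le O(q n^{3/5}/n)+o(1)$, which is small for $q=cn^{2/5}$.
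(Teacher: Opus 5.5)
There is a genuine gap. Your outline has the same architecture as the paper's proof:
\begin{itemize}
\item Yao's principle with the block filter;
\item a frontier $\hmax$ that must pass $n/3$;
\item a reveal rule for queries crowding the frontier;
\item an $O(1/(b\sqrt{\delta}))$ anticoncentration bound;
\item an $O(b)$ expected advance per query, followed by Markov's inequality.
\end{itemize}
However, the step you defer as the ``main obstacle'' is where the proof actually lives, and the route you sketch for it does not go through.

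\textbf{Queried blocks and the reveal rule.} You treat blocks beyond the frontier as independent uniform counts. But every symmetric query reveals $X_j$ at its position, so partially queried blocks are no longer uniform. Moreover, your reveal rule (``total advance proportional to its size'') is not pinned down to any invariant guaranteeing that enough untouched blocks separate a queried position from the frontier.

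The paper fixes both problems at once. It reveals the entire filter of every queried block, and it makes the rule precise. Let $h_1<\dots<h_\gamma$ be the queried hidden positions beyond the frontier, at block distances $\ell_1\le\dots\le\ell_\gamma$. Advance the frontier to the block of $h_{k^\star}$, where $k^\star$ is the largest index with $\ell_{k^\star}\le 2k^\star$. Each query is absorbed at most once, at cost at most $2b$, so the total free advance is $O(qb)$. Afterwards $\ell_k>2k$ for all $k$. Since only the blocks of $h_1,\dots,h_{k-1}$ were queried, more than $\ell_k/2$ untouched blocks precede the block of $h_k$. Then \cref{lem:block-anticoncentration} gives $O(1/(b\sqrt{\ell_k}))$ for each (hidden, clear) pair.

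\textbf{Conditioning on mismatches.} Your claim that after a reveal ``the remaining blocks are fresh and uniform'' is false. The untouched blocks stay conditioned on the event $I$ that none of the remaining queried hidden pairs matches any of the $\delta$ queried clear positions. This is an observed fact, not a failure event that can be absorbed into $\Pr[\Bad]$.

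Your proposed remedy, showing that each $N_i$ individually keeps density $O(1/b)$, would not suffice. The event $I$ correlates the blocks, and marginal density bounds give no anticoncentration for a sum: for instance, $N_2=b-N_1$ has uniform marginals but makes $N_1+N_2$ constant.

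The missing idea is that $I$ has prior probability at least $1/2$. Using the invariant $\ell_k>2k$, a union bound gives
\[
\Pr[\overline{I}]\le \delta\sum_k O\bigl(1/(b\sqrt{\ell_k})\bigr)\le \delta\sum_k O\bigl(1/(b\sqrt{k})\bigr)=O(\delta\sqrt{\gamma}/b)=O(c^{3/2})
\]
for $q=cn^{2/5}$. Hence any event about the untouched blocks has posterior probability at most twice its product-measure probability. The unconditional anticoncentration bound can then be applied at every step.

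With this in place, there is no residual class of ``nearby'' queries needing a separate density lemma, because the reveal rule has already absorbed them. No bulk-query amortization is needed either: a union bound over the at most $q$ potential partners of the current query gives expected direct advance $O(q\sqrt{n/b})=O(cb)$ per query.
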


We start by defining our filter distribution.

\begin{definition}[Filter distribution]\label{def:filter}
    Let $b := n^{3/5}$.
    Partition $[n]$ into $n/b$ consecutive \emph{blocks} $B_1, \ldots, B_{n/b}$ of length $b$, where $B_i := \{(i-1)b+1, \dots, i b\}$ for all $i \in [n/b]$.
    For each $i \in [n/b]$, pick a \emph{weight} $W_i$ uniformly at random from $\{0, 1, \dots, b\}$ and a uniformly random subset $F_i \subseteq B_i$ of size $W_i$.
    For all $j \in B_i$, set $X_j := 1$ if $j \in F_i$, and $X_j := 0$ otherwise.
    We call $(X_j)_{j \in B_i}$ the \emph{filter} of the block $B_i$.

    For $j \in [2n]$, let $\blk(j) \in [n/b]$ denote the index of the block containing $\min(j, 2n+1-j)$, and call $B_{\blk(j)}$ the \emph{block of $j$}.
\end{definition}
 
Fix a deterministic (adaptive) tester $\mathcal{A}$ making $o(\qb)$ queries.
Assume w.l.o.g.\ that when $\mathcal{A}$ queries an index $i \in [2n]$, it also queries $2n+1-i$ (this at most doubles the query
complexity).
We do this so that throughout the execution of $\mathcal{A}$, all revealed values in the \emph{hidden string} are at positions symmetric around its center.
We refer to the queries of $i$ and $2n+1-i$ together as a single query of the \emph{symmetric pair} $\{i, 2n+1-i\}$.
Generally, when an algorithm $\mathcal{A}$ makes a query, only the value at the queried index is revealed.
We strengthen $\mathcal{A}$ by revealing more information and then show that $\mathcal{D}^+$ and $\mathcal{D}^-$ are indistinguishable even for the strengthened algorithm.
Specifically, we reveal a prefix and a suffix of the \emph{hidden string} of the same length, consisting of whole blocks, together with the locations of all $\star$ symbols in every block that $\mathcal{A}$ queries, as specified next.
 
\paragraph{Additional values revealed in the hidden string.}
One of the main insights in our analysis is the strategy for revealing information to the algorithm in addition to what it obtains via queries.
We make the analysis tractable by carefully choosing the revealed information.
During the execution of algorithm $\mathcal{A}$, we maintain a value $\hmax \in \{0, b, 2b, \ldots, n\}$ that, intuitively, defines the \emph{frontiers} of the region where all values have been revealed to $\mathcal{A}$.
Since $\hmax$ is a multiple of $b$, this region consists of whole blocks.
Besides the values revealed in response to queries, we also reveal the values of all indices in the interval $[1, \hmax] \cup [2n+1-\hmax, 2n]$, that is, of all indices $j \in [2n]$ with $\blk(j) \le \hmax/b$.
We call $\hmax$ the \emph{left frontier} and $2n+1-\hmax$ the \emph{right frontier}.
 
Additionally, whenever $\mathcal{A}$ queries a symmetric pair $\{i, 2n+1-i\}$ with $i \in [\hmax+1, n]$, we reveal the filter of block $B_{\blk(i)}$, and we call this block \emph{queried}.
In other words, we reveal the locations of the $\star$ symbols at all indices $j$ with $\blk(j) = \blk(i)$, but not the values of the non-$\star$ symbols at these indices, except at the queried indices $i$ and $2n+1-i$.
 
Initially, $\hmax$ is set to $0$.
After each query, $\hmax$ may \emph{advance}, i.e., increase, 
to a multiple of $b$.
Whenever $\hmax$ advances, the values of all indices in the updated interval $[1, \hmax] \cup [2n+1-\hmax, 2n]$ are revealed to $\mathcal{A}$.
After each query made by $\mathcal{A}$, we reveal the value at the queried index, and the filter of the queried block (if any), and then consider advancing $\hmax$ in two stages.
 
\begin{description}
    \item[Stage I: Free advance.] Let $\gamma$ be the number of indices in the interval $[\hmax+1, n]$ queried so far.
    Arrange them in ascending order:
    $h_1 < h_2 < \ldots < h_\gamma$.
    For all $k \in [\gamma]$, let $\ell_k := \blk(h_k) - \hmax/b$ be the \emph{block distance} of $h_k$, i.e., the number of blocks from the left frontier up to and including the block of $h_k$.
    Note that $1 \le \ell_1 \le \ell_2 \le \ldots \le \ell_\gamma$.
    Let $k^\star$ be the largest index satisfying $\ell_{k^\star} \le 2k^\star$.
    If such an index $k^\star$ exists, set $\hmax$ to $b \cdot \blk(h_{k^\star})$ (equivalently, \emph{advance} $\hmax$ by $b \cdot \ell_{k^\star}$), so that $h_1, \ldots, h_{k^\star}$ now lie in $[1, \hmax]$.
    
    \item[Stage II: Direct advance.] Suppose that the newly queried index $q$ belongs to a matching pair w.r.t.\ the hidden string, and the value at $q$ has not been previously revealed but the value at the other index of the
    pair has.
    This happens in one of two ways: either $q > 2n$ and the value of the corresponding hidden string index $h(u, 3n-q+1)$ has already been revealed, or $q \in [\hmax+1, 2n-\hmax]$, the value at $q$ is not $\star$, and the corresponding clear string index $3n - t(u,q) + 1$ has already been queried.
    Let $j$ be the index of this matching pair. If $j$ is in $[\hmax+1, 2n-\hmax]$, advance $\hmax$ to $b \cdot \blk(j)$, the smallest multiple of $b$ for which $j \in [1, \hmax] \cup [2n+1-\hmax, 2n]$.
\end{description}
 
\begin{figure}[ht]
  \centering
  \begin{tikzpicture}[x=1cm, y=1cm, font=\small]
    \fill[pattern=north east lines] (0,0) rectangle (1,0.5);
    \fill[pattern=north east lines] (7,0) rectangle (8,0.5);
    \foreach \a in {1.5, 2.0, 3.0, 4.5, 5.5, 6.0}
      \fill[gray!25] (\a,0) rectangle (\a+0.5,0.5);
    \foreach \x in {0.5, 1.5, 2.0, ..., 6.5, 7.5}
      \draw[gray!85, very thin] (\x,0) -- (\x,0.5);
    \draw (0,0) rectangle (8,0.5);
    \draw[blue, thick] (1,0) -- (1,0.5);
    \draw[blue, thick] (7,0) -- (7,0.5);
    \foreach \x in {1.8, 3.2, 4.8, 5.7, 6.2}
      \draw[thick] (\x,-0.1) -- (\x,0.6);
    \draw[thick, red] (2.3,-0.1) -- (2.3,0.6);
    \draw[dotted, thick] (4,-0.3) -- (4,0.8);
    \node at (4,1.35) {Hidden String};
    \node[above] at (1.8,0.6) {$h_1$};
    \node[above] at (3.2,0.6) {$h_3$};
    \node[below, red] at (2.3,-0.1) {$h_2 = q$};
    \draw[<->] (5.0,0.75) -- (5.5,0.75) node[midway, above] {$b$};
    \draw[<->] (0,-0.35) -- (1,-0.35) node[midway, below] {$\hmax$};
    \draw[<->] (7,-0.35) -- (8,-0.35) node[midway, below] {$\hmax$};
    \draw (8.8,0) rectangle (12.8,0.5);
    \node at (10.8,1.35) {Clear String};
    \foreach \x/\lab in {9.3/c_1, 10.1/c_2, 11.6/c_3, 12.1/c_4} {
      \draw[thick] (\x,-0.1) -- (\x,0.6);
      \node[above] at (\x,0.6) {$\lab$};
    }
  \end{tikzpicture}
  \caption{An illustration of the transcript of $\mathcal{A}$. Thin gray lines
  mark block boundaries; blocks are symmetric around the center of the
  \emph{hidden string} (dotted line). All values in the hatched regions are in
  the transcript, i.e., they have been revealed to $\mathcal{A}$. The blue
  lines are the current frontiers; they lie on block boundaries. Queries
  outside of the hatched regions are denoted by vertical lines; in
  $[\hmax+1, n]$, they are indexed in ascending order. In the queried blocks
  (gray), the locations of all $\star$ symbols are revealed, but the values of
  the non-$\star$ symbols are revealed only at queried indices. The current
  query $q$ is marked in red.}
  \label{fig:transcript}
\end{figure}
 
\begin{observation}\label{obs:frontier}
    For any matching pair $(j, k)$ w.r.t.\ the hidden part of input $x$ such that the values at $j$ and $3n-k+1$ are both in the transcript of $\mathcal{A}$, the index $j$ is in $[1, \hmax] \cup [2n+1-\hmax, 2n]$.
\end{observation}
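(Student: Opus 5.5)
We argue by induction over the execution of $\mathcal{A}$ that the following invariant holds after each query has been processed, i.e., after Stages I and II are done. For every matching pair $(j,k)$ whose two values $x[j]$ and $x[3n-k+1]$ are both in the transcript, $j\in[1,\hmax]\cup[2n+1-\hmax,2n]$. Initially the transcript is empty and $\hmax=0$, so the invariant holds vacuously. Because $\hmax$ never decreases, the region $[1,\hmax]\cup[2n+1-\hmax,2n]$ only grows. Hence a pair that was covered before a step stays covered after it. So it suffices to consider pairs whose second value first enters the transcript during the current step.

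We consider a step triggered by a query $q$, together with the symmetric partner of $q$. The values added to the transcript in this step are of three kinds: (a) the values at $q$ and its partner; (b) the star locations of a queried block, which are filter values only and add no symbol values; (c) symbol values inside the newly enlarged prefix/suffix region. Values of kind (c) lie in $[1,\hmax]\cup[2n+1-\hmax,2n]$ for the new $\hmax$. A newly completed pair whose hidden index $j$ is of kind (c) is therefore covered. A newly completed pair whose clear index is revealed only through (c) cannot arise, since (c) reveals only hidden-string positions. So the only remaining case is a newly completed pair in which one of the two indices is $q$ itself (or its partner, which is handled symmetrically since the partner is also a queried hidden index), and the other index was revealed in an earlier step, or earlier in the same step.

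This remaining case is exactly the trigger condition of Stage II. If $q>2n$ is a clear index and the hidden index $j=h(u,3n-q+1)$ was already revealed, Stage II moves $\hmax$ to $b\cdot\blk(j)$ unless $j$ is already outside $[\hmax+1,2n-\hmax]$; either way $j$ ends up covered. If $q\le 2n$ is a hidden non-$\star$ index whose clear partner $3n-t(u,q)+1$ was already queried, then either $q$ is already in the revealed region, or $q\in[\hmax+1,2n-\hmax]$ and Stage II advances $\hmax$ to cover $j=q$.

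The main subtlety is bookkeeping of how a clear-string value can enter the transcript, and of ordering. Clear-string values are revealed only by direct queries, never by frontier advances. So the ``other index already revealed'' in Stage II is the only path by which a hidden value and a clear value become jointly known. In the other ordering, a hidden value revealed via a frontier advance, with the clear value queried earlier, forces $j\le\hmax$ automatically. Stage II is applied after Stage I, so the $\hmax$ it compares against is the up-to-date one, and every case is covered. This completes the induction and proves \cref{obs:frontier}.
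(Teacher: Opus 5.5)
Your proof is correct, and it is the reasoning the paper leaves implicit (the paper states this observation without proof). Frontier advances reveal only hidden positions inside the region, clear values enter the transcript only through queries, and Stage~II exists exactly to cover the unique pair a new query can complete; your induction over steps, using that $\hmax$ never decreases, spells this out. One pedantic point: the sentinel matching pair $(2n+1,n+1)$ from the definition of matching pairs must be excluded, since its index $2n+1$ is never in the region; this is harmless because $\Bad$ only concerns indices $j\in(n/3,2n-n/3]$.
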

 
Next, we define an event $\mathsf{BAD}_1$, which is necessary for
$\mathsf{BAD}$ to occur, and show in \cref{lem:bad1-prob} that $\mathsf{BAD}_1$
occurs with small probability.
 
\begin{definition}[Bad event]\label{def:bad1}
    Let $\mathsf{BAD}_1$ be the event that $\hmax$ is greater than $n/3$ at any point during the execution of $\mathcal{A}$.
\end{definition}
 
\begin{lemma}\label{lem:bad-le-bad1}
    $\Pr[\mathsf{BAD}] \le \Pr[\mathsf{BAD}_1]$.
\end{lemma}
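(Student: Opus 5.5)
We will show that the event $\mathsf{BAD}$ implies $\mathsf{BAD}_1$, i.e., that $\mathsf{BAD}\subseteq\mathsf{BAD}_1$ as events over the joint randomness of the input. The inequality of probabilities then follows by monotonicity.

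Suppose $\mathsf{BAD}$ occurs. Then there is a matching pair $(j,k)$ w.r.t.\ the hidden string with $j\in(n/3,2n-n/3]$ such that the transcript eventually contains the values at both $j$ and $3n-k+1$. We invoke \cref{obs:frontier} at the end of the execution (or at the first moment both values are in the transcript). It gives $j\in[1,\hmax]\cup[2n+1-\hmax,2n]$ for the current value of $\hmax$. If $j\le \hmax$, then $\hmax\ge j>n/3$. If $j\ge 2n+1-\hmax$, then $\hmax\ge 2n+1-j\ge 2n+1-(2n-n/3)=n/3+1>n/3$. In either case $\hmax$ exceeds $n/3$ at some point during the execution, which is exactly $\mathsf{BAD}_1$. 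Note that $\hmax$ is nondecreasing, so it does not matter at which moment we apply the observation, as long as both values are already revealed.

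The only step that needs care is the justification of \cref{obs:frontier} itself, since the argument above depends on it. To verify it, we would consider the first moment both values of the pair are revealed and split into cases.

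\emph{First case: the later of the two values is revealed by an advance of $\hmax$.} Then the hidden index $j$ lies in the frontier region by construction, because advances reveal exactly that region.

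\emph{Second case: the later value is revealed by a direct query $q$.} Then the pair is a matching pair whose other index was already revealed, which is precisely the trigger for Stage~II. Stage~II moves $\hmax$ to $b\cdot\blk(j)$ whenever $j$ lies outside the frontier region, so after the update $j$ is inside it.

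One subtlety is when the hidden index $j$ is revealed only through a block filter. This cannot complete a pair, because filters reveal only $\star$ locations and not symbol values, so this case needs no separate treatment. With the observation in place, the containment $\mathsf{BAD}\subseteq\mathsf{BAD}_1$, and hence the lemma, follows immediately.
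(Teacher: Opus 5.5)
Your proof is correct and follows the paper's argument: the paper also derives $\mathsf{BAD}\subseteq\mathsf{BAD}_1$ from \cref{obs:frontier}, since a frontier region with $\hmax\le n/3$ cannot contain an index in $(n/3,2n-n/3]$. Your case check of the observation itself (reveals by advances versus Stage~II triggers, with filters never revealing non-$\star$ values), together with your remark that $\hmax$ is nondecreasing, soundly fills in details the paper leaves implicit.
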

 
\begin{proof}
    Recall that $\mathsf{BAD}$ is the event that there exists a matching pair $(j,k)$ w.r.t.\ the hidden part of input $x$, such that $j \in (n/3, 2n-n/3]$ and $\mathcal{T}[\mathcal{A}, x]$ has revealed the values at both indices $j$ and $3n-k+1$. 
    Conditioned on $\overline{\mathsf{BAD}_1}$ occurring, $\hmax$ is at most $n/3$, which by \cref{obs:frontier} then implies that the transcript of $\mathcal{A}$ does not have such a matching pair.
    Consequently, $\mathsf{BAD}_1$ must occur for $\mathsf{BAD}$ to occur, and thus $\Pr[\mathsf{BAD}] \le \Pr[\mathsf{BAD}_1]$.
\end{proof}
 
We use the following bound on the distribution of sums of independent
uniform random variables.
We prove it in \cref{sec:anti-conc}.
 
\begin{lemma}\label{lem:block-anticoncentration}
    There exists an absolute constant $C_0 \ge 1$ such that the following holds for all $s \in \mathbb{N}$ and large enough $b$.
    If $W_1, \ldots, W_s$ are independent random variables chosen uniformly from $\{0, 1, \ldots, b\}$, then for all
    $t \in \mathbb{Z}$, we have $\Pr \left[ \sum_{i=1}^{s} W_i = t \right] \le \frac{C_0}{b \sqrt{s}}$.
\end{lemma}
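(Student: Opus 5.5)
We need a local limit / anticoncentration bound for sums of $s$ i.i.d.\ uniform variables on $\{0,\dots,b\}$. The plan is to use Fourier inversion. Let $\phi(\theta)=\frac{1}{b+1}\sum_{k=0}^{b}e^{ik\theta}$ be the characteristic function of a single $W_i$. Then for every integer $t$,
\[
\Pr\Big[\sum_{i=1}^s W_i=t\Big]=\frac{1}{2\pi}\int_{-\pi}^{\pi}\phi(\theta)^s e^{-it\theta}\,d\theta\le\frac{1}{2\pi}\int_{-\pi}^{\pi}|\phi(\theta)|^s\,d\theta .
\]
We have $|\phi(\theta)|=\frac{|\sin((b+1)\theta/2)|}{(b+1)|\sin(\theta/2)|}$, so it remains to bound this integral by $C_0/(b\sqrt s)$.

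We split the integral into two ranges.

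\emph{Small $\theta$.} For $|\theta|\le c/(b+1)$ with a small absolute constant $c$, a Taylor expansion of $\sin x/x$ gives
\[
|\phi(\theta)|\le 1-c'(b+1)^2\theta^2\le \exp\big(-c'(b+1)^2\theta^2\big).
\]
This uses $(b+1)\sin(\theta/2)\ge (b+1)\theta/2\cdot(1-\theta^2/24)$ together with the bound on $|\sin x|/x$; the variance of $W_i$ is of order $b^2$, which matches. Integrating the Gaussian bound contributes $O\big(1/((b+1)\sqrt s)\big)$.

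\emph{Large $\theta$.} For $c/(b+1)\le|\theta|\le\pi$ we use the elementary bound $|\phi(\theta)|\le\frac{1}{(b+1)|\sin(\theta/2)|}\le\frac{\pi}{(b+1)|\theta|}$, valid since $|\sin(\theta/2)|\ge|\theta|/\pi$. We also need a uniform bound $|\phi(\theta)|\le 1-\eta$ for some absolute $\eta>0$ on this whole range. This follows because $\sin x/x$ stays bounded away from $1$ once $x\ge c/2$, and for larger $x$ the first bound takes over. We then split again into two pieces.

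On $c/(b+1)\le|\theta|\le K/(b+1)$, with $K$ a large constant, we substitute $x=(b+1)\theta$. The integral becomes $\frac{1}{b+1}\int |g(x)|^s\,dx$, where $|g|\le 1-\eta$ uniformly on a range of constant length. This contributes at most $\frac{2K}{b+1}(1-\eta)^s\le \frac{C}{(b+1)\sqrt s}$.

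On $|\theta|\ge K/(b+1)$ we have $|\phi(\theta)|\le\pi/((b+1)|\theta|)\le\pi/K<1/2$. For $s\ge2$ the contribution is at most
\[
\int_{K/(b+1)}^{\infty}\Big(\frac{\pi}{(b+1)\theta}\Big)^s d\theta=\frac{1}{b+1}\cdot\frac{\pi^s}{(s-1)K^{s-1}},
\]
which is $O\big(2^{-s}/(b+1)\big)$ and hence also $O\big(1/((b+1)\sqrt s)\big)$. For $s=1$ the statement is trivial, since the probability is at most $1/(b+1)$.

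Summing the contributions and using $b+1\ge b$ yields $\Pr[\sum W_i=t]\le C_0/(b\sqrt s)$ for an absolute constant $C_0$.

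The main obstacle is making the small-$\theta$ and intermediate estimates uniform in $b$, namely checking that $|\sin((b+1)\theta/2)|/((b+1)|\sin(\theta/2)|)$ is bounded by a function of $x=(b+1)\theta$ alone, up to $1+o(1)$ factors for large $b$. This is where the hypothesis that $b$ is large enough enters, since then $\sin(\theta/2)\approx\theta/2$ on the relevant range $|\theta|\le K/(b+1)$.

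As an alternative, one can argue combinatorially instead of with Fourier analysis. The count $\#\{(w_1,\dots,w_s):\sum w_i=t\}$ is maximized at the central value $t=sb/2$, by unimodality and symmetry. By Chebyshev, a window of $\Theta(b\sqrt s)$ central values around it carries mass at least $1/2$. However, turning this into an upper bound on the maximum requires a log-concavity or flatness argument, so Fourier inversion is the cleaner route.
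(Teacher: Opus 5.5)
Your Fourier-inversion proof is correct, and it takes a different route from the paper's. The paper's proof is purely elementary and gives explicit constants ($C_0=5$ for $b\ge 15$). When the support has even size $2m$, it writes $W_i=mY_i+R_i$, with $Y_i$ a fair coin and $R_i$ an independent uniform variable on $\{0,\dots,m-1\}$. It then bounds $\Pr[\sum_i Y_i=y]\le 1/\sqrt{s}$ by a Wallis-type product estimate for central binomial coefficients. Since $\sum_i R_i$ is exactly uniform modulo $m$, this yields $1/(m\sqrt{s})$. When $b$ is even, so the support size $b+1$ is odd, the paper conditions on the set of indices with $W_i=b$. It bounds the probability that more than half the indices take this rare value by a union bound.

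Your argument splits $[-\pi,\pi]$ into three ranges:
\begin{itemize}
\item a Gaussian range $|\theta|\le c/(b+1)$;
\item an intermediate range $c/(b+1)\le|\theta|\le K/(b+1)$, where $|\phi|\le 1-\eta$;
\item a tail, where $|\phi(\theta)|\le \pi/((b+1)|\theta|)<1/2$.
\end{itemize}
The estimates you sketch check out. In the Gaussian range, the $\theta^2/12$ correction coming from $1/\sin(\theta/2)$ is dominated by the $(b+1)^2\theta^2/24$ term once $b\ge 1$. The intermediate range needs $b$ large so that $\sin(\theta/2)\approx\theta/2$ there and $K/(b+1)\le\pi$. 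Treating $s=1$ separately is necessary, because the tail integral diverges there. Your tail computation drops the harmless factor $2\cdot\frac{1}{2\pi}$.

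As for what each buys: your approach avoids the parity case split. It also adapts to other step distributions whose characteristic function has a Gaussian dip of width about $1/b$ at $0$ and decays like $1/(b|\theta|)$ away from it. The cost is non-explicit constants and a real reliance on $b$ being large. The paper's approach exploits the exact arithmetic structure of the uniform distribution (the bijection $\{0,1\}\times\{0,\dots,m-1\}\to\{0,\dots,2m-1\}$). This gives a short, fully explicit bound valid uniformly for all $s\ge 1$ with no analysis.
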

 
\begin{lemma}\label{lem:bad1-prob}
  $\Pr[\mathsf{BAD}_1] = o(1)$.
\end{lemma}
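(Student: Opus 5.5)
We will bound the expected total advance of $\hmax$ over the whole execution and then apply Markov's inequality. Let $q = o(n^{2/5})$ be the number of queries, so $\mathsf{BAD}_1$ is the event that the total advance exceeds $n/3$, i.e.\ $n/(3b) = n^{2/5}/3$ blocks.

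We split the advance into the two stages and handle them separately.

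\textbf{Stage I.} Every queried index $h_k$ is absorbed into the frontier at most once, and a free advance of $\ell_{k^\star}$ blocks absorbs $k^\star \ge \ell_{k^\star}/2$ queried indices. Hence the total Stage I advance is at most $2\cdot(2q)$ blocks deterministically. The factor $2q$ accounts for symmetric pairs. This is $o(n^{2/5})$ blocks, i.e.\ $o(n)$ positions.

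\textbf{Stage II.} We bound the probability that a single query triggers a direct advance of $d$ blocks. Take a query $q$ that creates a new matching pair $(j,k)$ whose hidden index $j$ lies $d \ge 1$ blocks beyond the current frontier.

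Condition on the transcript so far. This fixes the revealed prefix and suffix (whole blocks), the filters of queried blocks, and the revealed values. The true index of a hidden position $j$ in block $\hmax/b + d$ equals the known count up to the frontier, plus the sum of the weights $W_i$ of intervening blocks, plus the offset inside the block. If $d-1-(\text{number of queried blocks among them})$ intervening blocks are unqueried, their weights are still independent and uniform. The only subtlety is conditioning on earlier non-matches, which we return to below. By \cref{lem:block-anticoncentration}, the probability that a particular pair (hidden query, clear query) matches is at most $C_0/(b\sqrt{s})$, where $s$ is the number of unqueried blocks between the frontier and $j$.

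The key point is the reveal rule. Because Stage I did not fire, the $k$-th queried index beyond the frontier has $\ell_k > 2k$. So between the frontier and any queried block at distance $\ell$, at most about $\ell/2$ blocks are queried, leaving $s \ge \ell/2$ unqueried blocks.

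Summing over pairs, a new query can match any of at most $2q$ previously revealed counterparts. The expected advance contributed is then
\[
  \sum_{\text{pairs}} \ell \cdot \frac{C_0}{b\sqrt{\ell/2}} = O\!\left(\frac{q\sqrt{L}}{b}\right)
\]
per query, where $L \le n/b$ bounds the distance. Over $q$ queries the total expected Stage II advance in blocks is $O(q^2\sqrt{n/b}/b)$. We must compare this with $n/b$; in positions the condition reads
\[
  q^2\sqrt{n/b} = o(n), \qquad\text{i.e.}\qquad q^2 = o\bigl(n^{1/2} b^{1/2}\bigr) = o(n^{4/5}),
\]
which is exactly $q = o(n^{2/5})$. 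Markov's inequality then gives $\Pr[\mathsf{BAD}_1] = o(1)$.

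\textbf{Main obstacle.} The hard part is justifying the anticoncentration bound under conditioning on the events "earlier queried pairs did not match." These events constrain the weight sums.

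The plan is to argue that each such non-match event excludes at most one value of the relevant partial sum. We would then either bound the conditional point probability by a constant factor of the unconditional one, or charge the conditioning loss to the amortized progress. The latter uses that a non-match at distance $\ell$ has probability $1 - O(1/(b\sqrt{\ell}))$, which is close to $1$. Accumulating the losses over all queries therefore costs only a $(1+o(1))$ factor.

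We also need to check that revealed filters of queried blocks are independent of the unqueried weights. This holds by the block-independence in \cref{def:filter}.
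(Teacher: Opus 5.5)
Your proposal follows the paper's proof: the same split into free and direct advance, the same deterministic $O(b)$-per-queried-pair bound on free advance, and the same use of $\ell_k>2k$ to guarantee more than $\ell_k/2$ unrevealed blocks before $B_{\blk(h_k)}$. From there it uses the same per-pair bound $O(1/(b\sqrt{\ell_k}))$ from \cref{lem:block-anticoncentration}, obtains $o(n^{3/5})$ expected direct advance per query, and finishes with linearity and Markov. One wording point: you analyze the state right after Stage~I, and there $\ell_k>2k$ holds for all remaining queried indices whether or not Stage~I fired. If it fired, this follows from the maximality of $k^\star$ after re-indexing, not from Stage~I ``not firing''.

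The step you leave as a plan, conditioning on earlier non-matches, is a genuine gap. It needs an explicit summation that is not automatic. There can be $\delta\gamma=o(n^{4/5})$ non-match constraints, and bounding each one's failure probability by $O(1/b)$ gives only the useless total $o(n^{1/5})$. So ``each non-match has probability close to $1$'' does not by itself give a $(1+o(1))$, or even a constant-factor, loss. You must sum the per-pair losses $O(1/(b\sqrt{\ell_k}))$ and show the sum is $o(1)$. Likewise, ``each non-match excludes one value of a partial sum'' does not directly control the point probability of a different, correlated partial sum.

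The paper closes the step with your first option, made quantitative. Let $I$ be the event that no queried symmetric pair between the frontiers (other than the current one) is matched with any of $c_1,\dots,c_\delta$. By a union bound and a second use of $\ell_k>2k$,
\[
\Pr_U[\overline{I}] \le \delta\sum_{k=1}^{\gamma}\frac{\sqrt{2}\,C_0}{b\sqrt{\ell_k}} < \frac{\delta}{b}\sum_{k=1}^{\gamma}\frac{C_0}{\sqrt{k}} \le \frac{2C_0\,\delta\sqrt{\gamma}}{b} = o(1).
\]
Here $\delta\sqrt{\gamma}=o(n^{3/5})=o(b)$ is again exactly where $q=o(n^{2/5})$ is needed. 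Hence $\Pr_U[I]\ge 1/2$. For every event $E$ determined by the unrevealed filters, $\Pr[E\mid\text{transcript}]=\Pr_U[E\mid I]\le \Pr_U[E]/\Pr_U[I]\le 2\Pr_U[E]$. With this factor $2$, all your point-probability bounds, and hence the rest of your computation, go through unchanged.
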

 
\begin{proof}
    For $\hmax$ to be greater than $n/3$, its total \emph{advance} must be greater than $n/3$.
    Divide the \emph{advances} into \emph{free} and \emph{direct} advances.
    
    Observe that the total \emph{free advance} is $o(n)$. 
    Any time there is a \emph{free advance}, $\hmax$ advances by $b \cdot \ell_{k^\star} \le 2k^\star b$, and the $k^\star$ queried indices $h_1, \ldots, h_{k^\star}$ move into $[1, \hmax]$, where they stay.
    So every queried symmetric pair contributes to \emph{free advance}
    at most once, by at most $2b$.
    In total, the \emph{free advance} is thus at most $o(\qb) \cdot 2b = o(\qb) \cdot 2n^{3/5} = o(n)$.
    
    Next we bound the total \emph{direct advance}.
    To do so, we first compute the expected \emph{direct advance} for an arbitrary query $q$, conditioned on the transcript.
    Consider the moment right after Stage~I for $q$, before the value at $q$ is revealed. 
    Everything up to the \emph{frontiers} is revealed, along with the filters of the queried blocks and the values at the other queried indices.
    Let $h_1 < h_2 < \cdots < h_\gamma$ and $\ell_1 \le \cdots \le \ell_\gamma$ be defined as in Stage~I, for the current value of $\hmax$ (right after \emph{free advance}). 
    Let $c_1, \ldots, c_\delta$ be the indices of the \emph{clear string} queried before $q$; note that $\gamma, \delta = o(\qb)$. 
    Since this is after the \emph{free advance} stage,\footnote{If $\hmax$ advanced during the \emph{free advance} stage, the remaining queries between the frontiers were re-indexed and their block distances decreased by $\ell_{k^\star}$.
   However, $\ell_k \le 2k$ is still \emph{not satisfied} under the new indexing.
    Otherwise, the index $h_{k^\star + k}$ would have satisfied $\ell_{k^\star+k} \le 2(k^\star + k)$ under the original indexing, despite $k^\star + k$ being greater than $k^\star$.}
    we have
    \begin{equation}\label{eq:lk-min-length}
        \ell_k > 2k \quad \text{for all } k \in [\gamma].
    \end{equation}
    We call a block \emph{revealed} if it is one of the first $\hmax/b$ blocks or it is queried, and \emph{unrevealed} otherwise.
    Let $U$ be the set of indices of unrevealed blocks.
    
    Consider $k \in [\gamma]$.
    By construction of the hidden string, exactly one of the indices $h_k$ and $2n+1-h_k$ holds a non-$\star$ symbol, and which one is determined by $X_{h_k}$, whose value is revealed in the transcript.
    We say that the symmetric pair $\{h_k, 2n+1-h_k\}$ is \emph{matched} with an index $c$ \srnote{I am having trouble parsing this sentence.} of the \emph{clear string} if its non-$\star$ index $j \in \{h_k, 2n+1-h_k\}$ satisfies that $(j, 3n+1-c)$ is a matching pair.
    The true index of the non-$\star$ symbol is
    \begin{align*}
        t(u, h_k) &= \textstyle\sum_{i=1}^{h_k} X_i
        && \text{when } X_{h_k} = 1, \text{ and}\\
        t(u, 2n+1-h_k) &= n + 1 - \textstyle\sum_{i=1}^{h_k} (1 - X_i)
        && \text{when } X_{h_k} = 0.
    \end{align*}
    Each of these is the sum, over the blocks preceding $B_{\blk(h_k)}$, of $W_i$ (respectively, $b - W_i$), plus a contribution of the block $B_{\blk(h_k)}$ itself.
    Except for the unrevealed blocks, all the other terms are already determined by the transcript.
    Of the $\ell_k - 1$ blocks strictly between the left frontier and $B_{\blk(h_k)}$, only the blocks of $h_1, \ldots, h_{k-1}$ have already been queried.
    So, by \eqref{eq:lk-min-length}, at least $\ell_k - k > \ell_k/2$ of them are unrevealed.
    Since $W_i$ is uniform over $\{0, \ldots, b\}$, so is $b-W_i$, and thus the true index of the non-$\star$ symbol equals a number determined by the transcript, plus or minus a sum of more than $\ell_k/2$ independent random variables chosen uniformly from $\{0, \ldots, b\}$.
    By \cref{lem:block-anticoncentration}, for every index $c$ of the \emph{clear string},
    \begin{equation}\label{eq:single-match-prob}
        \Pr_U\left[\{h_k, 2n+1-h_k\} \text{ is matched with } c\right]
        \le \frac{C_0}{b\sqrt{\ell_k/2}} = \frac{\sqrt{2}\, C_0}{b\sqrt{\ell_k}}.
    \end{equation}
    Additionally, we know that no queried index of the \emph{hidden string} that lies between the \emph{frontiers} forms a matching pair with a queried index
    of the \emph{clear string}.
    Formally, let $I$ be the event that, for all $k \in [\gamma]$ with $q \notin \{h_k, 2n+1-h_k\}$, the pair $\{h_k, 2n+1-h_k\}$ is not matched with any of $c_1, \ldots, c_\delta$.
    Given the transcript, event $I$ affects the filter distribution of the unrevealed blocks.

    By a union bound over the conditions defining $I$, together with \eqref{eq:lk-min-length} and \eqref{eq:single-match-prob},
    \[
        \Pr_U\left[\,\overline{I}\,\right]
        \le \delta \sum_{k=1}^{\gamma}
        \frac{\sqrt{2}\, C_0}{b\sqrt{\ell_k}}
        < \delta \sum_{k=1}^{\gamma} \frac{C_0}{b\sqrt{k}}
        \le \delta \cdot \frac{2 C_0 \sqrt{\gamma}}{b}
        = \frac{o(\qb \cdot n^{1/5})}{n^{3/5}} = o(1).
    \]
    Hence, $\Pr_U[I] \ge 1/2$ for all sufficiently large $n$.
    For any event $E$ determined by the filters of the unrevealed blocks, we get
    \begin{equation}\label{eq:conditioned-match=prob}
        \Pr[E \mid \text{transcript}] = \Pr_U[E \mid I]
        = \frac{\Pr_U[E \cap I]}{\Pr_U[I]}
        \le \frac{\Pr_U[E]}{\Pr_U[I]} \le 2 \Pr_U[E].
    \end{equation}
    
    Now we analyze the expected \emph{direct advance} for the query $q$,
    conditioned on the transcript, considering two cases.
    
    \textbf{Case 1: $q \le 2n$.}
    Assume that the symmetric pair containing $q$ lies between the \emph{frontiers} after Stage~I (otherwise, its
    \emph{direct advance} is $0$).
    Then this pair is $\{h_r, 2n+1-h_r\}$ for some $r \in [\gamma]$.
    A \emph{direct advance} occurs only if this pair is matched with one of $c_1, \ldots, c_\delta$, and then $\hmax$ advances by $b \cdot \ell_r$.
    By \eqref{eq:single-match-prob}, \eqref{eq:conditioned-match=prob}, and a union bound over the $\delta = o(\qb)$ queries in
    the \emph{clear string}, this happens with probability at most
    $o(\qb) \cdot \frac{2\sqrt{2}\, C_0}{b\sqrt{\ell_r}}$.
    Since $\ell_r \le n/b$, the expected \emph{direct advance} is at most
    \[
        b\,\ell_r \cdot o(\qb) \cdot \frac{2\sqrt{2}\, C_0}{b\sqrt{\ell_r}}
        = o(\qb) \cdot 2\sqrt{2}\, C_0\, \sqrt{\ell_r}
        \le o(\qb) \cdot 2\sqrt{2}\, C_0\, \sqrt{n/b}
        = o(\qb) \cdot n^{1/5} = o(n^{3/5}).
    \]
    
    \textbf{Case 2: $q > 2n$.} A \emph{direct advance} occurs only if some pair $\{h_k, 2n+1-h_k\}$ with $k \in [\gamma]$ is matched with $q$, and then
    $\hmax$ advances by $b \cdot \ell_k$.
    By \eqref{eq:single-match-prob} and \eqref{eq:conditioned-match=prob}, the expected \emph{direct advance} is at most
    \[
    \sum_{k=1}^{\gamma} b\,\ell_k \cdot
    \frac{2\sqrt{2}\, C_0}{b\sqrt{\ell_k}}
    = 2\sqrt{2}\, C_0 \sum_{k=1}^{\gamma} \sqrt{\ell_k}
    \le o(\qb) \cdot 2\sqrt{2}\, C_0\, \sqrt{n/b} = o(n^{3/5}).
    \]
    
    In both cases, the expected \emph{direct advance} of a query, conditioned on the transcript, is at most $o(n^{3/5})$.
    As this bound holds for an arbitrary transcript, it also bounds the unconditional expectation.
    By linearity of expectation over the at most $o(\qb)$ queries, the expected total \emph{direct advance} is at most
    $o(\qb) \cdot o(n^{3/5}) = o(n)$.
    Thus, the expected total \emph{advance} is $o(n)$. By Markov's inequality, the probability that the total advance exceeds $n/3$ is $o(1)$.
\end{proof}
 
We now complete the proof of \cref{thm:HS-lb-adaptive}.
 
\begin{proof}[Proof of \cref{thm:HS-lb-adaptive}]
  We use Yao's minimax principle to show that $\mathcal{D}^+$ and
  $\mathcal{D}^-$ are hard to distinguish with $o(n^{2/5})$ queries. For any
  deterministic (adaptive) tester $\mathcal{A}$ making $o(n^{2/5})$ queries,
  \begin{align*}
    \left| \Pr_{x \sim \mathcal{D}^+}[\mathcal{A}(x) \text{ accepts}]
      - \Pr_{y \sim \mathcal{D}^-}[\mathcal{A}(y) \text{ accepts}] \right|
    &\le \Pr[\mathsf{BAD}] && \text{[\cref{lem:indistinguishable-dist}]}\\
    &\le \Pr[\mathsf{BAD}_1] && \text{[\cref{lem:bad-le-bad1}]}\\
    &= o(1). && \text{[\cref{lem:bad1-prob}]}
  \end{align*}
  By \cref{clm:yao}, this completes the proof of \cref{thm:HS-lb-adaptive}.
\end{proof}

\subsubsection{Nonadaptive lower bound}\label{sec:nonadaptive-lb}

In this section, we prove the following nonadaptive lower bound for testing \HS.

\begin{theorem}[\HS nonadaptive lower bound]\label{thm:HS-lb-nonadaptive}
   For all sufficiently small constant $\eps > 0$, every nonadaptive $\eps$-tester for \HS has query complexity $\Omega(n^{1/2})$ on strings of length $3n$.
\end{theorem}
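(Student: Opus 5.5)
We apply Yao's minimax principle (\cref{clm:yao}) to $\Dyes$ and $\Dno$ from \cref{def:dyes-dno}, now with a single global weighted filter. Pick $W$ uniformly from $\{0,1,\dots,n\}$, then a uniformly random subset $F\subseteq[n]$ of size $W$, and set $X_j=1$ iff $j\in F$. Equivalently, $(X_1,\dots,X_n)$ is a Pólya urn started with one ball of each color, adding one ball of the drawn color after each draw. The sequence is therefore exchangeable: conditioned on $p\sim U[0,1]$, the $X_j$ are i.i.d.\ Bernoulli$(p)$. By \cref{lem:far} and \cref{lem:indistinguishable-dist}, it suffices to show that for a fixed deterministic nonadaptive $\cA$ with query set $Q$, $|Q|=o(n^{1/2})$, we have $\Pr[\Bad]=o(1)$. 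Because $\cA$ is nonadaptive, the query set is fixed and no reveal mechanism or conditioning on earlier answers is needed.

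First, reduce to a union bound. Symmetrize as in the adaptive case, so that $Q$ contains $q$ hidden symmetric pairs $\{i,2n+1-i\}$ and $\delta$ clear indices $c$, with $q,\delta=o(\sqrt n)$. $\Bad$ requires some hidden pair $i\in(n/3,n]$ and clear index $c$ with the non-$\star$ element of the pair having true index $k=3n+1-c$, where $k\in(n/3,2n/3]$. Then
\[
\Pr[\Bad]\le\sum_{i}\sum_{c}\Pr\bigl[\text{pair } i \text{ matched with } c\bigr],
\]
and it suffices to show that each term is $O(1/i)\le O(1/n)$ for $i>n/3$. This gives $\Pr[\Bad]\le q\delta\cdot O(1/n)=o(1)$.

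Second, bound a single term. If $X_i=1$, the true index is $S_i=\sum_{j\le i}X_j$. If $X_i=0$, it is $n+1-(i-S_i)$. So each term is at most $\max_t\Pr[S_i=t]$. Under the Pólya urn started from $(1,1)$, $S_i$ is exactly uniform on $\{0,1,\dots,i\}$: its law is Beta-binomial$(i,1,1)$, which can also be seen by integrating $\binom{i}{t}p^t(1-p)^{i-t}$ over $p\in[0,1]$. Hence $\Pr[S_i=t]\le 1/(i+1)\le 3/n$ for $i>n/3$.

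Verifying this urn identity is the only real step; I would derive it directly from the hypergeometric description, $\Pr[S_i=t]=\sum_W\frac{1}{n+1}\binom{W}{t}\binom{n-W}{i-t}/\binom{n}{i}$, using the identity $\sum_W\binom{W}{t}\binom{n-W}{i-t}=\binom{n+1}{i+1}$, which gives exactly $1/(i+1)$.

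One subtlety is that the events involve specific pairs $(i,c)$ jointly with the random bits $v$. However, matching depends only on the filter, and $\Bad$ is defined purely through the positions of matching pairs, so the union bound is legitimate. Combining the steps, $\Pr[\Bad]=O(q\delta/n)=o(1)$ whenever $q+\delta=o(\sqrt n)$, which proves \cref{thm:HS-lb-nonadaptive}.
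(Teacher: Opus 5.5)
Your proposal is correct and is essentially the paper's proof: Yao's principle with \cref{lem:indistinguishable-dist}, a union bound over the $o(n)$ queried (hidden, clear) index pairs, and the fact that $\sum_{j\le i}X_j$ is uniform on $\{0,\dots,i\}$. You verify that last fact via the hypergeometric identity, whereas the paper uses the P\'olya-urn string-probability computation of \cref{lem:polya-bound}; this difference is immaterial. One harmless slip: for a symmetric pair the event splits over $X_i\in\{0,1\}$ with two different target values of $S_i$, so each term is at most $2\max_t\Pr[S_i=t]$ rather than $\max_t\Pr[S_i=t]$, which only changes the constant.
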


{\bf Filter distribution.}
The underlying filter distribution for $\Dyes$ and $\Dno$ is as follows.
Choose {\em weight} $W \sim \{ 0, 1, \dots, n \}$ uniformly at random and let {\em filter} $F$ be a uniformly random subset of $[n]$ of size~$W$.
Then $X_1, \dots, X_n$ are defined as $X_i := \mathbf{1}[i \in F]$ for all $i \in [n]$.

Fix a deterministic (nonadaptive) tester $\cA$ making $o(n^{1/2})$ queries.
We prove the following bound on the weight distribution of the filter.

\begin{lemma}\label{lem:polya-bound}
    If $X_1, \dots, X_n$ are as defined by the filter distribution, then for all $\alpha \in [n]$ and all $\beta$, we have $\Pr[\sum_{i = 1}^{\alpha} X_i = \beta] \le 1/\alpha$. 
\end{lemma}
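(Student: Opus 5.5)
The plan is to use the Pólya-urn description of the filter distribution. Choosing $W$ uniformly from $\{0,\dots,n\}$ and then a uniformly random subset of size $W$ makes $(X_1,\dots,X_n)$ exchangeable. Moreover, the probability of any particular sequence with $k$ ones is
\[
\frac{1}{n+1}\binom{n}{k}^{-1} = \frac{k!\,(n-k)!}{(n+1)!}=\int_0^1 p^k(1-p)^{n-k}\,dp .
\]
Hence the vector is exactly a mixture of i.i.d.\ Bernoulli$(p)$ sequences with $p\sim\mathrm{Unif}[0,1]$, equivalently a Pólya urn started with one white and one black ball. This identity is the first step; it is a direct Beta-integral computation.

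Second, marginalize to the first $\alpha$ coordinates. Since the prefix of a mixture of i.i.d.\ sequences is the same mixture on length $\alpha$,
\[
\Pr\Big[\sum_{i=1}^{\alpha}X_i=\beta\Big]=\binom{\alpha}{\beta}\int_0^1 p^\beta(1-p)^{\alpha-\beta}\,dp=\binom{\alpha}{\beta}\frac{\beta!\,(\alpha-\beta)!}{(\alpha+1)!}=\frac{1}{\alpha+1}
\]
for every $\beta\in\{0,\dots,\alpha\}$, and $0$ for all other $\beta$. The prefix sum is therefore uniform on $\{0,\dots,\alpha\}$, which gives the claimed bound $1/(\alpha+1)\le 1/\alpha$.

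An alternative that avoids integrals is a direct counting argument. Condition on $W=k$, so the prefix sum is hypergeometric. Then
\[
\Pr\Big[\sum_{i\le\alpha}X_i=\beta\Big]=\frac{1}{n+1}\sum_k \frac{\binom{\alpha}{\beta}\binom{n-\alpha}{k-\beta}}{\binom{n}{k}},
\]
and a Vandermonde-type identity $\sum_k \binom{k}{\beta}\binom{n-k}{\alpha-\beta}=\binom{n+1}{\alpha+1}$ evaluates this sum after rewriting $\binom{\alpha}{\beta}\binom{n-\alpha}{k-\beta}/\binom{n}{k}=\binom{k}{\beta}\binom{n-k}{\alpha-\beta}/\binom{n}{\alpha}$.

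The only real obstacle is justifying exchangeability and the mixture identity cleanly. I expect that step to be a short calculation, and I would present the hypergeometric/Vandermonde route as a fallback to keep the argument purely combinatorial.
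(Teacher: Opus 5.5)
Your proposal is correct and follows essentially the paper's proof. The paper likewise identifies the filter distribution with a Pólya urn started from one ball of each type, by checking that every sequence of weight $w$ has probability $w!(n-w)!/(n+1)!$, and then computes $\binom{\alpha}{\beta}\beta!(\alpha-\beta)!/(\alpha+1)! = 1/(\alpha+1)$ for the length-$\alpha$ prefix; your Beta-integral phrasing and your hypergeometric/Vandermonde fallback are just alternative verifications of the same identity.
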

\begin{proof}
    The variables $X_1, \dots, X_n$ can equivalently be generated by the following Pólya urn process:
    start with one $0$-ball and one $1$-ball, and draw bits sequentially, returning the drawn ball to the urn together with an additional ball of the same type.
    After $n$ draws, the resulting sequence has exactly the same distribution as $X$.
    Indeed, every string of weight $w\in\{0\}\cup[n]$ has probability
    \[
        \frac{w!(n-w)!}{(n+1)!}
        =
        \frac{1}{n+1}\cdot \frac{1}{\binom{n}{w}},
    \]
    which is the probability of first choosing $W \sim \{ 0, \dots, n\}$ and then sampling a uniformly random subset of $[n]$ of size $W$.
    Using the Pólya urn description, the probability that the first $\alpha$ random variables have weight $\beta$ is equal to
    \[
        \binom{\alpha}{\beta} \cdot \frac{\beta!(\alpha-\beta)!}{(\alpha+1)!} = \frac{1}{\alpha+1} \le \frac{1}{\alpha},
    \]
    when $\beta \in \{ 0 \} \cup [\alpha]$, and $0$ otherwise.
\end{proof}

Recall that $\Bad$ is the event that there exists a {\em matching pair} $(j, k)$ w.r.t.\ the {\em hidden string} of input $x$, such that $j \in (n/3, 2n-n/3]$ and $\cA$ has queried both indices $j$ and $3n-k+1$.

\begin{lemma}\label{lem:bad-prob}
    $\Pr[\Bad] = o(1)$.
\end{lemma}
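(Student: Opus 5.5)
We bound $\Pr[\Bad]$ by a union bound over pairs of queries. Since $\cA$ is deterministic and nonadaptive, its query set $Q$ is fixed in advance, with $|Q| = o(n^{1/2})$. Split $Q$ into hidden-string queries $Q_H \subseteq [2n]$ and clear-string queries $Q_C \subseteq (2n,3n]$. Then $\Bad$ requires some $j \in Q_H$ with $j \in (n/3, 2n-n/3]$ and some $c \in Q_C$ such that $(j, 3n-c+1)$ is a matching pair. The number of such candidate pairs $(j,c)$ is at most $|Q_H|\cdot|Q_C| = o(n)$. It therefore suffices to show that each fixed pair is matching with probability $O(1/n)$.

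Fix $j$ and $k = 3n-c+1 \in [n]$. There are two cases, according to which half of the hidden string $j$ lies in.
\begin{itemize}
\item If $j \le n$, then $(j,k)$ is matching iff $X_j = 1$ and $\sum_{i=1}^{j} X_i = k$. The condition $j > n/3$ gives $\alpha = j > n/3$, so by \cref{lem:polya-bound} this event has probability at most $1/j < 3/n$.
\item If $j > n$, write $j = 2n+1-j'$ with $j' \in [n]$; the condition $j \le 2n - n/3$ gives $j' > n/3$. The pair is matching iff $X_{j'} = 0$ and $t(u,j) = n+1-\sum_{i=1}^{j'}(1-X_i) = k$. This fixes $\sum_{i=1}^{j'} X_i$ to a specific value, so again by \cref{lem:polya-bound} with $\alpha = j' > n/3$ the probability is at most $3/n$.
\end{itemize}
The true index of the non-$\star$ symbol in the pair is thus determined by a prefix weight of the filter of length exceeding $n/3$, and the Pólya-urn bound makes this weight (nearly) uniform.

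The union bound then gives $\Pr[\Bad] \le |Q_H|\,|Q_C| \cdot 3/n = o(n)\cdot O(1/n) = o(1)$. The only mild subtlety is that the middle-third restriction on $j$ is exactly what guarantees $\alpha > n/3$ in both halves; without it, prefixes near the ends would give weak bounds. Since no adaptivity or conditioning is involved, I expect no real obstacle beyond correctly translating the matching condition for $j > n$ into a prefix-sum condition, which follows from the formula for $t(u, 2n+1-h)$ already used in the adaptive analysis.
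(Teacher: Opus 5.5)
Your proposal is correct and follows the paper's proof: a union bound over the $o(n)$ pairs of hidden and clear queries, where each pair is matching with probability at most $1/j < 3/n$ by the Pólya-urn bound (\cref{lem:polya-bound}). The only difference is that you explicitly reduce the case $j > n$ to a condition fixing the prefix sum $\sum_{i \le j'} X_i$, with $j' = 2n+1-j > n/3$. The paper dismisses that case as following by symmetry.
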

\begin{proof}
    Consider a tuple $(j, k)$ such that $j \in (n/3, n]$ (the other side follows by symmetry) and both indices $j$ and $3n-k+1$ have been queried.
    The tuple $(j, k)$ is a {\em matching pair} if $X_1 + \dots + X_j = k$ and $X_j = 1$.
    By \cref{lem:polya-bound}, $\Pr[X_1 + \dots + X_j = k] \le \frac{1}{j} < \frac{3}{n}$.
    So the probability that $(j, k)$ is a {\em matching pair} is at most $\frac{3}{n}$.
    The number of tuples of indices queried by $\cA$ is at most $o((n^{1/2})^2) = o(n)$, so by a union bound, the probability of $\Bad$ occurring is $o(1)$.
\end{proof}

We now complete the proof of \cref{thm:HS-lb-nonadaptive}.

\begin{proof}[Proof of \cref{thm:HS-lb-nonadaptive}]
    The proof is identical to the proof of \cref{thm:HS-lb-adaptive}, except that we use \cref{lem:bad-prob} to bound $\Pr[\Bad]$ rather than \cref{lem:bad-le-bad1,lem:bad1-prob}.
\end{proof}

\subsection{Implications for Dyck languages}\label{sec:dyck-m-lb}
In this section, we complete the proofs of \Cref{thm:dyck-lb,thm:dyck-lb-nonadaptive} which state that every Dyck language $D_m$ for $m\geq 2$ requires $\Omega(\qb)$ queries to test adaptively and $\Omega(n^{1/2})$ queries to test nonadaptively.
We derive these results from \cref{thm:HS-lb-adaptive,thm:HS-lb-nonadaptive} using the following reduction between testing $D_2$ and $\HS$, based on the reductions of \cite{ParnasRR03,FischerMS18}. 

\begin{lemma}\label{lem:dm-hs-reduction}
For all $m\in\bN$, where $m\geq 2$, there exists a constant $c > 0$ such that the following holds. If $D_m$ has a tester which makes $f(n,\eps)$ queries on inputs of length $n$, then $\HS$ has a tester making $f(2n,c\eps)$ queries on inputs of length $n$. Moreover, if the tester for $D_m$ is nonadaptive, then the resulting tester for $\HS$ is nonadaptive.
\end{lemma}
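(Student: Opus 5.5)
We reduce via a local symbol map. Given an input $x=u\circ v$ of length $n$ over $\{0^H,1^H,\star,0,1\}$, we build a word $\phi(x)$ of length $2n$ over $\Gamma_2\subseteq\Gamma_m$, where each symbol of $x$ is replaced by a block of two parentheses depending only on that symbol. Concretely, we use
\[
0^H\mapsto 0^L0^L,\quad 1^H\mapsto 1^L1^L,\quad 0\mapsto 0^R0^R,\quad 1\mapsto 1^R1^R,\quad \star\mapsto 0^L0^R .
\]
A $\star$ thus becomes a balanced pair that can be cancelled wherever it occurs. Each query to $\phi(x)$ is answered by one query to $x$, so the simulation preserves query count and nonadaptivity. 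The $D_m$ tester runs on $\phi(x)$ with parameter $c\eps$, so the new tester makes $f(2n,c\eps)$ queries.

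\textbf{Completeness.} Suppose $x\in\HS$. Deleting the cancelled $\star$-pairs from $\phi(x)$ leaves $\phi(T(u))\circ\phi(v)$. Since $T(u)=H(\rev v)$, the opening brackets of $\phi(T(u))$ are closed in reverse order by $\phi(v)$. Hence $\phi(x)\in D_2\subseteq D_m$.

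\textbf{Soundness.} Here I prove the contrapositive: if $\dist(\phi(x),D_m)\le \eps' \cdot 2n$, then $\dist(x,\HS)\le O(\eps' n)$. This gives $c$ as the reciprocal of that constant.

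Fix a word $z\in D_m$ close to $\phi(x)$ and consider its matching structure. Look at the blocks of $\phi(x)$ that $z$ leaves unmodified and whose symbols come from $x$'s $H$-symbols or clear symbols. An unmodified opening pair $t^Lt^L$ must be matched in $z$ by closers of the same type. The key step is to bound how many unmodified opening blocks can be matched to anything other than unmodified clear-string closing blocks in a reverse-order, noncrossing fashion. Each exception should be charged to a modified position nearby, with constant charge per modification. This charging lets me read off a common subsequence of $T(u)$ and $H(\rev v)$ of length $n' - O(\dist)$, where $n'$ is the relevant length.

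From the common subsequence I repair $x$ with $O(\dist)$ Hamming changes. Unmatched $H$-symbols in $u$ are turned into $\star$. For the clear string, I need the true string and $v$ to have equal lengths and aligned content. The lengths of $u$ and $v$ are fixed, but the stars make $u$ flexible: I can convert stars into $H$-symbols or $H$-symbols into stars to fix $|T(u)|$. Unmatched clear bits are flipped or absorbed, using edit-versus-Hamming accounting in the same spirit as the proof of \cref{lem:far}.

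A complication is that $\phi(x)$ may contain brackets of the wrong orientation. Examples are clear symbols sitting in the hidden part or hidden symbols sitting in the clear part, since $x$ is arbitrary. Each such position already costs only $O(1)$ to fix in $x$, so it fits into the charging.

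\textbf{Main obstacle.} I expect the main obstacle to be this soundness charging. The worry is that a $D_m$ witness $z$ might match opening blocks of $u$ with one another in nested patterns, for instance $0^L0^L$ from one block against $\star$-generated $0^R$'s. Such matchings would not correspond to the reverse alignment.

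To handle this, I will first try to choose the encoding of $\star$ so that the hidden part of $\phi(x)$ contains only openers of types $0,1$ plus self-cancelling $\star$-pairs of a separate type. With $m\ge2$ I only have two types available, so I might instead encode $\star$ as $0^L0^R$ and use doubled brackets for the data symbols. A single flipped bracket then cannot create an alternative cheap matching. I would then check via a potential or height argument that the unmodified hidden openers can only be closed in the clear part.

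If this gets messy, the fallback is to adapt the reduction of \cite{ParnasRR03,FischerMS18}, which handles exactly this distance preservation for {\sf TrueStringEquivalence}. Their distance bound would be reused with a constant loss.
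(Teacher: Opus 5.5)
Your map $\phi$, the completeness argument, and the one-query-per-query simulation are exactly the paper's. The paper also handles soundness only by citing the constant-block argument of \cite[Lemma~2.5]{FischerMS18}. The gap is in soundness, and it is not merely unfinished: the inequality $\dist(x,\HS)\le O(\dist(\phi(x),D_m))$ you set out to prove is false, so no charging argument can establish it.

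\textbf{Misplaced symbols.} Your remark that misplaced symbols \emph{cost only $O(1)$ to fix in $x$, so it fits into the charging} points the wrong way. You would need their number to be $O(\dist(\phi(x),D_m))$, and it need not be. For $x=(0^H0)^k$ we have $\phi(x)=(0^L0^L0^R0^R)^k\in D_2$. Yet a word of $\HS$ whose hidden/clear boundary follows position $p$ differs from $x$ at the $\lfloor p/2\rfloor$ clear bits before it and at the $k-\lceil p/2\rceil$ hidden symbols after it. So $\dist(x,\HS)\ge k-1$, and your tester accepts this roughly $\frac12$-far input with probability at least $2/3$.

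\textbf{Well-formed inputs.} These fail too, which breaks the step where unmatched clear bits are \emph{flipped or absorbed}. A $D_m$-repair can turn any clear block $t^Rt^R$ into $t^Lt^R$ with one change, i.e., delete a clear bit anywhere. In $\HS$, however, the clear part has no $\star$'s, and $T(u)$ can gain symbols only where $u$ has $\star$'s. Concretely, take $u=c\,a\,b$ star-free and $v$ with $H(\rev{v})=a\,c'\,b$, for random bit strings $a,b$. Changing one bracket in the block of $c$ and one in the block of $c'$ (making each a matched pair $t^Lt^R$) puts $\phi(u\circ v)$ in $D_2$. Now take any $z'=u'\circ v'\in\HS$ at Hamming distance $d$. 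Then $T(u')$ is $u$ with at most $d$ deletions and substitutions, possibly extended at its end. Meanwhile $H(\rev{(v')})$ is $a\,c'\,b$ with substitutions, possibly truncated at its end. The shift of $T(u')$ against $u$ (the number of deletions so far) is nondecreasing, but $a$ aligns only at shift $1$ and $b$ only at shift $0$. Hence $d=\Omega(|a|+|b|)$ w.h.p. (\Cref{lem:far} gives only the opposite inequality, Hamming distance at least half the edit distance.)

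\textbf{Your fallback.} It does not rescue this. As the paper notes, in \textsf{TruestringEquivalence} the two strings are queried separately and both may contain $\star$'s. So deletions are cheap on both sides, and that distance argument does not transfer to $\HS$.

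\textbf{What does hold.} All that \cref{thm:dyck-lb,thm:dyck-lb-nonadaptive} use is a distribution-specific statement. Apply \cref{clm:yao} directly to $\phi(\Dyes)$ and $\phi(\Dno)$, whose hidden parts contain only hidden symbols and whose clear parts contain only clear bits. There, your \emph{main obstacle} disappears by noncrossing:
\begin{enumerate}
\item an unmodified $\star$-block must be matched to itself;
\item an opener closed by the $0^R$ of a different $\star$-block encloses that block's $0^L$, which therefore must be modified;
\item a $\star$'s $0^L$ closed in the clear part encloses its own $0^R$, which therefore must be modified.
\end{enumerate}
Hence if $\phi(u\circ\rev{\tilde v})$ is within $d$ of $D_m$, the pairs of unmodified positions contain an order-reversing, type-respecting matching of size $2n-O(d)$. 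It runs between the doubled $T(u)=H(v)$ and the doubled clear part. Each symbol owns only two positions, so every other edge of this monotone matching is a common subsequence, giving $\mathrm{LCS}(v,\tilde v)\ge n-O(d)$. The edit distance between $v$ and $\tilde v$ is $\Omega(n)$ w.h.p. (the counting in \cref{lem:far}), so $\phi(\Dno)$ is $\Omega(1)$-far from $D_m$ w.h.p. Your simulation then transfers the indistinguishability. The lemma as stated, for arbitrary inputs, would need something beyond this map.
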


A $o(\qb)$-query adaptive tester for $D_2$ or a $o(n^{1/2})$-query nonadaptive tester for $D_2$ would therefore yield a tester for \HS which also makes $o(\qb)$ adaptive queries or $o(n^{1/2})$ non-adaptive queries, violating \Cref{thm:HS-lb-adaptive} or \Cref{thm:HS-lb-nonadaptive}, respectively.

\begin{proof}[Proof of \Cref{lem:dm-hs-reduction}]
    We prove the claim for $m=2$. The extension to all $m\ge 2$
    follows because an input with two types of parentheses has the same distance
    to $D_2$ as to $D_m$: any correction to a $D_m$ word can be
    relabelled using only the first two types, without increasing
    the number of changed positions.
    
    We use the same encoding as in \cite[Theorem~3]{ParnasRR03} and \cite[Lemma~2.5]{FischerMS18}. Define a block map $\rho$ from the alphabet of $\HS$ to strings over the
    two-type parenthesis alphabet by
    \[
        \begin{array}{lll}
        \rho(0^H)=0^L0^L, & \rho(1^H)=1^L1^L, & \text{(hidden bits: opening)}\\
        \rho(\star)=0^L0^R, & & \text{(star: matched pair)}\\
        \rho(0)=0^R0^R, & \rho(1)=1^R1^R & \text{(clear bits: closing)}
        \end{array}
    \]
    Here $t^L$ and $t^R$ are opening and closing parentheses of type $t$, respectively.
    Extend $\rho$ to strings symbol-by-symbol. Then, for every
    $u\circ v$ over the alphabet of $\HS$,
    \[
        u\circ v \in \HS
        \qquad\Longleftrightarrow\qquad
        \rho(u\circ v)\in D_2 .
    \]
    
    Indeed, each $\star$ is mapped to an immediately matched pair, while the
    non-$\star$ symbols in $u$ are mapped to opening parentheses and the
    symbols in $v$ are mapped to closing parentheses. Since a Dyck word matches
    closing parentheses to opening parentheses in reverse order, the closing
    block matches exactly when
    $
        T(u)=H(\rev{v}).
    $
    Moreover, the map $\rho$ preserves relative distance up to a constant
    factor. This follows from constant-block distance argument in the proof of~\cite[Lemma~2.5]{FischerMS18}; the only
    changes here are the relabeling of the hidden alphabet and the fact that
    the clear string is already reversed in the definition of $\HS$. Thus, if
    $z$ is $\eps$-far from $\HS$, then $\rho(z)$ is $\Omega(\eps)$-far from
    $D_2$, i.e., $c\eps$-far from $D_2$ for some constant $c >0$.
    
    Now suppose that $D_2$ has a tester, $\cA$, making $f(n,\eps)$ queries. Given query access to
    an input $z$ of length $n$ for $\HS$, simulate $\cA$ on $\rho(z)$ with distance parameter $c\eps$. As $|\rho(z)| = 2|z| = 2n$, the simulation makes $f(2n,c\eps)$ queries. Each query to
    $\rho(z)$ can be answered using one query to $z$. The simulation therefore
    gives an $f(2n, c\eps)$-query tester for $\HS$. Finally, notice that this simulation is adaptive iff $\cA$ is adaptive.
\end{proof}

We now complete the proofs of \cref{thm:dyck-lb,thm:dyck-lb-nonadaptive}.

\begin{proof}[Proofs of \cref{thm:dyck-lb,thm:dyck-lb-nonadaptive}]
    By \cref{lem:dm-hs-reduction}, an adaptive tester for $D_m$ with query complexity $o(\qb)$ or a nonadaptive tester for $D_m$ with query complexity $o(n^{1/2})$ would yield an adaptive and a nonadaptive tester with query complexity $o(\qb)$ and $o(n^{1/2})$ respectively.
    These testers violate \cref{thm:HS-lb-adaptive,thm:HS-lb-nonadaptive} respectively.
\end{proof}

\subsection{Implications for linear languages}
\label{sec:linear}

In this section, we prove \Cref{thm:HSD-lb}.

\begin{proof}[Proof of \Cref{thm:HSD-lb}]
    The proofs of \Cref{thm:HS-lb-adaptive,thm:HS-lb-nonadaptive}, which show that \HS requires $\Omega(\qb)$ queries to test adaptively and $\Omega(n^{1/2})$ queries to test nonadaptively, extend to \HSD after inserting a $\diamond$ between the hidden and clear strings in the distributions. 
    
    It remains to show that \HSD is a Nasu--Honda deterministic linear
    language. A linear language is Nasu--Honda deterministic if it is generated by a grammar $G=(V,\Sigma,R,S)$ whose productions satisfy the
    following conditions (where $V$ is the set of non-terminals, $\Sigma$ is the set of the terminals, 
    $R$ is the set of production rules, and $S$ is the start symbol): 
    \begin{enumerate}
        \item Every production is of the form $A\to aBu$ or $A\to a$, where
        $a\in\Sigma$, $B\in V$, and $u\in\Sigma^*$.
        \item For all $A\in V$, $a\in\Sigma$, and
        $\alpha,\beta\in V\Sigma^*\cup\{\lambda\}$, if 
        $A\to a\alpha$ and $A\to a\beta$ are in $R$, then $\alpha=\beta$.
    \end{enumerate}
    
    The language \HSD is generated by the grammar
    \[
        S \to \star S \mid 0^H S 0 \mid 1^H S 1 \mid \diamond .
    \]
    Each production has the required form, and each terminal starts at most one rule. Hence, the
    determinism condition is satisfied, and \HSD is a Nasu--Honda
    deterministic linear language.
\end{proof}
\section{Testing excursion languages and the Dyck-1 language}\label{sec:excursion-languages}

In this section, we give our algorithms for the excursion languages and Dyck-1.
We start by formally defining excursion languages in \cref{def:excursion-language} and stating the guarantees of our tester for them in \cref{thm:excursion-up}. \cref{sec:repair} states and proves the main ingredient in our analysis, the Repair Lemma. 
\cref{sec:analysis-excursion-tester} uses the Repair Lemma to analyze the tester.
Finally, \cref{sec:D1-tester} explains how to use this tester for Dyck-1.

\begin{definition}[Excursion languages]\label{def:excursion-language}
    Let $\ell,r\in\bN$ and $\Sigma_{\ell,r} := \{-\ell,-\ell+1,\ldots,-1,0,1,\ldots,r\}$.
    For a word $w\in\Sigma_{\ell,r}^n$, define the partial sums $s_0 (w) = 0$ and $s_j(w)=\sum_{i\in[j]} w[i]$ for all $j\in[n]$, the final sum $s(w)=s_n(w)$, and the minimum sum $m(w)=\min_{0\le j\le n}s_j(w)$.
    Define the {\em excursion language} $L_{\ell,r}$ as
    \[
        L_{\ell,r}
        =
        \{w\in \Sigma_{\ell,r}^* : s(w)=0 \text{ and } m(w)=0\}.
    \]
\end{definition}

\begin{theorem}\label{thm:excursion-up}
For every fixed $\ell,r\in\bN$ and for all $\eps \in (0,1)$, the language $L_{\ell,r}$ has a nonadaptive
$\eps$-tester that makes $O(\rho^2/\eps^2)$ queries, where $\rho=\max\{\ell,r\}$.
\end{theorem}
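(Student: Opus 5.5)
We will follow the plan sketched in \cref{sec:techniques}. The tester samples $k = C\rho^2/\eps^2$ positions $i_1<\dots<i_k$ independently and uniformly from $[n]$, sorted. It forms the sampled walk $\hat w = w[i_1]\cdots w[i_k]$ and computes $\hat\Delta = \frac{n}{k}\bigl(s(\hat w) - 2m(\hat w)\bigr)$. It accepts iff $\hat\Delta < \eps n/2$. The analysis has three parts.

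\textbf{Repair Lemma.} We first prove $\dist(w,L_{\ell,r})\le \Delta(w)=s(w)-2m(w)$, where $\Delta\ge |s|$ since $m\le\min(0,s)$. The first phase handles negative prefixes. For each height $-h$, $h=1,\dots,|m(w)|$, we take the first step that reaches a new minimum at or below $-h$, and change it to a step that does not go down. Each modified symbol raises the subsequent walk by at least one, so at most $|m(w)|$ changes make all prefixes nonnegative. The final height becomes at most $s-m+\rho|m|$, but with more care it is $s-2m$: changing a step $-a$ into $0$ raises by $a$, so we count heights rather than steps. The second phase reduces the resulting final height $s'\le s-2m$ to $0$. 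For each positive height, we take the last upward crossing and change that step to $0$ (or decrease it), which lowers all later heights. Because it is the last crossing, the suffix stays above the crossed level, so nonnegativity is preserved. We also need the reverse case $s'<0$, which cannot arise after phase one unless the final height is negative; in that case the end itself is a minimum and phase one already handled it. Each change alters $\ge1$ unit of height, giving the bound. The main bookkeeping is ensuring one modification per unit of $\Delta$ rather than per unit divided by $\rho$; a cruder bound $\dist\le\Delta$ suffices regardless. Consequently, if $w$ is $\eps$-far then $\Delta(w)\ge\eps n$.

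\textbf{Soundness.} For the sampled sum up to time $t$, write $\hat s_t=\sum_{j\le k} w[i_j]\mathbf 1[i_j\le t]$. This is a sum of $k$ i.i.d.\ terms with mean $\frac kn s_t(w)$ and variance at most $k\rho^2$. Let $t^*$ be the argmin of $s_t(w)$. Then $s(\hat w)-2m(\hat w)\ge \hat s_n - 2\hat s_{t^*}$, a single linear statistic with mean $\frac kn\Delta(w)\ge k\eps$ and standard deviation $O(\rho\sqrt k)$. By Chebyshev, it falls below $k\eps/2$ with probability at most $O(\rho^2/(k\eps^2))\le 1/3$ for large $C$. Thus only two points need to be controlled.

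\textbf{Completeness.} This is the main obstacle, since we must bound a supremum. For $w\in L_{\ell,r}$, we have $s_t(w)\ge0$ and $s_n(w)=0$. We must show $\hat s_k - 2\min_t \hat s_t<k\eps/2$ with probability $2/3$. Centering gives $M_t=\hat s_t-\frac kn s_t(w)$. Then $\hat s_k - 2\min_t\hat s_t\le M_n + 2\max_t(-M_t)$, because $\frac kn s_t\ge0$. The process $M_t$, viewed as a function of $t$, is a martingale when we reveal samples progressively. Equivalently, we condition on the multinomial counts per position and use the exchangeable/empirical-process structure. Doob's $L^2$ maximal inequality then gives $\bE[\max_t M_t^2]\le 4\bE[M_n^2]\le 4k\rho^2$. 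If the martingale structure is awkward with i.i.d.\ samples, we will first Poissonize, sampling each position independently with probability $k/n$; this makes $M_t$ a sum of independent centered increments and hence a true martingale in $t$. A further Chebyshev/Markov step then yields failure probability $O(\rho^2/(k\eps^2))\le1/3$. Query complexity is $k=O(\rho^2/\eps^2)$, and the tester is nonadaptive.
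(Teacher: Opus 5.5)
Your plan matches the paper's proof. It uses the same statistic $\Delta(w)=s(w)-2m(w)$ and the same two-phase Repair Lemma (first new negative heights, then last upward crossings). Soundness controls the sampled walk only at $n$ and at an argmin $t^*$, and completeness uses a maximal inequality for the centered sampled walk. Merging the two soundness points into the single statistic $\hat s_n-2\hat s_{t^*}$ and applying Chebyshev is a harmless simplification of the paper's Bernstein-plus-union-bound. However, two steps fail as literally written, and you need to commit to the alternatives you only hedge with.

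First, with $k$ samples drawn with replacement, $M_t$ is not a martingale in $t$. Given the counts $N_1,\dots,N_{t-1}$, the count $N_t$ is $\mathrm{Bin}\bigl(k-\sum_{i<t}N_i,\,1/(n-t+1)\bigr)$, whose mean is not $k/n$. So Doob's inequality does not apply, and the remark about exchangeable or empirical-process structure does not replace it. The Poissonized sampler (each index independently with probability $k/n$) is exactly the paper's algorithm, and with it Kolmogorov's inequality gives what you want. You then also need a cap on the number of queries (reject if more than, say, $10k$ indices are sampled) to get a worst-case $O(k)$ bound, and you must read the whole input when $n\le k$.

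Second, in the Repair Lemma each modified step must be changed by exactly the number of heights assigned to it, not set to $0$.
\begin{itemize}
\item In phase~2, zeroing a last upward crossing can create a negative prefix. For $u=(2,-1,1)$ (heights $0,2,1,2$), step~$1$ is the last upward crossing of height $1$ only. Zeroing it yields heights $0,0,-1,0$, whereas subtracting $b_1=1$ works.
\item In phase~1, zeroing a step $-a$ that reaches only one new minimum raises the walk by $a$ (up to $\ell$). The final height can then become about $s+\ell|m|$, and your accounting only gives $\dist(w,L_{\ell,r})=O(\rho\,\Delta(w))$. That costs an extra factor $\rho^2$ in the query bound.
\end{itemize}
With the exact increments $a_i$, phase~1 ends at height exactly $s-m$, not $s-2m$ as you wrote; your bound $s'\le s-2m$ would give a total of $s-3m$. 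The total number of changes is then $|m|+(s-m)=s-2m$.
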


For small $n$, specifically when $n\le C\rho^2/\eps^2$, where $C$ is sufficiently large, the tester queries
the entire input and decides membership exactly. In the pseudocode and
analysis below, we assume $n>C\rho^2/\eps^2$.
Let $\Delta(w)=s(w)-2m(w)$. \Cref{alg:excursion-tester} samples a subsequence $\tilde w$ from input $w$  and estimates $\Delta(w)$ by computing $\Delta(\tilde w)$ and rescaling. It then rejects if this estimate is too high.

\begin{algorithm}[H]
    \caption{Tester for excursion languages}\label{alg:excursion-tester}
    \begin{algorithmic}[1]
        \State \textbf{input: } Query access to a word $w\in\Sigma_{\ell,r}^n$ and an error parameter $\eps>0$
        \State $\rho\gets \max\{\ell,r\}$
        \State $\alpha\gets \frac{C\rho^2}{\eps^2}$ for a sufficiently large universal constant $C$
        \State Sample a set $Q\subseteq[n]$ by including each index independently with probability $p=\frac \alpha n$
        \State Query $w[i]$ for every $i\in Q$
        \State Let $\tilde w$ be the word obtained by setting $\tilde w[i]=w[i]$ for $i\in Q$ and $\tilde w[i]=0$ for $i\notin Q$
        \State $\widetilde{\Delta}\gets s(\tilde w)-2m(\tilde w)$
        \State \textbf{accept} if $\frac{n}{\alpha}\widetilde{\Delta}<\frac{\eps n}{2}$ else \textbf{reject}
    \end{algorithmic}
\end{algorithm}

As written, \Cref{alg:excursion-tester} uses $O(\alpha)$ queries in
expectation. To obtain a worst-case query bound, one can impose a query cap:
stop and reject if more than, say, $10\alpha$ indices are sampled. By a
Chernoff bound, the query cap is triggered with exponentially small probability
in $\alpha$, so it changes the success probability only negligibly. We omit
this query cap from the pseudocode.

\subsection{The Repair Lemma}\label{sec:repair}
In this section, we prove the following structural lemma used in the analysis of \cref{alg:excursion-tester}.
\begin{figure}
    \centering
    \begin{subfigure}[b]{0.49\textwidth}
        \centering
        \includegraphics[width = \textwidth]{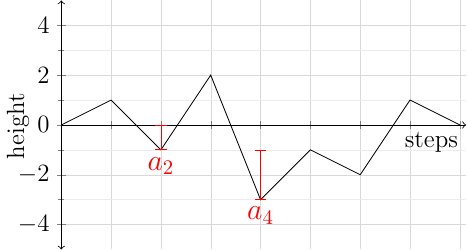}
        \caption{Stage 1}
        \label{fig:stage-1}
    \end{subfigure}
    \hfill
    \begin{subfigure}[b]{0.49\textwidth}
        \centering
        \includegraphics[width = \textwidth]{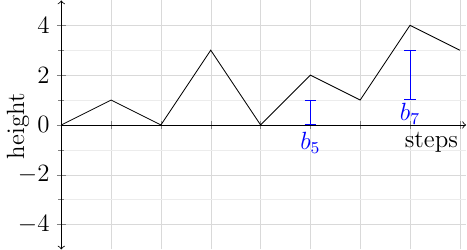}
        \caption{Stage 2}
        \label{fig:stage-2}
    \end{subfigure}
    \caption{An example of repairing the word
    $(1,-2,3,-5,2,-1,3,-1)$. 
    In stage 1, we increase $w[2]$ by $1$ and $w[4]$ by $2$ to get $u =(1,-1,3,-3,2,-1,3,-1)$.  
    In stage 2, we reduce $u[5]$ and $u[7]$ by $1$ and $2$ respectively to get the repaired word $v = (1, -1, 3, -3, 1, -1, 1, -1)$.
    }
    \label{fig:repair-lemma}
\end{figure}

\begin{lemma}[Repair lemma]\label{lem:repair}
For all $w\in \Sigma_{\ell,r}^n$,
$\dist(w,L_{\ell,r}) \le s(w)-2m(w)$.
\end{lemma}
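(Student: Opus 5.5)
Write $M=-m(w)\ge 0$, so the claim is $\dist(w,L_{\ell,r})\le s(w)+2M$; note $s(w)-m(w)\ge 0$, so the right-hand side is nonnegative. The repair will be constructive and in two stages, as suggested by \cref{fig:repair-lemma}. Each stage only changes symbols in the direction of $0$, or changes them by an amount that keeps them in $[-\ell,r]$, so the modified word remains over $\Sigma_{\ell,r}$. The number of modified positions will be bounded by the total amount of height change. We assume $n\ge 1$ throughout. One caveat: if $L_{\ell,r}$ contains no word of length $n$, the distance is infinite. Since $0^n\in L_{\ell,r}$ always, this never happens.

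\textbf{Stage 1 (removing negative prefixes).} For each $h=1,\dots,M$, let $j_h$ be the first index with $s_{j_h}(w)=-h$ or lower. Equivalently, $j_h$ is the first time the prefix minimum reaches $\le -h$. At each step $j$ where the running minimum drops from $-a$ to $-b$, with $b>a$, the step satisfies $w[j]\le -(b-a)$. We increase $w[j]$ by exactly $b-a$. The new value stays in $[-\ell,0]$, hence in the alphabet. After this, every prefix sum is raised by the total drop of the running minimum so far. Concretely, the new sum at $j$ equals $s_j(w)-\min_{i\le j}s_i(w)\ge 0$. The new word $u$ thus has $m(u)=0$ and $s(u)=s(w)+M$. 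The number of modified positions is at most the number of drops, which is at most $M$.

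\textbf{Stage 2 (fixing the final height).} Now $u$ is nonnegative with final height $S=s(u)=s(w)+M\ge 0$. For each $h=1,\dots,S$, consider the last step at which the walk crosses upward from below height $h$ to at least $h$, i.e.\ the last $j$ with $s_{j-1}(u)<h\le s_j(u)$. This step exists because the walk starts at $0$ and ends at $S\ge h$. Group these by index. At a step $j$ that is the last upward crossing for levels $a+1,\dots,b$, we decrease $u[j]$ by $b-a\le u[j]$. The new value stays in $[0,r]$. After this, each suffix beyond a last-crossing of level $h$ stays at height $\ge h$, so lowering by the number of levels already last-crossed keeps the walk nonnegative. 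Writing $g_j$ for the number of levels $h\le S$ whose last crossing is at or before $j$, the new height is $s_j(u)-g_j$. The required inequality $g_j\le s_j(u)$ holds because each such level $h$ satisfies $s_j(u)\ge h$ after its last crossing, and these levels are $1,\dots,g_j$. The final height becomes $S-S=0$. The number of changed positions is at most $S$.

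In total, the number of changed positions is at most $M+S=s(w)+2M=s(w)-2m(w)$. The main obstacle is Stage 2, namely verifying that the modified walk stays nonnegative. The key fact is that the levels last-crossed by time $j$ form an initial segment $\{1,\dots,g_j\}$ with $g_j\le s_j(u)$, and I would write this out carefully via the suffix-minimum characterization. It says that the last crossing of level $h$ occurs before $j$ iff $\min_{i\ge j} s_i(u)\ge h$, which gives $g_j=\min(S,\min_{i\ge j}s_i(u))\le s_j(u)$. Stage 1 is symmetric, using prefix minima instead of suffix minima.
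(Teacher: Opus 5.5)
Your proposal is correct and follows the paper's proof essentially step for step. Stage 1 raises the steps that first reach each new negative height (the paper's $a_i$), and Stage 2 lowers the last upward crossings of each level $h\in[s(w)-m(w)]$ (the paper's $b_j$), for at most $M+(s(w)+M)=s(w)-2m(w)$ changed positions. Your suffix-minimum identity $g_j=\min(S,\min_{i\ge j}s_i(u))$ is just a slightly sharper form of the paper's observation that the levels last-crossed by time $j$ all lie in $[s_j(u)]$.
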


\begin{proof}
We can view the string $w$ as a walk on the integer line. For $i\in[n]$,
the symbol $w[i]$ represents step $i$ of the walk, and the partial sum $s_i(w)$
is the height after $i$ steps. Membership of string $w$ in the language
$L_{\ell,r}$ means that the walk never goes below $0$ and ends at $0$.
There are two ways in which $w$ can fail to be in the language: (1) the walk
goes below $0$; (2) it ends above or below $0$. We fix these problems in two separate stages, illustrated in \Cref{fig:repair-lemma}.

Since $s_0(w)=0$, we have $m(w)\le 0$. Let $M=-m(w)$. We use the convention that
$[0]=\emptyset$.

In the {\em first stage,} we repair all negative prefixes. For each height
$h\in [M]$, let
\[
    i_h=\min\{j\in[n]: s_j(w)\le -h\}
\]
be the first step on which the walk reaches height $-h$ or lower. As $w[i]$ can be less than $-1$ (if $\ell > 1$), multiple heights can be reached for the first time on the same step. For each
index $i\in[n]$, we define the amount we will add to symbol $w[i]$ as the number of negative heights reached for the first time on step $i$, i.e.,
\[
    a_i=\bigl|\{h\in[M]: i_h=i\}\bigr|.
\]
We obtain a new word $u$ from $w$ by replacing each symbol $w[i]$ with
$u[i]=w[i]+a_i$.

Only positions with $a_i>0$ are changed, so this stage changes at most $M$
symbols. We now check that $u$ is still over the alphabet $\Sigma_{\ell,r}$.
If $a_i>0$, then step $i$ is the first to reach some negative heights, and these heights are between $s_i(w)$ and $s_{i-1}(w).$
Therefore, $a_i\le s_{i-1}(w)-s_i(w)=-w[i]$ and, consequently,
$u[i]=w[i]+a_i\le 0$. In addition, $u[i]\ge w[i]\ge -\ell$, showing that
$u[i]\in\Sigma_{\ell,r}$.

Next, we argue that the walk determined by $u$ never goes below $0$.
Observe that
\[
    s_j(u)
    =
    s_j(w)+\bigl|\{h\in[M]: i_h\le j\}\bigr|.
\]
If $s_j(w)\ge 0$, then certainly $s_j(u)\ge 0$. If $s_j(w)<0$, then for every
$h\in[-s_j(w)]$, the walk has reached height $-h$ by time $j$, so $i_h\le j$.
Hence, $s_j(u)\ge s_j(w)+(-s_j(w))=0$. Thus, all partial sums of $u$ are
nonnegative. Moreover, the final sum of $u$ is $s(u)=s(w)+M=s(w)-m(w)$.

Let $H=s(u) =s(w)-m(w)$. Since $m(u)\leq 0$ and all partial sums of $u$ are nonnegative, $H\ge 0$.
The {\em second stage} reduces the final sum from $H$ to $0$ without
creating a negative prefix. For every height $h\in[H]$, let
\[
    j_h
    =
    \max\{j\in[n]: s_{j-1}(u)<h\le s_j(u)\}.
\]
This is the last step on which the walk defined by $u$ crosses height $h$
upward. For each index $j\in[n]$, define the amount we will subtract from
symbol $u[j]$ as
\[
    b_j=\bigl|\{h\in[H]: j_h=j\}\bigr|.
\]
We obtain a word $v$ from $u$ by replacing each symbol $u[j]$ with
$v[j]=u[j]-b_j$.

The second stage changes at most $H$ symbols.
Now we argue that $v$ is over
the alphabet $\Sigma_{\ell,r}$. Indeed, if $b_j>0$, then the heights assigned
to $j$ are heights crossed upward during step $j$, so
$b_j\le s_j(u)-s_{j-1}(u)=u[j]$. Therefore,
$0\le v[j]=u[j]-b_j\le u[j]\le r$.

Next we show $s(v) = 0$.
By definition of $v$,
\[
    s_j(v)
    =
    s_j(u)-\bigl|\{h\in[H]: j_h\le j\}\bigr|.
\]
We claim that $s_j(v)\ge 0$ for all $j\in [n]$. 
Fix $j$. For all $h\in[H]$ such that $j_h\le j$, since
$j_h$ is the last upward crossing of height $h$, and
since the final height is $s(u)=H\ge h$, the walk determined by $u$ cannot be
below height $h$ at time $j$; otherwise, it would have to cross height $h$
upward again after time $j$. Hence $s_j(u)\ge h$.
Therefore,
\[
    \{h\in[H]: j_h\le j\}\subseteq [s_j(u)],
\]
which implies $\bigl|\{h\in[H]: j_h\le j\}\bigr| \le s_j(u)$.
So,
$$s_j(v)=s_j(u)-\bigl|\{h\in[H]: j_h\le j\}\bigr| \geq s_j(u) - 
s_j(u)\ge 0.$$
Finally,
$s(v)=s(u)-H=0$. Thus, $v\in L_{\ell,r}$.

The total number of symbol changes is at most
$M+H=(-m(w))+(s(w)-m(w))=s(w)-2m(w)$. Therefore,
$\dist(w,L_{\ell,r})\le s(w)-2m(w)$, as claimed.
\end{proof}

\subsection{Analysis of the tester for excursion languages}\label{sec:analysis-excursion-tester}

We now prove correctness of \Cref{alg:excursion-tester}. Soundness follows
from the repair lemma: if a word is far from $L_{\ell,r}$, then
$\Delta(w)=s(w)-2m(w)$ is large, and the sampled estimate is unlikely to
substantially underestimate it. Completeness follows by decomposing the sampled
walk into its expected value plus independent mean-zero sampling errors. The
expected sampled walk is just a scaled copy of the original walk, which stays
nonnegative and ends at $0$ when $w\in L_{\ell,r}$; the error terms are
controlled by Kolmogorov's inequality and Chebyshev's inequality.

\begin{lemma}[Soundness]\label{lem:excursion-soundness}
If $w$ is $\eps$-far from $L_{\ell,r}$, then
$\Pr[\text{\Cref{alg:excursion-tester} rejects } w]\ge 2/3$.
\end{lemma}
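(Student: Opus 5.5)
By the Repair Lemma (\cref{lem:repair}), if $w$ is $\eps$-far from $L_{\ell,r}$ then $\Delta(w)=s(w)-2m(w)\ge \eps n$. Write $M=-m(w)\ge 0$ and let $j^*$ be an index with $s_{j^*}(w)=m(w)$. The plan is to show that the sampled statistic $\widetilde\Delta=s(\tilde w)-2m(\tilde w)$ rarely falls far below $p\,\Delta(w)$. We never need to control the sampled minimum uniformly, because
\[
\widetilde\Delta \;\ge\; s(\tilde w)-2\min\{0,s_{j^*}(\tilde w)\}\;\ge\; s(\tilde w)-2\,s_{j^*}(\tilde w)\cdot\mathbf 1[\,\cdot\,]\ldots
\]
More simply, we use $m(\tilde w)\le \min\{0,s_{j^*}(\tilde w)\}$. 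The cleanest route is a case split on the sign of $s(w)$.

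\textbf{Case $s(w)\ge 0$.} Here $\widetilde\Delta\ge s(\tilde w)-2 s_{j^*}(\tilde w)$. The right-hand side is a linear statistic $Z=\sum_i c_i w[i]\xi_i$, where the $\xi_i\sim\mathrm{Bernoulli}(p)$ are independent and $c_i\in\{1,-1\}$ ($c_i=-1$ for $i\le j^*$, $c_i=+1$ otherwise). Its mean is $p(s(w)-2s_{j^*}(w))=p\Delta(w)\ge p\eps n$. Its variance is at most $p\sum_i w[i]^2\le p\rho^2 n$.

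\textbf{Case $s(w)<0$.} Use $m(\tilde w)\le 0$ together with $m(\tilde w)\le s_{j^*}(\tilde w)$, which give
\[
\widetilde\Delta\ge s(\tilde w)-s_{j^*}(\tilde w)-s_{j^*}(\tilde w)\quad\text{or}\quad \widetilde\Delta\ge s(\tilde w)-2\min\{0,s_{j^*}(\tilde w)\}.
\]
In either form we lower bound $\widetilde\Delta$ by the same linear statistic $Z$, using only the two sampled partial sums at $j^*$ and at $n$. Since $-2m(\tilde w)\ge -2 s_{j^*}(\tilde w)$ holds always, $\widetilde\Delta\ge Z$ holds in every case, so the case split is not actually needed.

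By Chebyshev's inequality,
\[
\Pr\!\left[Z<\tfrac{p\eps n}{2}\right]\le \frac{p\rho^2 n}{(p\eps n/2)^2}=\frac{4\rho^2}{\alpha\eps^2}=\frac{4}{C}\le \frac13
\]
for $C\ge 12$. Whenever this bad event fails, $\frac n\alpha\widetilde\Delta\ge \frac{Z}{p}\ge \frac{\eps n}{2}$, so the tester rejects. The query cap adds only a negligible failure probability, so a slightly larger $C$ absorbs it.

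The main obstacle is avoiding control of the random location of the sampled minimum. This is handled by the observation that the minimum can only help: $m(\tilde w)\le s_{j^*}(\tilde w)$ for the fixed (deterministic, given $w$) index $j^*$ chosen from the original walk. This reduces soundness to a single Chebyshev bound on a sum of independent terms.
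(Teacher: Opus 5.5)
Your argument is correct and is essentially the paper's. The Repair Lemma gives $\Delta(w)\ge \eps n$, and the sampled minimum is handled by the deterministic bound $m(\tilde w)\le s_{j^*}(\tilde w)$ at the index $j^*$ where the original walk attains its minimum, so only the sampled walk at $j^*$ and at $n$ needs to be controlled; the only difference is bookkeeping: you fold both into the single linear statistic $Z=s(\tilde w)-2s_{j^*}(\tilde w)$ (mean exactly $p\Delta(w)$, variance at most $p\rho^2 n$) and apply Chebyshev once, whereas the paper applies Bernstein to $\widehat s_n$ and $\widehat s_{j^*}$ separately with slack $\eps n/6$ each and takes a union bound, and in your write-up you should drop the garbled first display and the superfluous case split.
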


\begin{proof}
Assume that $w$ is $\eps$-far from $L_{\ell,r}$. By
\Cref{lem:repair}, for every $w\in\Sigma_{\ell,r}^n$,
\[
    \dist(w,L_{\ell,r})\le \Delta(w).
\]
Therefore, $\Delta(w)\ge \eps n$. We will show that the sampled quantity $\widetilde{\Delta}$ is unlikely to substantially underestimate $\Delta(w)$. This is enough for soundness, because the tester rejects exactly when the scaled sampled quantity is large.

Fix $j\in [n]$. Define
$\widehat s_j=\frac{n}{\alpha}\sum_{i\in[j]}\tilde w[i]$ and let
$\widehat s_0=0$. Let $X_i$ be the indicator of the event that index $i$ is
sampled. Then $\Pr[X_i=1]=p=\frac{\alpha}{n}$, and
\[
    \widehat s_j
    =
    \frac{n}{\alpha}\sum_{i\in[j]}w[i]X_i .
\]
Then $\mathbb E[\widehat s_j]=s_j(w)$.
We will bound the probability that $\widehat s_j$ deviates from $s_j(w)$. Let $Y_i=\frac{n}{\alpha}w[i]X_i$ for $i\in[j]$.
Since $|w[i]|\le \rho$, 
\[
    \sum_{i\in[j]}\Var(Y_i)
    \le
    \frac{n^2}{\alpha^2}\cdot p\sum_{i\in[j]}w[i]^2
    \le
    \frac{\rho^2 n^2}{\alpha}.
\]
Also, $|Y_i-\mathbb E[Y_i]|\le \frac{\rho n}{\alpha}$. Therefore, by
Bernstein's inequality (\Cref{thm:bernstein-ineq}),
\[
    \Pr\left[\left|\widehat s_j-s_j(w)\right|\ge \frac{\eps n}{6}\right]
    \le
    2\exp\left(-\Omega\left(\frac{\alpha\eps^2}{\rho^2}\right)\right).
\]
Since $\alpha=\frac{C\rho^2}{\eps^2}$, this probability is at most
$\frac{1}{24}$ for a sufficiently large constant $C$. For $j=0$, the same
bound is trivial because $\widehat s_0=s_0(w)=0$.

Let $j^*\in\{0,\ldots,n\}$ be such that $s_{j^*}(w)=m(w)$. Applying the
preceding bound to $j=n$ and to $j=j^*$, and taking a union bound, we get
that with probability at least $\frac{11}{12}$, both
$\widehat s_n\ge s(w)-\frac{\eps n}{6}$ and
$\widehat s_{j^*}\le m(w)+\frac{\eps n}{6}$. On this event, since
$m(\tilde w)\le \sum_{i\in[j^*]}\tilde w[i]$, we have
$\frac{n}{\alpha}m(\tilde w)\le \widehat s_{j^*}
\le m(w)+\frac{\eps n}{6}$. Therefore,
\begin{align*}
    \frac{n}{\alpha}\widetilde{\Delta}
    &=
    \frac{n}{\alpha}s(\tilde w)
    -2\frac{n}{\alpha}m(\tilde w)\\
    &\ge
    s(w)-\frac{\eps n}{6}
    -2\left(m(w)+\frac{\eps n}{6}\right)\\
    &=
    \Delta(w)-\frac{\eps n}{2}\\
    &\ge
    \frac{\eps n}{2}.
\end{align*}
Thus the tester rejects with probability at least $\frac{11}{12}$. Accounting
for the query cap still gives rejection probability at least $\frac{2}{3}$.
\end{proof}

\begin{lemma}[Completeness]\label{lem:excursion-completeness}
If $w\in L_{\ell,r}$, then
$\Pr[\text{\Cref{alg:excursion-tester} accepts } w]\ge 2/3$.
\end{lemma}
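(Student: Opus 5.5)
Since $w\in L_{\ell,r}$ gives $s(w)=0$ and $m(w)=0$, the tester accepts exactly when $\frac{n}{\alpha}\bigl(s(\tilde w)-2m(\tilde w)\bigr)<\frac{\eps n}{2}$. In the notation $\widehat s_j=\frac{n}{\alpha}\sum_{i\in[j]}\tilde w[i]$ from the soundness proof, this is the condition $\widehat s_n-2\min_{0\le j\le n}\widehat s_j<\frac{\eps n}{2}$. It therefore suffices to show that, with probability at least $11/12$, the endpoint satisfies $|\widehat s_n|<\frac{\eps n}{6}$ and the whole sampled walk satisfies $\min_j\widehat s_j>-\frac{\eps n}{6}$. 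On that event the statistic is below $\frac{\eps n}{6}+\frac{2\eps n}{6}=\frac{\eps n}{2}$, and accounting for the query cap still leaves acceptance probability at least $2/3$.

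I decompose $\widehat s_j=s_j(w)+E_j$, where $E_j=\sum_{i\in[j]}(Y_i-\mathbb E[Y_i])$ and $Y_i=\frac{n}{\alpha}w[i]X_i$ as before. Since $w\in L_{\ell,r}$, we have $s_j(w)\ge 0$ for every $j$ and $s_n(w)=0$. Hence $\min_j\widehat s_j\ge\min_j E_j\ge-\max_j|E_j|$ and $\widehat s_n=E_n$, so both requirements follow from the single bound $\max_{0\le j\le n}|E_j|<\frac{\eps n}{6}$.

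The main step is to control this maximum uniformly over all $j$. A union bound over all $n$ prefixes would lose a $\log n$ factor, so I will avoid it. Instead, note that $(E_j)_j$ is a sum of independent mean-zero terms, i.e., a martingale. Its total variance is at most $\frac{n^2}{\alpha^2}\,p\sum_i w[i]^2\le\frac{\rho^2n^2}{\alpha}$. Kolmogorov's maximal inequality then gives
\[
\Pr\Bigl[\max_j|E_j|\ge\tfrac{\eps n}{6}\Bigr]\le\frac{36\rho^2n^2/\alpha}{\eps^2n^2}=\frac{36}{C}.
\]
This is at most $1/12$ once $C\ge 432$. The expected obstacle is exactly this uniform-in-time control, and Kolmogorov's inequality handles it with no logarithmic loss, which is what keeps the query complexity at $O(\rho^2/\eps^2)$. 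One could alternatively bound the endpoint separately by Chebyshev, but that is subsumed by the maximal bound.
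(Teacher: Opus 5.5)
Your proof is correct and follows the paper's argument: you split the sampled walk into the rescaled original walk, which stays nonnegative and ends at $0$, plus independent mean-zero noise, and control the noise with Kolmogorov's maximal inequality. The only cosmetic differences are that you work with the rescaled sums $\widehat s_j$ rather than the unscaled noise $Z_j$, and you use one two-sided application of Kolmogorov's inequality for both the minimum and the endpoint, whereas the paper bounds the endpoint separately by Chebyshev.
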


\begin{proof}
Assume $w\in L_{\ell,r}$. Then $s(w)=0$ and $s_j(w)\ge 0$ for all
$j\in\{0,\ldots,n\}$. In this case, the sampled walk can be written as its
expected value plus sampling noise. The expected value is a scaled copy of a
walk that stays nonnegative and ends at $0$, so only the sampling noise can
create a negative dip or a positive final sum.

Let $X_i$ be the indicator of the event that index $i$ is sampled, so
$\Pr[X_i=1]=p=\frac{\alpha}{n}$. For each $j$, define
\[
    Z_j=\sum_{i\in[j]}w[i](X_i-p).
\]
The random variables $w[i](X_i-p)$ are independent and have mean zero. Moreover,
\[
    s_j(\tilde w)
    =
    \sum_{i\in[j]}w[i]X_i
    =
    p s_j(w)+Z_j.
\]
Since $s_j(w)\ge 0$ for all $j$, we have
$m(\tilde w)\ge \min_{0\le j\le n}Z_j$. Also, since $s(w)=0$, we have
$s(\tilde w)=Z_n$.

We next bound the variance of $Z_n$. Since $|w[i]|\le \rho$,
\[
    \Var(Z_n)
    =
    \sum_{i\in[n]}w[i]^2\Var(X_i)
    \le
    \rho^2 n p
    =
    \rho^2\alpha.
\]
Let $\tau=\sqrt{24\rho^2\alpha}$. By Kolmogorov's inequality (\Cref{thm:kolmogorov-ineq}),
\[
    \Pr\left[\min_{0\le j\le n}Z_j\le -\tau\right]
    \le
    \frac{\Var(Z_n)}{\tau^2}
    \le
    \frac{1}{24}.
\]
By Chebyshev's inequality,
\[
    \Pr[Z_n\ge \tau]
    \le
    \frac{\Var(Z_n)}{\tau^2}
    \le
    \frac{1}{24}.
\]
Thus, with probability at least $\frac{11}{12}$, both
$m(\tilde w)\ge -\tau$ and $s(\tilde w)\le \tau$. On this event,
$\widetilde{\Delta}=s(\tilde w)-2m(\tilde w)\le 3\tau$. Therefore,
\[
    \frac{n}{\alpha}\widetilde{\Delta}
    \le
    \frac{3n\tau}{\alpha}
    =
    3n\sqrt{\frac{24\rho^2}{\alpha}}.
\]
Since $\alpha=\frac{C\rho^2}{\eps^2}$, this quantity is less than
$\frac{\eps n}{2}$ for a sufficiently large constant $C$. Thus the tester
accepts with probability at least $\frac{11}{12}$. Accounting for the query
cap, the acceptance probability is at least $\frac{2}{3}$.
\end{proof}

\begin{proof}[Proof of \Cref{thm:excursion-up}]
For $n\le \frac{C\rho^2}{\eps^2}$, the tester queries the entire input and
decides membership exactly. Otherwise, \Cref{alg:excursion-tester} uses
$O(\alpha)=O(\rho^2/\eps^2)$ queries after applying the query cap described
above. Since $\ell,r$ are fixed, this is $O_{\ell,r}(1/\eps^2)$. Completeness follows from \Cref{lem:excursion-completeness}, and soundness
follows from \Cref{lem:excursion-soundness}. The tester is nonadaptive, by definition.
\end{proof}

\subsection{An optimal tester for the Dyck-1 language}\label{sec:D1-tester}

In this section, we leverage our $L_{\ell,r}$ tester to get an optimal tester for $D_1$, the language of balanced parentheses. 
Identify $0^L$ with $+1$ and $0^R$ with $-1$; partial sums are then the heights of the walk, and $D_1$ is exactly $L_{1,1} \cap \{-1,1\}^*$.
We show that our $L_{1,1}$ tester yields a tester for $D_1$.

\begin{corollary}\label{cor:d1-tester}
    For all $\eps \in(0, 1)$, language $D_1$ has an $\eps$-tester that makes $O(1/\eps^2)$ queries. 
\end{corollary}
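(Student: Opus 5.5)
The plan is to run \Cref{alg:excursion-tester} with $\ell=r=1$ (so $\rho=1$) on the $\pm1$ encoding of the input. If $n$ is odd, then $D_1$ contains no word of length $n$, and the tester simply rejects without querying. Completeness is immediate: $D_1\subseteq L_{1,1}$, so \Cref{lem:excursion-completeness} applies verbatim.

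Soundness needs more care. A word $w\in\{-1,1\}^n$ that is $\eps$-far from $D_1$ might a priori be close to $L_{1,1}$, since $L_{1,1}$ also allows zeros. However, the proof of \Cref{lem:excursion-soundness} only uses the hypothesis through the inequality $\Delta(w)\ge\eps n$. So it suffices to prove the analogue of the Repair Lemma for $D_1$:
\[
\dist(w,D_1)\le s(w)-2m(w)\quad\text{for all } w\in\{-1,1\}^n \text{ with } n \text{ even}.
\]
Given this inequality, the rest of the soundness argument (Bernstein's inequality at $j=n$ and at $j=j^*$) goes through unchanged with $\rho=1$, and the query bound $O(1/\eps^2)$ follows as in \Cref{thm:excursion-up}.

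To prove the $D_1$ repair inequality, I will first apply \Cref{lem:repair} to $w$, viewed as a word over $\Sigma_{1,1}$. This yields $v\in L_{1,1}$ differing from $w$ in at most $\Delta(w)$ positions. Inspecting the two stages with steps in $\{-1,1\}$: stage 1 only turns some $-1$'s into $0$'s (since $a_i\le -w[i]=1$), and stage 2 only turns some $+1$'s into $0$'s. Hence $v\in\{-1,0,1\}^n$, and its zeros are exactly the changed positions, so the number $Z$ of zeros satisfies $Z\le\Delta(w)$.

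Next, I will show that $Z$ is even. Since $s(v)=0$, the $\pm1$ entries of $v$ are equal in number, so there is an even number of them. Because $n$ is even, $Z$ is even as well.

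Then I will define $v'$ by replacing the zeros of $v$, in left-to-right order, alternately by $+1,-1,+1,-1,\ldots$. For each $j$, the prefix sum satisfies $s_j(v')=s_j(v)+\delta_j$ with $\delta_j\in\{0,1\}$: $\delta_j$ is $1$ exactly when an odd number of zeros lie in the first $j$ positions, and $\delta_n=0$ because $Z$ is even. So $v'$ stays nonnegative, ends at $0$, and lies in $\{-1,1\}^n$; that is, $v'\in D_1$. Since $v'$ differs from $w$ only at zero positions of $v$, we get $\dist(w,D_1)\le Z\le\Delta(w)$, as needed.

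The main obstacle is this repair step, namely ruling out that the zeros of $L_{1,1}$ make $D_1$-far words look close. The parity argument together with the alternating replacement is what I expect to handle it; everything else is a direct reuse of the earlier analysis.
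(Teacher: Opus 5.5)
Your proof is correct and uses the same key idea as the paper: the number of zeros is even, and replacing them in order by $+1,-1,+1,-1,\ldots$ raises each prefix sum by $0$ or $1$ while keeping the final sum at $0$. The only difference is packaging. The paper applies this pairing to an arbitrary nearest word $w'\in L_{1,1}$ to get $\dist(w,D_1)=\dist(w,L_{1,1})$ for even $n$, so the $L_{1,1}$ tester is used as a black box. You apply it to the Repair Lemma's output to get $\dist(w,D_1)\le\Delta(w)$ and then reopen the soundness proof. Your stage-by-stage inspection of the repair is also unnecessary: $v'$ agrees with $v$ off the zeros of $v$, and $w$ differs from $v$ at every such zero, so $\dist(w,v')\le\dist(w,v)$ holds for any $v\in L_{1,1}$.
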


\begin{proof}
    Let $\mathcal{A}$ be the algorithm which, on input $w \in \{-1,1\}^*$, immediately rejects if $|w|$ is odd and runs our $\eps$-tester for $L_{1,1}$ (\Cref{alg:excursion-tester}), otherwise. This algorithm inherits the query complexity $O(1/\eps^2)$ of the $L_{1,1}$ tester. 
    
    It remains to analyze correctness. When the length of $w$ is odd, it cannot be in $D_1$, so the tester correctly rejects.
    Now assume $|w|$ is even.
    It suffices to show that $\dist(w,D_1) = \dist(w,L_{1,1})$.
    As $D_1 \subseteq L_{1,1}$, it follows that $\dist(w, D_1) \geq \dist(w, L_{1,1})$.
    To show $\dist(w,D_1) \leq \dist(w, L_{1,1})$, fix a $w' \in L_{1,1}$ such that $\dist(w, w') = \dist(w, L_{1,1})$.
    As $w' \in L_{1,1}$, its final sum $s(w') = 0$.
    In addition, as $w'$ has even length, it must contain an even number of 0s.
    Pair the 0s consecutively from left to right and replace each pair with a $1$ followed by a $-1$.
    The pairs are disjoint, so any prefix splits at most one pair, and only after the first element.
    Hence, $s_j(\hat{w}) - s_j(w') \in \{ 0, 1 \}$ for all $j$.
    So $s_j(\hat{w}) \ge 0$ and $s(\hat{w}) = 0$, and $\hat{w}$ has no 0s.
    Therefore $\hat{w} \in D_1$.
    Furthermore, $\dist(w,\hat{w}) \le \dist(w,w')$ since we only replaced indices $i$ where $w'[i] = 0 \neq  w[i]$
    Therefore, $\dist(w,D_1) \leq \dist(w, \hat{w}) \dipi{\le} 
    \dist(w, w') = \dist(w, L_{1,1})$.
\end{proof}

\section{The optimality of our testers}\label{sec:optimality}
Finally, we show that our testers in \Cref{sec:excursion-languages} have optimal query complexity. Specifically, \Cref{sec:D1-lb} gives a lower bound for testing the Dyck-1 language, $D_1$, and \Cref{sec:excursion-languages-lb} gives a lower bound for testing all excursion languages.
These lower bounds match the dependence on $\eps$ in the complexity of our testers in \Cref{sec:excursion-languages}.

\subsection{A lower bound for the Dyck-1 language}\label{sec:D1-lb}
Now we prove the following theorem that shows that our tester for $D_1$ in \Cref{sec:D1-tester} is optimal.
\begin{theorem}\label{thm:D1-lb}
For every sufficiently small $\eps>0$, every (two-sided error, adaptive)
$\eps$-tester for $D_1$ has query complexity
$\Omega(1/\eps^2)$.
\end{theorem}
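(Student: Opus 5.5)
We prove \Cref{thm:D1-lb} via Yao's minimax principle (\cref{clm:yao}), reducing to the classical fact that distinguishing a fair coin from an $\eps$-biased coin requires $\Omega(1/\eps^2)$ samples. The two input distributions are built from $k=\Theta(1/\eps^2)$ \emph{blocks}. Set $n$ to be a large even multiple of $k$, and let each block have length $n/k$. We then fix two balanced block types. Type $P$ is $\term{0^L}^{n/(2k)}\circ\term{0^R}^{n/(2k)}$, which raises the height by $n/(2k)$ and returns it to its starting level. Type $N$ is $\term{0^R}^{n/(2k)}\circ\term{0^L}^{n/(2k)}$, which first dips by $n/(2k)$ and then recovers. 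Both types have net height change $0$, and each block equals a short up/down pattern. Because blocks are long, a tester that learns the type of a block essentially learns one random bit per block, and one query suffices to learn the type of a block. So the whole question reduces to how many block types the tester must learn.

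\textbf{Construction.} We prefix the input with an \emph{offset} of $c\sqrt{k}\cdot n/(2k)$ opening parentheses and suffix it with the same number of closing parentheses, adjusting lengths slightly. The middle part is a sequence of $k$ blocks whose types are i.i.d.
\begin{itemize}
\item In $\Dyes$, each block is $N$ with probability $1/2$.
\item In $\Dno$, each block is $N$ with probability $1/2+\eps'$, where $\eps'=\Theta(\eps)$.
\end{itemize}
The height at the bottom of the $t$-th block is the offset minus $(n/2k)\cdot$(the excess of $N$ blocks so far, in the appropriate sense). More precisely, a single $N$ block dips relative to its starting level, while the starting level itself does not change. So, as stated, these block types do not accumulate height.

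This needs to be fixed: the blocks must carry net drift. We therefore redefine the blocks as a sign-carrying walk. Block type $U$ is $\term{0^L}^{a}$ plus padding $\term{0^L0^R}$ pairs, and type $D$ is $\term{0^R}^{a}$ plus padding, where $a=\Theta(\eps\, n/\sqrt{k})$ is tuned so that a distance-$\eps n$ violation corresponds to a bias of the walk of order $1/\sqrt{k}$ per step. Equivalently, take $a = n/k$ with the blocks fully $U$ or $D$ and $k=1/\eps^2$. Under $\Dyes$, the block walk is a fair $\pm1$ walk of $k$ steps conditioned to be an excursion (after the offset); a uniform random Dyck path of block-length $k$ is the cleanest choice. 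Under $\Dno$, the walk has drift, so it ends at height $\approx 2\eps' k\cdot a=\Theta(\sqrt{k}\,a)\cdot\Theta(\eps\sqrt k)$.

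\textbf{Steps.} First, we check that $\Dno$ is $\eps$-far with high probability. By the converse direction of the Repair Lemma intuition, any word ending at height $H$ has distance at least $H/2$ from $D_1$, since each change moves the final sum by at most $2$. Choosing $\eps'$ and the offset so that the final height exceeds $2\eps n$ with probability $1-o(1)$ gives this, via a Chernoff bound on the number of $D$ blocks. Second, we check that $\Dyes$ lies in $D_1$ with high probability. The offset of order $\sqrt{k}$ blocks keeps the fair walk nonnegative with constant probability by the reflection principle. To get exact membership, we instead condition $\Dyes$ on being an excursion and add a matched suffix. Third, we bound the distinguishing advantage. With $q$ queries, the tester sees at most $q$ block types, plus positional information that is independent of the types once block boundaries are fixed and public. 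A Hellinger/KL computation on $q$ biased versus fair bits then gives advantage $O(\eps'\sqrt q)$, which is $o(1)$ unless $q=\Omega(1/\eps^2)$.

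\textbf{Main obstacle.} The delicate step is making $\Dyes$ supported on $D_1$ while keeping the observed block types nearly i.i.d. fair. Conditioning the walk on staying nonnegative and returning to $0$ biases the individual steps. We would handle this by showing that a uniformly random Dyck-like bridge over $k$ blocks, restricted to any $q\ll k$ revealed positions, is within total variation $O(q/k)+o(1)$ of i.i.d. fair bits. For the endpoint constraint this follows from exchangeability and a local CLT, and the nonnegativity constraint is absorbed by the $\Theta(\sqrt k)$ offset, under which the constraint holds with constant probability.
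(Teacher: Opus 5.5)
Your final construction, after you discard the balanced $P/N$ blocks, has the same skeleton as the paper's proof. It uses a sum-zero exchangeable walk versus a walk with drift, a prefix of opening parentheses to protect nonnegativity, and the fact that each symbol change moves the final sum by at most $2$. It then reduces, via exchangeability and sampling without replacement, to the $\Omega(1/\eps^2)$ lower bound for detecting a coin bias.

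Your parametrization is the problem: only $k=\Theta(1/\eps^2)$ blocks, with per-block bias $\eps'=\Theta(\eps)=\Theta(1/\sqrt{k})$. The number of exchangeable units does not grow with $n$, so none of your error terms tends to $0$, and three steps are false as written.
\begin{itemize}
\item[(i)] Under $\Dno$ the final height has mean $2\eps' k a$ and standard deviation $\sqrt{k}\,a$. With $a=n/k$ and $k=1/\eps^2$, both are $\Theta(\eps n)$. A Chernoff bound therefore cannot give farness with probability $1-o(1)$, only with probability $1-e^{-\Theta(C^2)}$ when $\eps'=C\eps$.
\item[(ii)] The without-replacement error $O(q/k)$ is $\Theta(1)$ at the very budget $q=\Theta(1/\eps^2)$ you must rule out, so ``$q\ll k$'' is not available.
\item[(iii)] A fair bridge dips below an offset of $c\sqrt{k}$ blocks with probability about $e^{-2c^2}$. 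Conditioning on an event that holds merely ``with constant probability'' costs that constant in total variation, so you need this probability close to $1$.
\end{itemize}
Separately, your suggested ``cleanest choice,'' a uniform Dyck path over the $k$ blocks, fails outright. Its first block is always up and its early blocks are biased upward, so $O(1)$ queries distinguish it from $\Dno$. Only the bridge conditioned to stay above $-$offset is viable.

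These defects can be repaired by bookkeeping:
\begin{itemize}
\item take $\eps'=C\eps$ with $C$ a large constant;
\item take the offset to be $c\sqrt{k}$ blocks with $c$ large;
\item rule out only $q\le c'/\eps^2$ with $c'\ll 1/C^2$, so that $O(q/k)+O(\eps'\sqrt{q})$ is small;
\item use a form of Yao's principle that tolerates a small constant failure of the support conditions, e.g.\ by conditioning $\Dno$ on farness.
\end{itemize}
But the blocks buy you nothing, since one query still reveals one $\pm1$ bit, and they are the source of every difficulty above.

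The paper instead takes single symbols as units. It compares $x\in\{-1,1\}^m$ uniform with $s(x)=0$ against $x$ uniform with $s(x)=2\gamma m$ exactly, for $m$ large as a function of $1/\gamma$, and feeds the tester $\phi(x)=\term{1}^m\circ x\circ\term{-1}^m$. Fixing both sums exactly makes farness deterministic, since $s(\phi(x))=2\gamma m$. The full-length prefix makes nonnegativity automatic, so no conditioning is needed. The Diaconis--Freedman error $O(q/m)$ then genuinely vanishes before the coin-bias bound is invoked.
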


The proof of \Cref{thm:D1-lb} uses the following lemma that states that distinguishing a string chosen uniformly at random from all strings in $\{-1,1\}^m$  whose symbols sum to $0$ from a string chosen uniformly at random from all strings in $\{-1,1\}^m$ whose symbols sum to $2\gamma m$ requires $\Omega(1/\gamma^2)$ queries.
It follows from the standard fact that estimating the bias of a coin within $\pm\gamma$ with constant probability requires $\Omega(1/\gamma^2)$ coin tosses.
Recall that we write the partial sums of a word $w \in \{-1,1\}^n$ as $s_0(w)=0$, 
$s_j(w)=\sum_{i\in[j]} w[i]$ for all $j\in[n]$, and the final sum
$s(w)=s_n(w)$.

\begin{lemma}[Distinguishing lower bound]\label{lem:distinguishing-lb}
Let $\gamma\in(0,1/10)$, and let $m\in\bN$ be sufficiently large as a function of
$1/\gamma$. We say an algorithm $\cA$ {\em distinguishes} distributions $\cH_0$ and $\cH_1$ if
\[
    \left| \Pru{w \sim \cH_0}{\cA(w) = 1} - \Pru{w \sim \cH_1}{\cA(w) = 1} \right| 
    \geq \frac{1}{3}.
\]
Then every deterministic algorithm with query access to a string $w\in\{-1,1\}^m$  
requires $\Omega(1/\gamma^2)$ queries to distinguish between $w\sim\mathcal H_0$ and $w\sim\mathcal H_1$, where
\begin{align*}
    \mathcal H_0 &= \operatorname{Uniform}\{w\in\{-1,1\}^m : s(w) = 0\};\\
    \mathcal H_1 &= \operatorname{Uniform}\{w\in\{-1,1\}^m : s(w) = 2 \gamma m\}.
\end{align*}
\end{lemma}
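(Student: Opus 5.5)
We reduce to the coin-bias problem by replacing the two fixed-sum distributions with their i.i.d.\ (binomial) counterparts. Let $\mathcal B_\mu$ denote the product distribution on $\{-1,1\}^m$ in which each coordinate independently equals $+1$ with probability $(1+\mu)/2$. Under $\mathcal B_0$ or $\mathcal B_{2\gamma}$, conditioning on the final sum $s(w)$ yields the uniform distribution on strings with that sum. Hence $\mathcal H_0$ and $\mathcal H_1$ are exactly the conditionings of $\mathcal B_0$ and $\mathcal B_{2\gamma}$ on $s(w)=0$ and $s(w)=2\gamma m$, respectively. Since every query sees a single coordinate, a deterministic $q$-query (possibly adaptive) algorithm sees the first $q$ coordinates of a uniformly random permutation of $w$, up to relabeling. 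Because both $\mathcal H_b$ are permutation invariant, the view of any adaptive algorithm has the same distribution as the view of the first $q$ coordinates. Thus its view is a sample without replacement of $q$ symbols from an urn containing $m/2$ (respectively $m(1/2+\gamma)$) plus-ones.

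The plan has three steps.

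\textbf{Step 1.} Compare sampling without replacement to sampling with replacement. When $q\le \sqrt m$ or so (we only need $q=O(1/\gamma^2)$, and $m$ is large as a function of $1/\gamma$), the total variation distance between the hypergeometric view and the i.i.d.\ view with bias $0$ (respectively $2\gamma$) is $O(q^2/m)=o(1)$. We can use the standard birthday-style bound (Diaconis--Freedman): with-replacement sampling avoids repeated indices with probability $1-O(q^2/m)$, and conditioned on no repeats the two agree.

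\textbf{Step 2.} Bound the distinguishing power between $q$ i.i.d.\ samples from the coin with bias $0$ and the coin with bias $2\gamma$. The total variation distance is at most $\sqrt{2\,\mathrm{KL}}$ by Pinsker's inequality. Since $\mathrm{KL}(\mathrm{Ber}(1/2+\gamma)\,\|\,\mathrm{Ber}(1/2))=O(\gamma^2)$ for $\gamma<1/10$, the total KL over $q$ samples is $O(q\gamma^2)$. Therefore if $q\le c/\gamma^2$ for a small constant $c$, the total variation distance is below $1/6$.

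\textbf{Step 3.} Combine the two steps by the triangle inequality. The algorithm's acceptance probabilities under $\mathcal H_0$ and $\mathcal H_1$ differ by at most $1/6+o(1)<1/3$, so any distinguisher needs $q=\Omega(1/\gamma^2)$.

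The only subtle point is the claim that adaptivity gives no advantage. We handle it via exchangeability: for a permutation-invariant distribution, the sequence of answers to distinct adaptively chosen indices has the same law as the sequence $w[1],\dots,w[q]$. This can be checked by induction on the number of queries, since conditioned on previous answers, the unqueried coordinates remain exchangeable. A mild subtlety is that $2\gamma m$ must be an even integer with the same parity as $m$. We assume $m$ is chosen accordingly, or round, which changes only lower-order terms.
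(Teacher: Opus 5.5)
Your proof is correct and follows essentially the same route as the paper: exchangeability reduces any adaptive algorithm's view to draws without replacement from an urn, a Diaconis--Freedman comparison passes to $q$ i.i.d.\ coin flips of bias $1/2$ versus $1/2+\gamma$, and a coin-bias lower bound finishes the argument. The differences are cosmetic: you use the cruder birthday bound $O(q^2/m)$ instead of the paper's $O(q/m)$, which is harmless because $m$ is large as a function of $1/\gamma$; and you prove the coin-bias bound directly via Pinsker's inequality and KL divergence rather than citing Bar-Yossef.
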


\begin{proof}
    Fix a deterministic $q$-query algorithm and assume w.l.o.g.\ that it never repeats a query. By symmetry, the transcript distribution under $\mathcal H_b$ is the same as the transcript obtained by answering each new query by the next draw without replacement from an urn with the corresponding number of $1$'s and $-1$'s
    (i.e., with $m/2$ ones for $\mathcal{H}_0$ and $m/2 + \gamma m$ ones for $\mathcal{H}_1$).
    By the sampling-without-replacement versus sampling-with-replacement bound of Diaconis and Freedman \cite[Theorem 4]{DiaconisF80}, this distribution on the query answers is within $O(q/m)$ in total variation distance of the distribution obtained from $q$ i.i.d.\ Bernoulli samples of bias $1/2$ in the first case and $1/2+\gamma$ in the second case.
    
    It suffices to rule out algorithms with \(q \le c/\gamma^2\), for a sufficiently
    small constant \(c\). We choose \(m\) sufficiently large as a function of \(1/\gamma\)
    so that \(q/m=o(1)\) throughout this range.
    Every $q$-query distinguisher for $\cH_0$ and $\cH_1$ yields a
    $q$-sample distinguisher between a fair coin and a coin of bias
    $1/2+\gamma$, up to a negligible change in the success probability.
    By the standard lower bound for estimating the bias of a coin, e.g., as in the proof of Bar-Yossef~\cite[Theorem~8.4]{BarYossef03ECCC}, this requires $q=\Omega(1/\gamma^2)$.
\end{proof}

\begin{proof}[Proof of \Cref{thm:D1-lb}]
    Recall that $D_1$ is defined over the alphabet $\{-1,1\}$ where $1$ corresponds to open parentheses and $-1$ corresponds to closing parentheses.
    Fix $\eps \in(0,1/30)$, let $\gamma = 3\eps$, and $m$ be large enough for \cref{lem:distinguishing-lb}; inputs have length $n = 3m$.
    Let $\cH_0$ and $\cH_1$ be the distributions over $\{ -1, 1 \}^m$ from \cref{lem:distinguishing-lb}.
    Let
    \[
        \phi(x) = \term{1}^m \circ  x \circ \term{-1}^m.
    \]
    We apply Yao's principle (\cref{clm:yao}) on distributions $\cD_0$ and $\cD_1$, where $\cD_b$ samples $x \sim \cH_b$ and outputs $\phi(x)$.
    By \cref{lem:distinguishing-lb}, we know that distinguishing $\cD_0$ and $\cD_1$ will take $\Omega(1/\gamma^2) = \Omega(1/\eps^2)$ queries, so all we need to do is verify that the distributions are supported correctly.
    If $s(x)=0$, then $\phi(x)\in D_1$. Indeed, the initial block $\term{1}^m$
    ensures that the partial sum never becomes negative while reading $x$, and
    the final block $\term{-1}^m$ ensures $s(\phi(x)) = 0$. 
    On the other hand, suppose that $s(x) = 2\gamma m.$ Observe that $s(\phi(x)) = 2\gamma m$ as well. Changing one symbol decreases $s(w)$ by at most $2$, so at least $\gamma m$ symbols must differ between $\phi(x)$ and every $w' \in D_1$ as $s(w') = 0$.
    Thus, $\dist(\phi(x),D_1) \geq \gamma m =3\eps m$. Since $|\phi(x)|=3m$, we get $\phi(x)$ is $\eps$-far from $D_1$.
\end{proof}

\subsection{A lower bound for all excursion languages}\label{sec:excursion-languages-lb}
 
Finally, we show that 
our tester for $L_{\ell,r}$ in \Cref{sec:excursion-languages} has optimal dependence on $\eps$.
\begin{theorem}\label{thm:Llr-lb}
Fix $\ell,r\in\bN$. For every sufficiently small $\eps>0$, every (two-sided error, adaptive)
$\eps$-tester for $L_{\ell,r}$ has query complexity
$\Omega(1/\eps^2)$.
\end{theorem}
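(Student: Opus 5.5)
We adapt the proof of \Cref{thm:D1-lb} to the alphabet $\Sigma_{\ell,r}$, which contains both $1$ and $-1$ for all $\ell,r\in\bN$. We again use Yao's principle (\cref{clm:yao}) together with \cref{lem:distinguishing-lb}, choosing $\gamma=c\eps$ for a suitable constant $c$ that depends on $\ell$ and $r$. For $x\in\{-1,1\}^m$ we embed $x$ by padding it so that membership is controlled only by the final sum. A natural choice is $\phi(x)=\term{1}^m\circ x\circ\term{-1}^m$, a word over $\{-1,0,1\}\subseteq\Sigma_{\ell,r}$ of length $n=3m$. Let $\cD_b$ output $\phi(x)$ for $x\sim\cH_b$. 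Every query to $\phi(x)$ either lies in a padding block, where the answer is known, or is a query to $x$. So any $q$-query tester distinguishing $\cD_0$ from $\cD_1$ gives a $q$-query distinguisher for $\cH_0$ and $\cH_1$. By \cref{lem:distinguishing-lb}, this requires $q=\Omega(1/\gamma^2)=\Omega(1/\eps^2)$.

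The yes side is immediate. If $s(x)=0$, then every prefix sum of $\phi(x)$ is at least $m-m=0$ while we are inside $x$, the sum is nonnegative in the padding blocks, and the total is $0$. Hence $\phi(x)\in L_{\ell,r}$.

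The main step is the no side, because symbols of weight up to $\rho=\max\{\ell,r\}$ let a single edit shift the sum by as much as $\ell+r$. Suppose $s(x)=2\gamma m$. Then $s(\phi(x))=2\gamma m$, and every $w'\in L_{\ell,r}$ has $s(w')=0$. Changing one symbol of a word over $\Sigma_{\ell,r}$ changes the total sum by at most $\ell+r$. Therefore
\[
\dist(\phi(x),L_{\ell,r})\ \ge\ \frac{2\gamma m}{\ell+r}.
\]
We need this to be at least $\eps n=3\eps m$, so we set $\gamma=\frac{3(\ell+r)}{2}\eps$. Since $\ell$ and $r$ are fixed, this is $\Theta(\eps)$, and for sufficiently small $\eps$ we have $\gamma<1/10$, as \cref{lem:distinguishing-lb} requires. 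The distance comparison uses only words of the same length $n$, so no parity or length issues arise. Unlike the reduction to $D_1$, no odd-length case needs separate treatment.

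Finally, we take $m$ large enough as a function of $1/\gamma$, as in \cref{lem:distinguishing-lb}; this is where the requirement that $n$ be large enters. Combining the three steps shows that any $\eps$-tester for $L_{\ell,r}$, whether adaptive or two-sided, must make $\Omega(1/\gamma^2)=\Omega\bigl(1/((\ell+r)^2\eps^2)\bigr)=\Omega(1/\eps^2)$ queries for fixed $\ell,r$. This completes the plan.
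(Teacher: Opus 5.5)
Your proof is correct and follows the paper's argument. Both use Yao's principle plus \cref{lem:distinguishing-lb}, a padded embedding that makes membership depend only on the final sum, and a distance bound obtained by dividing the excess sum by the largest change one edit can cause. The only difference is in the embedding: the paper scales it by $\kappa=\min\{\ell,r\}$ and splits into the cases $r\ge\ell$ (positive drift) and $\ell>r$ (negative drift), so that each edit moves the sum by at most $2\kappa$ in the needed direction and $\gamma=3\eps$ works with constants independent of $\ell,r$. Your unscaled $\pm1$ embedding with the cruder per-edit bound $\ell+r$ avoids the case split. It costs a $1/(\ell+r)^2$ factor and an $(\ell,r)$-dependent threshold on $\eps$, both harmless since $\ell,r$ are fixed.
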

\begin{proof}[Proof of \Cref{thm:Llr-lb}]
    Let $\kappa=\min\{\ell,r\}$. We reduce from the distinguishing problem in
    \Cref{lem:distinguishing-lb}. Fix $\eps\in(0,1/30)$ and let
    $\gamma=3\eps$.
    
    {\bf Case 1:} $r\ge \ell$, i.e.,  $\kappa=\ell$. Let $\mathcal A$ be an
    $\eps$-tester for $L_{\ell,r}$. Given 
    $x\in\{-1,1\}^m$, define
    \[
        \phi(x)=\term{\kappa}^m \circ \kappa x \circ \term{-\kappa}^m,
    \]
    where $\kappa x$ denotes the word obtained from $x$ by multiplying every
    symbol by $\kappa$.
    We apply Yao's principle (\cref{clm:yao}) on distributions $\cD_0$ and $\cD_1$, where $\cD_b$ samples $x \sim \cH_b$ and outputs $\phi(x)$.
    By \cref{lem:distinguishing-lb}, we know that distinguishing $\cD_0$ and $\cD_1$ will take $\Omega(1/\gamma^2) = \Omega(1/\eps^2)$ queries, so all we need to do is verify that the distributions are supported correctly.
    
    If $s(x)=0$, then $\phi(x)\in L_{\ell,r}$. Indeed, the first block raises
    the walk to height $\kappa m$, the partial sums inside the middle block are nonnegative (because it has at most $m$ negative symbols), and the final block brings the walk back to $0$. On the other hand, suppose $s(x)=2\gamma m$. Then $s(\phi(x))=2\gamma\kappa m$. Since $\kappa=\ell$, changing one symbol of
    $\phi(x)$ decreases the final sum by at most $2\kappa$: a symbol
    $\kappa$ can be changed to $-\kappa=-\ell$, while a $-\kappa$
    cannot be decreased. Hence, each word in $L_{\ell,r}$ differs from
    $\phi(x)$ on at least $\gamma m$ indices and thus, $\dist(\phi(x),L_{\ell,r}) \ge \gamma m = 3\eps m$. As $|\phi(x)|=3m$, we have that $\phi(x)$ is $\eps$-far from $L_{\ell, r}$.
    
    {\bf Case 2:} $\ell>r$, i.e., $\kappa=r$. This case is identical, except
    that  we use the symmetric version of
    \Cref{lem:distinguishing-lb}, where $\mathcal H_1$ is uniform over
    strings $x\in\{-1,1\}^m$ satisfying $s(x)=-2\gamma m$. This version
    follows from \Cref{lem:distinguishing-lb} by negating all symbols. The
    same map
    \[
        \phi(x)=\term{\kappa}^m \circ \kappa x \circ \term{-\kappa}^m
    \]
    maps $\mathcal H_0$ to words in $L_{\ell,r}$. If
    $s(x)=-2\gamma m$, then $s(\phi(x))=-2\gamma\kappa m$. Since
    $\kappa=r$, changing one symbol of $\phi(x)$ can increase the final sum
    by at most $2\kappa$: a symbol $-\kappa$ can be changed to
    $\kappa=r$, while a symbol $\kappa$ cannot be increased. Thus, again
    $\phi(x)$ is $\eps$-far from $L_{\ell,r}$, and the same reduction gives
    the lower bound $\Omega(1/\eps^2)$.
\end{proof}

\bibliography{bib} 
\bibliographystyle{alpha}

\crefalias{section}{appendix}
\crefalias{subsection}{appendix}

\appendix
\section{Tail Bounds and Yao's principle}\label{sec:bounds}

We use the following tail bounds in our proof of \cref{thm:excursion-up} in \Cref{sec:excursion-languages}.

\begin{theorem}[Kolmogorov's Inequality]\label{thm:kolmogorov-ineq}
Let \(X_1,\ldots,X_n\) be independent mean-zero random variables
with finite variances.
Let $S_j = \sum_{i=1}^j X_i$ for all $j \in [n]$. Then for all $t > 0$,
\[
    \Pr\left[\max_{j \in [n]}\left|S_j\right| \ge t\right] \leq \frac{\Var[S_n]}{t^2} .
\]
\end{theorem}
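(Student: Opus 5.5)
The plan is the classical martingale maximal-inequality argument via a first-passage decomposition. We may assume $\Var[S_n] < \infty$. Let $A$ be the event $\max_{j\in[n]}|S_j|\ge t$. For each $k\in[n]$, let $A_k$ be the event that $k$ is the \emph{first} index at which the threshold is crossed:
\[
    A_k=\{|S_k|\ge t \text{ and } |S_j|<t \text{ for all } j<k\}.
\]
The events $A_1,\ldots,A_n$ are pairwise disjoint, and their union is $A$. Moreover, each $A_k$ is determined by $X_1,\ldots,X_k$ alone.

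The first step is the lower bound
\[
    \Var[S_n]=\bE[S_n^2]\ge\sum_{k=1}^n \bE[S_n^2\mathbf 1_{A_k}],
\]
which uses only that the $X_i$ have mean zero (so $\bE[S_n^2]=\Var[S_n]$) and that the $A_k$ are disjoint. The second step expands each summand by writing $S_n=S_k+(S_n-S_k)$:
\[
    \bE[S_n^2\mathbf 1_{A_k}]
    =\bE[S_k^2\mathbf 1_{A_k}]
    +2\,\bE[S_k\mathbf 1_{A_k}(S_n-S_k)]
    +\bE[(S_n-S_k)^2\mathbf 1_{A_k}].
\]
The last term is nonnegative, so it can be dropped.

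The key point, and the only step needing care, is that the cross term vanishes. The random variable $S_k\mathbf 1_{A_k}$ is a function of $X_1,\ldots,X_k$, while $S_n-S_k=\sum_{i>k}X_i$ is a function of $X_{k+1},\ldots,X_n$. By independence, the expectation of the product factors as $\bE[S_k\mathbf 1_{A_k}]\cdot\bE[S_n-S_k]$, and the second factor is $0$ because each $X_i$ has mean zero. The factorization is legitimate because both factors are square integrable, by the finite-variance assumption, so the product is integrable by Cauchy--Schwarz.

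For the third step, note that on $A_k$ we have $S_k^2\ge t^2$, so $\bE[S_k^2\mathbf 1_{A_k}]\ge t^2\Pr[A_k]$. Summing over $k$ and using disjointness gives
\[
    \Var[S_n]\ge t^2\sum_{k=1}^n\Pr[A_k]=t^2\Pr[A].
\]
Dividing by $t^2>0$ yields the claimed inequality. I do not expect a genuine obstacle here beyond correctly justifying the vanishing of the cross term through the measurability of $A_k$ with respect to the first $k$ variables.
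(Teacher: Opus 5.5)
Your proof is correct. It is the standard first-passage argument for Kolmogorov's maximal inequality, and you handle the one delicate point properly: the cross term $\bE[S_k\mathbf{1}_{A_k}(S_n-S_k)]$ vanishes because $S_k\mathbf{1}_{A_k}$ depends only on $X_1,\ldots,X_k$, while $S_n-S_k$ is an independent mean-zero variable. The paper does not prove this statement; it quotes the inequality in the appendix as a standard tail bound, used only in the completeness analysis of the excursion tester, so your argument supplies the textbook proof the paper takes for granted.
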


\begin{theorem}[Bernstein's Inequality]\label{thm:bernstein-ineq}
    Let $X_1, \dots, X_n$ be independent random variables satisfying $|X_i - \Ex{X_i}| \leq \mu$ for all $i \in [n]$.
    Let $X = \sum_{i \in [n]} X_i$.
    Let $\sigma^2 = \Var(X)$.
    Then for all $t > 0$,
    \[
        \Pr\left[\left| 
        X- \Ex{X} \right| \ge t\right] \le 2 \exp \left( -\frac{t^2}{2\sigma^2 + \frac{2}{3} \mu t} \right).
    \]
\end{theorem}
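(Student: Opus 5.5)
We prove the statement by the standard Chernoff (exponential moment) method applied to the centered variables, after which the two-sided bound follows by symmetry. Set $Y_i = X_i - \Ex{X_i}$, so that the $Y_i$ are independent, mean zero, satisfy $|Y_i|\le \mu$, and have $\sigma_i^2 := \Var(Y_i)$ with $\sum_i \sigma_i^2 = \sigma^2$. The main step is to show that $\Pr\left[\sum_i Y_i \ge t\right] \le \exp\left(-\frac{t^2}{2\sigma^2 + \frac23 \mu t}\right)$. The lower tail then follows by applying the same bound to the variables $-Y_i$, which satisfy the same hypotheses, and a union bound over the two tails yields the factor $2$.

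\emph{Step 1 (moment generating function of one summand).} Fix $\lambda\in(0,3/\mu)$. Expand $\Ex{e^{\lambda Y_i}} = 1 + \sum_{k\ge 2}\lambda^k \Ex{Y_i^k}/k!$; the linear term vanishes because $\Ex{Y_i}=0$. Since $|Y_i|\le\mu$, we have $\Ex{Y_i^k} \le \mu^{k-2}\Ex{Y_i^2} = \mu^{k-2}\sigma_i^2$ for all $k\ge 2$. Next, use the elementary inequality $k! \ge 2\cdot 3^{k-2}$ for $k\ge 2$, which holds by induction on $k$. Substituting both bounds and summing the resulting geometric series gives
\[
\Ex{e^{\lambda Y_i}} \le 1 + \frac{\lambda^2\sigma_i^2}{2}\sum_{k\ge 2}\left(\frac{\lambda\mu}{3}\right)^{k-2} = 1 + \frac{\lambda^2\sigma_i^2}{2(1-\lambda\mu/3)} \le \exp\left(\frac{\lambda^2\sigma_i^2}{2(1-\lambda\mu/3)}\right),
\]
where the last inequality uses $1+x\le e^x$.

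\emph{Step 2 (independence and Markov).} By independence, $\Ex{e^{\lambda\sum_i Y_i}} = \prod_i \Ex{e^{\lambda Y_i}} \le \exp\left(\frac{\lambda^2\sigma^2}{2(1-\lambda\mu/3)}\right)$. Markov's inequality applied to $e^{\lambda\sum_i Y_i}$ then gives
\[
\Pr\left[\textstyle\sum_i Y_i\ge t\right] \le \exp\left(-\lambda t + \frac{\lambda^2\sigma^2}{2(1-\lambda\mu/3)}\right).
\]

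\emph{Step 3 (choice of $\lambda$).} Take $\lambda = t/(\sigma^2 + \mu t/3)$. This choice satisfies $\lambda\mu/3 < 1$, so Step 1 applies. It also gives $1-\lambda\mu/3 = \sigma^2/(\sigma^2+\mu t/3)$. Substituting, the exponent becomes $-\lambda t + \lambda t/2 = -\frac{t^2}{2(\sigma^2+\mu t/3)}$, which is exactly the claimed bound. The degenerate case $\sigma^2=0$ needs a separate remark: then each $Y_i=0$ almost surely, so the probability on the left is $0$ and the bound holds trivially.

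The only genuinely delicate point is Step 1, namely the combinatorial inequality $k!\ge 2\cdot 3^{k-2}$ together with the moment bound $\Ex{Y_i^k}\le\mu^{k-2}\sigma_i^2$. Together these produce the denominator $1-\lambda\mu/3$ and hence the constant $\frac23\mu t$ in the statement. Everything after Step 1 is routine algebra.
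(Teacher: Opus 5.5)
The paper gives no proof of this statement: it is quoted in the appendix as a standard tail bound, used only in the soundness analysis of the excursion tester, so there is nothing to compare against. Your argument is the standard exponential-moment proof, and it is correct as written.

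The moment bound $\mathbb{E}[Y_i^k]\le \mu^{k-2}\sigma_i^2$, the estimate $k!\ge 2\cdot 3^{k-2}$, and the choice $\lambda=t/(\sigma^2+\mu t/3)$ yield exactly the exponent $-t^2/(2\sigma^2+\frac23\mu t)$. Your separate treatment of $\sigma^2=0$ covers the only case where $\lambda\mu/3<1$ fails, and it also covers $\mu=0$. The termwise expansion of $\mathbb{E}[e^{\lambda Y_i}]$ is legitimate because each $Y_i$ is bounded.
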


We use the following version of Yao's principle adapted from \cite{RaskhodnikovaS06}. 

\begin{claim}[Yao's Principle, adapted from \cite{RaskhodnikovaS06}]\label{clm:yao}
    To prove a lower bound $q$ on the worst-case query complexity of a randomized algorithm testing a language $L$, it suffices to give two distributions, $\Dyes$ supported over strings in $L$ and $\Dno$ supported over strings $\eps$-far from $L$ with probability $1-o(1)$, 
    and show that it is hard for any $q$-query deterministic algorithm $\mathcal{A}$ to distinguish $\Dyes$ from $\Dno$. By ``hard to distinguish,'' we mean
    \[
        \left| \Pru{x \sim \Dyes}{\cA(x) \text{ accepts}} - \Pru{y \sim \Dno}{\cA(y) \text{ accepts}} \right| < 
        \frac{1}{3}.
    \]
\end{claim}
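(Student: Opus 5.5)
The plan is the standard averaging argument behind Yao's principle, plus a small amount of slack to absorb the $o(1)$ mass of $\Dno$ that is not $\eps$-far from $L$. I argue by contraposition. Suppose $\mathcal{R}$ is a randomized $\eps$-tester for $L$ making at most $q$ queries on inputs of length $n$. View $\mathcal{R}$ as a probability distribution over deterministic $q$-query algorithms $\cA_\omega$, indexed by its random seed $\omega$. If $\mathcal{R}$ is adaptive, the $\cA_\omega$ are deterministic decision trees of depth $q$. If $\mathcal{R}$ is nonadaptive, each $\cA_\omega$ is a deterministic nonadaptive algorithm, so the same argument also gives the nonadaptive version used for \cref{thm:HS-lb-nonadaptive}.

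The first step is to first boost the success probability so that the $o(1)$ slack can be absorbed. I run $\mathcal{R}$ independently a constant number of times and take the majority vote. This gives a tester $\mathcal{R}'$ with error at most $1/10$ on both sides and $O(q)$ queries, so the lower bound is affected only by a constant factor. This boosting preserves nonadaptivity, because all the queries of the repetitions can be issued up front.

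The second step is the averaging. Every $x \in L$ satisfies $\Pr_\omega[\cA_\omega(x) \text{ accepts}] \ge 9/10$. Every $y$ that is $\eps$-far from $L$ satisfies $\Pr_\omega[\cA_\omega(y) \text{ accepts}] \le 1/10$. Taking expectations over $x \sim \Dyes$ gives acceptance probability at least $9/10$. Taking expectations over $y \sim \Dno$, and splitting on whether $y$ is $\eps$-far, gives acceptance probability at most $1/10 + o(1)$, since $y$ fails to be far only with probability $o(1)$. Exchanging the order of expectation (Fubini) yields
\[
\bE_\omega\Bigl[\Pru{x\sim\Dyes}{\cA_\omega(x)\text{ accepts}} - \Pru{y\sim\Dno}{\cA_\omega(y)\text{ accepts}}\Bigr] \ge \tfrac{8}{10} - o(1) > \tfrac13 .
\]
Hence some fixed seed $\omega$ gives a deterministic algorithm $\cA_\omega$ with $O(q)$ queries that distinguishes $\Dyes$ from $\Dno$ with advantage greater than $1/3$. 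This contradicts the assumed hardness for deterministic algorithms with that many queries. Therefore every randomized tester needs $\Omega(q)$ queries.

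The only delicate point is the $o(1)$ mass of $\Dno$ on strings that are not $\eps$-far. On those strings the tester has no guarantee, so without boosting the gap would be $1/3 - o(1)$ rather than more than $1/3$. The amplification step handles this. Alternatively, it suffices to note that all applications in the paper prove a distinguishing advantage of $o(1)$, which is much stronger than $<1/3$.
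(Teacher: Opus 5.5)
Your proof is correct. The paper states this claim without proof, citing Raskhodnikova--Smith. The standard argument behind it is your averaging step without the amplification: a randomized $q$-query tester is a mixture of deterministic $q$-query algorithms, so some component has at least the mixture's advantage.

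Your amplification is what makes the clause ``$\eps$-far with probability $1-o(1)$'' rigorous. Plain averaging only produces a deterministic distinguisher with advantage $\ge 1/3-o(1)$, which does not contradict the strict bound $<1/3$. This is a real issue, not an artifact of the unamplified proof. For example, let $L$ be the $n$-bit strings with at most $n/3$ ones and $\eps=1/3$. Querying one random bit and accepting iff it is $0$ is a valid tester. Now let $\Dyes$ be uniform over strings with exactly $n/3$ ones, and let $\Dno$ be uniform over strings with exactly $2n/3$ ones with probability $1-\tfrac1n$ (and drawn from $\Dyes$ otherwise). Then every deterministic algorithm making at most two queries has advantage at most $(1-\tfrac1n)\cdot\tfrac13<\tfrac13$. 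So the literal claim with $q=2$ would give a false lower bound of $2$.

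The price of your fix is a constant factor, so phrase the contrapositive accordingly: a randomized tester with $q/k$ queries yields, after $k$-fold majority amplification and averaging, a deterministic $q$-query distinguisher. As written, you contradict hardness for $O(q)$-query algorithms, which is not the hypothesis. The resulting $\Omega(q)$ is all the paper needs. The unamplified argument gives the sharper exact-$q$ bound whenever $\Dno$ has no slack or the hardness bound is $1/3-\Omega(1)$, and that covers every application in the paper.

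One correction to your closing remark: the $D_1$ and $L_{\ell,r}$ bounds via \cref{lem:distinguishing-lb} give advantage only $<1/3$, not $o(1)$. There plain averaging still suffices, because $\Dno$ is supported entirely on $\eps$-far strings.
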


Suppose there exists an event $\Bad$ that helps the algorithm distinguish between $\Dyes$ and $\Dno$ and occurs with the same probability under both distributions. Then standard arguments yield the following claim. Recall from \cref{sec:HS-lb} that $\Tru{\cA}{\cD}$ represents the distribution of transcripts of algorithm $\cA$ running on input drawn from $\cD$.

\begin{claim}\label{claim:indistinguishability-via-probability-of-bad}
    Let $\Dyes$ and $\Dno$ be defined as in \cref{clm:yao}. Let $\Bad$ be the event that occurs with the same probability under both distributions, such that $\Tru{\cA}{\Dyes \mid \overline{\Bad}} = \Tru{\cA}{\Dno \mid \overline{\Bad}}$. Then
    \begin{align*}
         &\left| \Pru{x \sim \Dyes}{\cA(x) \text{ accepts}} - \Pru{y \sim \Dno}{\cA(y) \text{ accepts}} \right|
         \leq \Pr[\Bad].
    \end{align*}        
\end{claim}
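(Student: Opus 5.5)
Since $\Tru{\cA}{\Dyes \mid \overline{\Bad}} = \Tru{\cA}{\Dno \mid \overline{\Bad}}$ and $\cA$ is deterministic, its decision is a function of the transcript. I will therefore split each acceptance probability according to whether $\Bad$ occurs, using the law of total probability, and show that the two distributions agree on every term except the contribution from $\Bad$.

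Write $p=\Pr[\Bad]$, which by hypothesis is the same under $\Dyes$ and $\Dno$. For $\cD\in\{\Dyes,\Dno\}$,
\[
\Pr_{\cD}[\cA \text{ accepts}] = (1-p)\Pr_{\cD}[\cA \text{ accepts}\mid\overline{\Bad}] + p\,\Pr_{\cD}[\cA\text{ accepts}\mid \Bad].
\]
The event ``$\cA$ accepts'' is determined by the transcript, so the conditional transcript distributions being equal gives $\Pr_{\Dyes}[\cA \text{ accepts}\mid\overline{\Bad}]=\Pr_{\Dno}[\cA\text{ accepts}\mid\overline{\Bad}]$. Subtracting the two expansions, the first terms cancel. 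What remains is
\[
p\bigl(\Pr_{\Dyes}[\cA\text{ accepts}\mid\Bad]-\Pr_{\Dno}[\cA\text{ accepts}\mid\Bad]\bigr).
\]
This is a difference of two probabilities, so its absolute value is at most $p$, which is the claim.

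The edge cases are immediate. If $p=1$, the $\overline{\Bad}$ terms vanish. If $p=0$, the conditional probabilities given $\Bad$ are undefined, but their terms carry weight zero. The main point needing care is the equal-probability hypothesis: without it, the $\overline{\Bad}$ terms would carry different weights $1-p^+$ and $1-p^-$. In that case one gets the weaker bound $\max(p^+,p^-)+|p^+-p^-|$ instead, so I would use the hypothesis explicitly, as the claim states it.
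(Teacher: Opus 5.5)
Your proof is correct and is essentially the paper's argument: expand both acceptance probabilities by the law of total probability over $\Bad$ with the common weight $\Pr[\Bad]$, cancel the $\overline{\Bad}$ terms using the equality of the conditional transcript distributions (acceptance being a function of the transcript), and bound the remainder by $\Pr[\Bad]$ times a difference of two probabilities. Your side remark about dropping the equal-probability hypothesis is valid but loose: writing both distributions as $(1-p)$ times the common conditional distribution plus a $p$-weighted remainder (for $p=\max(p^+,p^-)$) shows that $\max(p^+,p^-)$ already suffices.
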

    
\begin{proof} By the law of total probability and using, in the last equality, the fact that $\Tru{\cA}{\Dyes \mid \overline{\Bad}} = \Tru{\cA}{\Dno \mid \overline{\Bad}}$, we get
        \begin{align*}
         &\left| \Pru{x \sim \Dyes}{\cA(x) \text{ accepts}} - \Pru{y \sim \Dno}{\cA(y) \text{ accepts}} \right|\\
         &= \Big| \Pru{x \sim \Dyes \mid \overline{\Bad}}{\cA(x) \text{ accepts}} \cdot \Pr[\overline{\Bad}] + \Pru{x \sim \Dyes \mid \Bad}{\cA(x) \text{ accepts}} \cdot \Pr[\Bad]\\
         &\qquad - \Pru{x \sim \Dno \mid \overline{\Bad}}{\cA(x) \text{ accepts}} \cdot \Pr[\overline{\Bad}] - \Pru{x \sim \Dno \mid \Bad}{\cA(x) \text{ accepts}} \cdot \Pr[\Bad] \Big|\\
         &\le \Pr[\overline{\Bad}] \cdot \left| \Pru{x \sim \Dyes \mid \overline{\Bad}}{\cA(x) \text{ accepts}} - \Pru{x \sim \Dno \mid \overline{\Bad}}{\cA(x) \text{ accepts}} \right|\\
         &\qquad+ \Pr[\Bad] \cdot \left| \Pru{x \sim \Dyes \mid \Bad}{\cA(x) \text{ accepts}} - \Pru{x \sim \Dno \mid \Bad}{\cA(x) \text{ accepts}} \right|\\
         &= 0 + \Pr[\Bad] \cdot \left| \Pru{x \sim \Dyes \mid \Bad}{\cA(x) \text{ accepts}} - \Pru{x \sim \Dno \mid \Bad}{\cA(x) \text{ accepts}} \right| &\\
         &\le \Pr[\Bad]. &\qedhere
    \end{align*}
\end{proof}

\section{Anti-concentration bounds}\label{sec:anti-conc}
 
In this section, we prove \cref{lem:block-anticoncentration} with $C_0 = 5$ for all $b \ge 15$.
The idea is to split each uniform random variable into a fair coin with a large weight and a small remainder.
The coins spread the sum over $\Omega(\sqrt{s})$ multiples of the weight, and the remainders spread it evenly over the residues modulo the weight.
We first handle variables that take an even number of values, and then reduce the general case to it by
conditioning on the variables that take the value $b$, which are rare.
We use the following bound on binomial point probabilities.
A version of this bound for large $s$ appears as \cite[Lemma~5.11]{FischerMS18};
since we need it for all $s$, we include a short proof. 
 
\begin{lemma}\label{lem:binomial-point}
    For all $s \in \mathbb{N}$ and $t \in \mathbb{Z}$, if random variables $Y_1, \ldots, Y_s$ are independent and uniform over $\{0, 1\}$, then $\Pr \left[ \sum_{i=1}^{s} Y_i = t \right] \le \frac 1{\sqrt{s}}$.
\end{lemma}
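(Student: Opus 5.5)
The bound reduces to estimating the central binomial coefficient, since the point probability is $\binom{s}{t}2^{-s}$ and this is maximized at $t=\lfloor s/2\rfloor$ by unimodality of binomial coefficients. So it suffices to show $\binom{s}{\lfloor s/2\rfloor}2^{-s}\le 1/\sqrt{s}$ for every $s\in\mathbb{N}$. For $t\notin\{0,\dots,s\}$ the probability is $0$ and there is nothing to prove.

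\textbf{Even case.} I would handle even $s=2k$ first with the classical Wallis-type product. Let $a_k=\binom{2k}{k}4^{-k}$. Then
\[
a_k=\prod_{i=1}^{k}\frac{2i-1}{2i}.
\]
Since $\frac{2i-1}{2i}\le\frac{2i}{2i+1}$, we get
\[
a_k^2\le\prod_{i=1}^k\frac{2i-1}{2i}\cdot\frac{2i}{2i+1}=\frac{1}{2k+1}.
\]
Hence $a_k\le 1/\sqrt{2k+1}<1/\sqrt{s}$.

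\textbf{Odd case.} For odd $s=2k+1$, I would use
\[
\binom{2k+1}{k}2^{-(2k+1)}=\binom{2k}{k}4^{-k}\cdot\frac{2k+1}{2(k+1)}.
\]
This follows from $\binom{2k+1}{k}=\binom{2k}{k}\frac{2k+1}{k+1}$. Combining it with the even bound gives
\[
\Pr\le\frac{1}{\sqrt{2k+1}}\cdot\frac{2k+1}{2k+2}=\frac{\sqrt{2k+1}}{2k+2}.
\]
We need $\frac{\sqrt{2k+1}}{2k+2}\le\frac{1}{\sqrt{2k+1}}$, which is equivalent to $2k+1\le 2k+2$, and that holds. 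The case $s=1$ (i.e.\ $k=0$) gives probability $1/2\le 1$ and is covered by the same formula.

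\textbf{Main obstacle.} The only step needing care is getting a bound valid for all $s$, not just asymptotically, without losing a constant factor. The product inequality for $a_k$ is what secures this. Everything else is elementary algebra.
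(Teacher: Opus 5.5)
Your proof is correct and follows essentially the same route as the paper: reduce to the central coefficient, prove the Wallis-type bound $\binom{2k}{k}4^{-k}\le 1/\sqrt{2k+1}$ via $\frac{2i-1}{2i}\le\frac{2i}{2i+1}$, and derive the odd case from the even one. The only cosmetic difference is in the odd case, where you use the exact ratio $\binom{2k+1}{k}=\frac{2k+1}{k+1}\binom{2k}{k}$ while the paper uses Pascal's rule to get the cruder bound $\binom{2k+1}{k}\le 2\binom{2k}{k}$; either suffices.
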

 
\begin{proof}
    The probability equals $\binom{s}{t} 2^{-s}$, which is maximized at $t = \lfloor s/2 \rfloor$.
    For all integers $r \ge 0$, let $P_r := \binom{2r}{r} 4^{-r}$.
    Since $(2r)! = \prod_{i=1}^{r} (2i-1)(2i)$ and $2^r r! = \prod_{i=1}^{r} 2i$, we have $P_r = \prod_{i=1}^{r} \frac{2i-1}{2i}$.
    Moreover, $\frac{2i-1}{2i} \le \frac{2i}{2i+1}$ for all $i \ge 1$, because $(2i-1)(2i+1) \le (2i)^2$.
    Hence,
    \[
        P_r^2 \le \prod_{i=1}^{r} \frac{2i-1}{2i} \cdot \frac{2i}{2i+1}
        = \prod_{i=1}^{r} \frac{2i-1}{2i+1} = \frac{1}{2r+1}.
    \]
    If $s = 2r$, the maximum probability is $P_r \le 1/\sqrt{2r+1} < 1/\sqrt{s}$.
    If $s = 2r+1$, then $\binom{2r+1}{r} = \binom{2r}{r} + \binom{2r}{r-1} \le 2\binom{2r}{r}$, so the maximum probability is $\binom{2r+1}{r} 2^{-(2r+1)} \le P_r \le 1/\sqrt{2r+1} = 1/\sqrt{s}$.
\end{proof}
 
\begin{claim}\label{clm:even-support}
    For all $s, m \in \mathbb{N}$ and $t \in \mathbb{Z}$, if random variables $Z_1, \ldots, Z_s$ are independent and uniform over $\{0, 1, \ldots, 2m-1\}$, then
    $\Pr \left[ \sum_{i=1}^{s} Z_i = t \right] \le \frac 1{m\sqrt{s}}$.
\end{claim}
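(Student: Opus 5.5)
Following the idea announced at the start of this appendix, we split each $Z_i$ into a fair coin with large weight and a small remainder. Write $Z_i = m Y_i + R_i$, where $Y_i = \lfloor Z_i/m \rfloor \in \{0,1\}$ and $R_i = Z_i \bmod m \in \{0,\dots,m-1\}$. Since $Z_i$ is uniform over $\{0,\dots,2m-1\}$, the map $Z_i \mapsto (Y_i,R_i)$ is a bijection onto $\{0,1\}\times\{0,\dots,m-1\}$. Hence $Y_i$ and $R_i$ are independent, with $Y_i$ a fair coin and $R_i$ uniform on $\{0,\dots,m-1\}$. Across different $i$ all these variables are independent.

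Let $Y=\sum_i Y_i$ and $R=\sum_i R_i$. Then
\[
  \sum_{i=1}^s Z_i = mY + R .
\]
We condition on the remainders $R_1,\dots,R_s$, which fix $R$. The event $\sum_i Z_i = t$ becomes $mY = t - R$. This event is empty unless $t-R$ is divisible by $m$. When it is divisible, it is the event $Y = (t-R)/m$.

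We apply \cref{lem:binomial-point} to the coins $Y_1,\dots,Y_s$, which are independent of the remainders. It gives, conditionally on the remainders,
\[
  \Pr\Bigl[\textstyle\sum_i Z_i = t \,\Bigm|\, R_1,\dots,R_s\Bigr] \le \frac{1}{\sqrt s}\cdot \mathbf 1[R \equiv t \pmod m].
\]
Taking expectations, $\Pr[\sum_i Z_i = t] \le \frac{1}{\sqrt s}\Pr[R\equiv t \pmod m]$.

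It remains to show $\Pr[R \equiv t \pmod m] = 1/m$. Since $s\ge 1$, $R_1$ is uniform modulo $m$ and independent of $R_2+\dots+R_s$. The sum of a uniform residue and an independent residue is uniform, so $R \bmod m$ is exactly uniform on $\mathbb{Z}_m$. This yields the bound $\frac{1}{m\sqrt s}$.

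There is no real obstacle here. The only points needing care are the independence of the coin part from the remainder part, which comes from the bijection, and remembering to use $s\ge1$ in the uniformity step.
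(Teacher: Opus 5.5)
Your proposal is correct and follows essentially the same argument as the paper. Both split $Z_i = mY_i + R_i$ using the bijection, bound the coin sum's point probability by $1/\sqrt{s}$ via \cref{lem:binomial-point}, and use uniformity of $R_1$ modulo $m$ to get $\Pr[R \equiv t \pmod m] = 1/m$. You condition on the remainders where the paper sums over the values of $Y$, which is the same computation in a different order.
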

 
\begin{proof}
    For each $i \in [s]$, write $Z_i = mY_i + R_i$, where $Y_i := \lfloor Z_i/m \rfloor \in \{0, 1\}$ is the \emph{coin} and $R_i := Z_i - mY_i \in \{0, \ldots, m-1\}$ is the \emph{remainder}.
    The map $(y, r) \mapsto my + r$ is a bijection from $\{0,1\} \times \{0, \ldots, m-1\}$ to $\{0, \ldots, 2m-1\}$.
    Hence, the pair $(Y_i, R_i)$ is uniform over $\{0,1\} \times \{0, \ldots, m-1\}$; that is, $Y_i$ and $R_i$ are independent, $Y_i$ is uniform over $\{0,1\}$, and $R_i$ is uniform over $\{0, \ldots, m-1\}$.
    Let $Y := \sum_{i=1}^{s} Y_i$ and $R := \sum_{i=1}^{s} R_i$.
    Therefore,
    \begin{align*}
        \Pr \left[ \sum_{i=1}^{s} Z_i = t \right]
        &= \sum_{y \in \mathbb{Z}} \Pr[Y = y] \cdot \Pr[R = t - my]\\
        &\le \frac{1}{\sqrt{s}} \sum_{y \in \mathbb{Z}} \Pr[R = t - my]
        = \frac{1}{\sqrt{s}} \Pr[R \equiv t\ (\mathrm{mod}\ m)].
    \end{align*}
    The inequality holds by \cref{lem:binomial-point}.
    The last equality holds because the events $\{R = t - my\}$ for $y \in \mathbb{Z}$ are disjoint, and their union is the event $\{R \equiv t\ (\mathrm{mod}\ m)\}$.
    Finally, fix the values of $R_2, \ldots, R_s$.
    Exactly one value of $R_1 \in \{0, \ldots, m-1\}$ satisfies $R_1 \equiv t - R_2 - \cdots - R_s\ (\mathrm{mod}\ m)$, so $\Pr[R \equiv t\ (\mathrm{mod}\ m)] = 1/m$.
\end{proof}
 
\begin{proof}[Proof of \cref{lem:block-anticoncentration}]
    We show that the lemma holds with $C_0 = 5$ for all $b \ge 15$.
    Fix $s \in \mathbb{N}$, an integer $b \ge 15$, and $t \in \mathbb{Z}$, and let $W := \sum_{i=1}^{s} W_i$.
    Let $m := \lfloor (b+1)/2 \rfloor$, so that $m \ge b/2$ and $2m \in \{b, b+1\}$.
    Thus, a variable $W_i$ can lie outside $\{0, \ldots, 2m-1\}$ only if $b$ is even and $W_i = b$.
    
    Let $D := \{i \in [s] : W_i = 2m\}$, and let $N := |D|$.
    The events $\{i \in D\}$ for $i \in [s]$ are independent, and each has probability $p := (b+1-2m)/(b+1) \le 1/(b+1)$. 
    Conditioned on $D$, the variables $W_i$ with $i \notin D$ are independent and uniform over $\{0, \ldots, 2m-1\}$, and $W$ equals $bN$ plus their sum.
    If $N \le s/2$, then there are $s - N \ge s/2$ such variables, so \cref{clm:even-support} gives
    \[
        \Pr[W = t \mid D] \le \frac{1}{m\sqrt{s-N}}
        \le \frac{2\sqrt{2}}{b\sqrt{s}}.
    \]
    It remains to bound $\Pr[N > s/2]$.
    Let $R := \lfloor s/2 \rfloor + 1$, so that $N > s/2$ if and only if $N \ge R$, and $R \ge (s+1)/2$.
    By a union bound over the $\binom{s}{R} \le 2^{s}$ sets of $R$ indices in $[s]$,
    \[
        \Pr[N > s/2] \le \binom{s}{R} p^{R}
        \le 2^{s} p^{(s+1)/2} = 2p\,(4p)^{(s-1)/2}
        \le 2p \cdot 2^{-(s-1)} \le \frac{2p}{\sqrt{s}} < \frac{2}{b\sqrt{s}}.
    \]
    Here, we used that $4p \le 4/(b+1) \le 1/4$, since $b \ge 15$, and that
    $2^{s} \ge 2\sqrt{s}$ for all $s \ge 1$. Combining the two bounds,
    \[
        \Pr[W = t]
        \le \Pr[N > s/2] + \max_{D \,:\, |D| \le s/2} \Pr[W = t \mid D]
        < \frac{2}{b\sqrt{s}} + \frac{2\sqrt{2}}{b\sqrt{s}}
        < \frac{5}{b\sqrt{s}}. \qedhere
    \]
\end{proof}

\section{Testability and relationships of subclasses of CFLs}\label{sec:classes}

In this section, we situate our results within the broader context of property testing context-free languages. In \Cref{sec:hierarchy}, we describe a hierarchy of linear languages, within which \HSD from \Cref{thm:HSD-lb} sits at the most restricted level. In \Cref{sec:complexity-diagram}, we survey known complexity bounds for subclasses of context-free languages.

\subsection{The Nondeterministic Biautomata Hierarchy}\label{sec:hierarchy}
\begin{figure}
    \centering
    \includegraphics[]{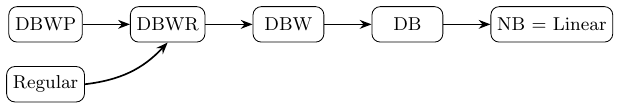}
    \caption{The hierarchy of language classes recognized by nondeterministic biautomata. An arrow from one class to another represents that the former class is contained in the latter.}
    \label{fig:NB-hierarchy}
\end{figure}

Just as the regular languages can be characterized via finite automata, the linear languages can be characterized using nondeterministic biautomata, introduced by \cite{HolzerJ14}. A biautomaton has two reading heads, one at each end of the input, and at each step the machine nondeterministically chooses which head to advance, reading \emph{either} the left \emph{or} the right end of the string. The set of languages recognized by these machines, NB, is exactly the linear languages. Requiring determinism, i.e., allowing exactly one computation path on every input, yields a strict subclass DB which contains the regular languages. Further restrictions on allowed transitions produce a hierarchy of subclasses of the linear languages \cite{JiraskovaK22a}. Our hard languages live at the bottom of the hierarchy, in a class that is incomparable to the regular languages (i.e., neither contains the other).
  
We describe these subclasses below in increasing order of restrictiveness. Since the transition conditions are technical, we instead give the type of grammars that generate its languages. The overall structure of the hierarchy can be seen in \cref{fig:cfl-hierarchy}.
    \begin{description}[itemindent=-1em]
        \item[DBW] (deterministic biautomata \emph{weak from the right}): the languages generated by deterministic linear grammars as defined in \cite{HigueraO02}.
        \item[DBWR] (deterministic biautomata weak from the right \emph{with restricted final states}): the languages generated by \emph{linear} LL(1) grammars. LL(1) grammars are grammars with simple and efficient parsers which read left-to-right while looking at most one symbol ahead.
        \item[DBWP] (deterministic biautomata weak from the right \emph{with a passive final state}): exactly the Nasu--Honda deterministic linear languages from \cite{NasuH69}. Languages in this class are generated by grammars $G=(V,\Sigma,R,S)$ whose productions satisfy the following two conditions (where $V$ is the set of non-terminals, $\Sigma$ is the set of terminals, $R$ is the set of production rules, and $S$ is the start symbol).
            \begin{enumerate}
                \item Every production is of the form $A\to aBu$ or $A\to a$, where
                $a\in\Sigma$, $B\in V$, and $u\in\Sigma^*$.
                \item For all $A\in V$, $a\in\Sigma$, and
                $\alpha,\beta\in V\Sigma^*\cup\{\lambda\}$, if 
                $A\to a\alpha$ and $A\to a\beta$ are in $R$, then~$\alpha=\beta$.
            \end{enumerate}
    \end{description}

\subsection{Known query complexity bounds for subclasses of CFLs}\label{sec:complexity-diagram}

Property testing for formal languages has been studied across a number of language classes, with query complexities spanning from constant (depending only on $\eps$) to linear in the input length $n$. Alon et al.~\cite{AlonKNS00} showed that the regular languages sit as a testable core inside the context-free languages: every regular language has a tester whose query complexity is independent of the input length and depends only on the distance parameter. They also showed that this testability does not extend to the wider class of context-free languages (CFL) by exhibiting a context-free language which requires $\Omega(\sqrt{n})$ queries to test. Subclasses of CFL have also been studied and proven to not be testable with a constant number of queries. Deterministic one-counter languages (detOCL) were shown to contain languages requiring $\Omega(\log \log n)$ queries \cite{LachishNS08}, and the Dyck languages $(\{D_m\}_{m \in \bN})$ contain languages requiring $\Omega(\qb)$ queries (\Cref{thm:dyck-lb}). In \Cref{thm:HSD-lb}, we show that the class of linear languages contains a language which requires $\Omega(\qb)$ queries to test, and that this language sits in a class DBWP, at the bottom of the nondeterministic biautomaton hierarchy described in \Cref{sec:hierarchy}. This result shows that even very restricted classes of context-free languages may not be {\em testable}.

However, some nonregular CFLs are easily testable.
In the same spirit as \cite{AlonKNS00}, Goldhirsh and Viderman \cite{GoldhirshV13} identified a restriction of the deterministic one-counter languages -- the weak deterministic one-counter languages -- that recovers constant-query testability, carving out a tractable subclass within an otherwise hard family. In \Cref{thm:excursion}, we show that excursion languages, a natural generalization of the Dyck-1 language, are testable. 

\Cref{fig:cfl-hierarchy} 
illustrates
the relationships between these classes, 
collecting the currently known bounds, both from prior work and from this paper. It mentions 3CNF properties as a non-context-free language class requiring $\Omega(n)$ queries to test \cite{Ben-SassonHR05}; whether any CFL requires $\omega(\sqrt n)$ queries remains open.

\begin{figure}[h]
    \centering
    \scalebox{0.75}{
    \begin{tikzpicture}[
    node distance = 4mm and 30mm,   
    every node/.style = {font=\Large},
    cls/.style       = {draw, rounded corners, inner sep=5pt, minimum width=12mm,
                         minimum height=6mm, align=center, fill=white},
    inc/.style       = {-{Stealth[length=2mm]}}
]

\node[cls,
      label={[font=\large, align=right, label distance=8pt]left:{%
              \textcolor{orange!70!black}{$\Omega(\sqrt{n})$}\\
              \textcolor{black}{\cite{AlonKNS00}}}}
] (CFG) {CFL};

\node[cls, below=0.25in of CFG,
      label={[font=\large, align=left, label distance=2pt]right:{%
              \textcolor{orange!70!black}{$\Omega(\log \log n)$}\\%
       \textcolor{black}{\cite{LachishNS08}}}}
] (detOCL) {detOCL};

\node[cls, right=1in of detOCL, yshift = -0.325in,
      label={[font=\large, align=left, label distance=2pt]right:{\textcolor{orange!70!black}{Adaptive: $\Omega(n^{2/5})$} \\ 
           \textcolor{orange!70!black}
           {Nonadaptive: $\Omega(n^{1/2})$} \\
              \textcolor{black}{(\Cref{thm:HSD-lb})}}}
] (Linear) {Linear};

\node[cls, left=0.325in of detOCL,
      label={[font=\large, align=right, label distance=2pt]left:%
             {\textcolor{blue!70!black}{$O(n^{2/5+\alpha})$} \cite{FischerMS18}\\%
              \textcolor{orange!70!black}{Adaptive: $\Omega(n^{2/5})$} (\Cref{thm:dyck-lb})
              \\%
              \textcolor{orange!70!black}{Nonadaptive: $\Omega(n^{1/2})$} (\Cref{thm:dyck-lb-nonadaptive})}}
] (Dm) {$\{D_m\}_{m \in \mathbb{N}}$};

\node[cls, below=0.25in of Dm,
    label={[font=\large, align=right, label distance=2pt]left:%
           {\textcolor{black}{$\Theta(1/\eps^{2})$}\\ \textcolor{black}{(\Cref{cor:d1-tester}, \Cref{thm:D1-lb})}}}
] (D1) {$\{D_1\}$};

\node[cls, below=0.5in of detOCL,
    label={[font=\large, align=left, label distance=2pt]right:%
           {\textcolor{blue!70!black}{$O(f(\eps))$} \\
           \textcolor{black}{\cite{GoldhirshV13}}}}] (Weak) {weak detOCL};

\node[cls, below=0.25in of D1,
      label={[font=\large, align=right, label distance=2pt]left:%
             {\textcolor{black}{$\Theta(1/\eps^{2})$}\\
             \textcolor{black}{(\Cref{thm:excursion})}}}] (Excursion) {excursions};

\node[cls, below=0.5in of Weak,
      label={[font=\large, align=right, label distance=2pt]left:{%
              \textcolor{blue!70!black}{$O(\log(\eps^{-1})/\eps)$}
              \textcolor{black}{\cite{BathieFM25}} \\
              \textcolor{orange!70!black}{$\Omega(\log(\eps^{-1})/\eps)$}
              \textcolor{black}{\cite{BathieFM25}}}}
] (Regular) {Regular};

\node[cls, right=1in of Regular,
    label={[font=\large, align=left, label distance=2pt]right:%
           {\textcolor{orange!70!black}{Adaptive: $\Omega(n^{2/5})$} \\ 
           \textcolor{orange!70!black}
           {Nonadaptive: $\Omega(n^{1/2})$} \\
           \textcolor{black}{(\Cref{thm:HSD-lb})}}}] (DBWP) {DBWP};

\node[cls, above = 0.575in of Linear,
      label={[font=\large, align=left, label distance=2pt]right:{%
              \textcolor{orange!70!black}{$\Omega(n)$ }\\
              \textcolor{black}{\cite{Ben-SassonHR05} }}}
] (3CNF) {3CNF};

\draw[inc] (Dm)        -- (CFG);
\draw[inc] (detOCL)    -- (CFG);
\draw[inc] (Linear)    to[bend right = 30] (CFG);
\draw[inc] (D1)        -- (Dm);
\draw[inc] (D1)        -- (detOCL);
\draw[inc] (Excursion) -- (detOCL);
\draw[inc] (Weak)      -- (detOCL);
\draw[inc] (DBWP)      -- (Linear);
\draw[inc] (Regular)   -- (Weak);
\draw[inc] (Regular) to[bend right=35] (Linear);

\end{tikzpicture}
    }
    \caption{Known query complexity bounds for language classes, with arrows indicating containment. For each class, big-O and $\Theta$ bounds apply to every language in the class, while lower bounds are witnessed by some language in the class.}
    \label{fig:cfl-hierarchy}
\end{figure}

\end{document}